\newcommand{\name}[1]{\emph{#1}}
\newcommand{\defname}[1]{\emph{#1}}
\newcommand{\f}[1]{\mathsf{#1}}
\newcommand{\true}{\top}
\newcommand{\false}{\bot}
\newcommand{\imp}{\rightarrow}
\newcommand{\equi}{\leftrightarrow}
\newcommand{\eqdef}{\;
\raisebox{-0.1ex}[0mm]{$ \stackrel{\raisebox{-0.2ex}{\tiny
\textnormal{def}}}{=} $}\; }
\newcommand{\eqdefnospace}{%
  \raisebox{-0.1ex}[0mm]{$ \stackrel{\raisebox{-0.2ex}{\tiny\textnormal{def}}}{=}$}}
\newcommand{\entails}{\models}
\newcommand{\la}{\langle}
\newcommand{\ra}{\rangle}
\newcommand{\du}[1]{\overline{#1}}
\newcommand{\duh}[1]{\overline{#1}{}}
\newcommand{\Vampire}{\textit{Vampire}\xspace}
\newcommand{\OTTER}{\textit{OTTER}\xspace}
\newcommand{\EProver}{\textit{E}\xspace}
\newcommand{\ProverN}{\textit{Prover9}\xspace}
\newcommand{\leanCoP}{\textit{leanCoP}\xspace}
\newcommand{\CMProver}{\textit{CMProver}\xspace}
\newcommand{\SETHEO}{\textit{SETHEO}\xspace}
\newcommand{\METEOR}{\textit{METEOR}\xspace}
\newcommand{\HKPYM}{HKPYM\xspace}
\newcommand{\LK}{\textbf{LK}\xspace}
\newcommand{\algoinput}{\smallskip \noindent\textsc{Input: }}
\newcommand{\algooutput}{\smallskip \noindent\textsc{Output: }}
\newcommand{\algomethod}{\smallskip \noindent\textsc{Method: }}
\newcommand{\OPTIONS}{\textsc{Options}}
\newcommand{\strictsubterm}[2]{#1 \lhd #2}
\newcommand{\invsubst}[2]{#1 \la #2^{-1} \ra}
\newcommand{\invsubstpost}[1]{\la #1^{-1} \ra}
\newcommand{\pred}[1]{\m{\mathcal{P}\hspace{-0.11em}red}^\pm(#1)}
\newcommand{\predplain}[1]{\m{\mathcal{P}\hspace{-0.11em}red}(#1)}
\newcommand{\fun}[1]{\m{\mathcal{F}\hspace{-0.18em}un}(#1)}
\newcommand{\lit}[1]{\m{\mathcal{L}\hspace{-0.01em}iterals}(#1)}
\newcommand{\complit}[1]{\overline{\m{\mathcal{L}\hspace{-0.01em}iterals}(#1)}}
\newcommand{\m}[1]{\mathit{#1}}
\newcommand{\dom}[1]{\m{\mathcal{D}\hspace{-0.08em}om}(#1)}
\newcommand{\rng}[1]{\m{\mathcal{R}\hspace{-0.02em}ng}(#1)}
\newcommand{\varfun}{\m{\mathcal{V}\hspace{-0.11em}ar}}
\newcommand{\var}[1]{\varfun(#1)}
\newcommand{\voc}[1]{\m{\mathcal{V}\hspace{-0.11em}oc}^\pm{(#1)}}
\newcommand{\vocplain}[1]{\m{\mathcal{V}\hspace{-0.11em}oc}{(#1)}}
\newcommand{\vocfun}{\m{\mathcal{V}\hspace{-0.11em}oc}^\pm}
\newcommand{\vocplainfun}{\m{\mathcal{V}\hspace{-0.11em}oc}}
\newcommand{\FL}{F}
\newcommand{\GR}{G}
\newcommand{\lnotGR}{\lnot \GR}
\newcommand{\aaa}{\f{F}}
\newcommand{\bbb}{\f{G}}
\newcommand{\nlit}[1]{\f{lit}(#1)}
\newcommand{\nclause}[1]{\f{clause}(#1)}
\newcommand{\nside}[1]{\f{side}(#1)}
\newcommand{\ntgt}[1]{\f{tgt}(#1)}
\newcommand{\nipol}[1]{\f{ipol}(#1)}
\newcommand{\npath}[2]{\f{path}_{#1}(#2)}
\newcommand{\ncopy}[1]{\f{copy}(N)}
\newcommand{\nipolfun}{\f{ipol}}
\newcommand{\npathL}[1]{\npath{\aaa}{#1}}
\newcommand{\npathR}[1]{\npath{\bbb}{#1}}
\newcommand{\tup}[1]{\boldsymbol{#1}}
\newcommand{\tts}{\tup{t}}
\newcommand{\sss}{\tup{s}}
\newcommand{\sffs}{\tup{f}}
\newcommand{\sggs}{\tup{g}}
\newcommand{\tuple}[1]{\boldsymbol{#1}}
\newcommand{\xs}{\tuple{x}}
\newcommand{\ys}{\tuple{y}}
\newcommand{\ts}{\tuple{t}}
\newcommand{\us}{\tuple{u}}
\newcommand{\vs}{\tuple{v}}
\newcommand{\sided}{two-sided\xspace}
\newcommand{\Sided}{Two-Sided\xspace}
\newcommand{\fgs}{\symset{F\!G}}
\newcommand{\ffs}{\symset{F}}
\newcommand{\ggs}{\symset{G}}
\newcommand{\FE}{F_{\textsc{exp}}}
\newcommand{\GE}{G_{\textsc{exp}}}
\newcommand{\HG}{H_{\textsc{grd}}}
\newcommand{\sth}{\eta}       %
\newcommand{\stt}{\sigma}     %
\newcounter{ibcounter}
\newcommand{\ibn}[1]{\prl{#1}}
\newcommand{\ibm}[1]{(\prefNumber{#1}^\prime)}
\newcommand{\tightop}[1]{\hspace{1pt}{#1}\hspace{1pt}}
\newcommand{\sterm}{\text{-term}}
\newcommand{\sterms}{\text{-terms}}
\newcommand{\vbar}{\raisebox{-0.60ex}{\rule{0pt}{2.35ex}}}
\definecolor{tcolbbbbg}{rgb}{0.8,0.8,0.8}
\definecolor{tcolaaabg}{rgb}{1.0,1.0,1.0}
\newcommand{\taaa}[1]{\colorbox{tcolaaabg}{$#1\vbar$}}
\newcommand{\tbbb}[1]{\colorbox{tcolbbbbg}{$#1\vbar$}}
\renewcommand{\taaa}[1]{#1{}^\aaa}
\renewcommand{\tbbb}[1]{#1{}^\bbb}
\newcommand{\nhphantom}[1]{\sbox0{#1}\hspace*{-\the\wd0}}
\newcommand{\nannot}[1]%
           {\hspace{0.2em}{[}#1{]}\nhphantom{\hspace{0.2em}{[}#1{]}}}
\newcommand{\fa}{\f{a}}
\newcommand{\fb}{\f{b}}
\newcommand{\fc}{\f{c}}
\newcommand{\ff}{\f{f}}
\newcommand{\fg}{\f{g}}
\newcommand{\fh}{\f{h}}
\newcommand{\fp}{\f{p}}
\newcommand{\fq}{\f{q}}
\newcommand{\fr}{\f{r}}
\newcommand{\fs}{\f{s}}
\newcommand{\FONLY}{$F$-only\xspace}
\newcommand{\GONLY}{$G$-only\xspace}
\newcommand{\FGSHARED}{$FG$-shared\xspace}
\newcommand{\rname}[1]{\textup{\textsc{#1}}}
\newcommand{\axname}[1]{\textsc{#1}}
\theoremstyle{definition}
\newtheorem{algo}[theorem]{Algorithm}
\newcommand{\FAIL}{\f{FAIL}}
\newcommand{\lfp}[3]{\text{\upshape{\textsc{Lfp}}}_{#1,#2}#3}
\newcommand{\gfp}[3]{\text{\upshape{\textsc{Gfp}}}_{#1,#2}#3}
\newcommand{\LFP}{\textsc{Lfp}}
\newcommand{\GFP}{\textsc{Gfp}}
\newcommand{\crc}{\textsf{C}\xspace}
\newcommand{\csaturated}{\textsf{C}-saturated\xspace}
\newcommand{\cderivation}{\textsf{C}-derivation\xspace}
\newcommand{\cclosed}{\textsf{C}-closed\xspace}
\newcommand{\cresolution}{\textsf{C}-resolution\xspace}
\newcommand{\cfactor}{\textsf{C}-factor\xspace}
\newcommand{\cfactoring}{\textsf{C}-factoring\xspace}
\newcommand{\cresolvent}{\textsf{C}-resolvent\xspace}
\newcommand{\FF}{\aaa}
\newcommand{\GG}{\bbb}
\newcommand{\ripol}{\f{ripol}}
\newcommand{\rsubclause}[2]{\f{subclause}_{#1}(#2)}
\newcommand{\ICALC}{\mathcal{I}}
\newcommand{\RCALC}{\mathcal{R}}
\newcommand{\GRCALC}{\mathcal{R}_{\textsc{grd}}}
\newcommand{\DED}{\mathcal{D_{\textsc{grd}}}}
\newcommand{\DEDR}{\mathcal{D}}
\newcommand{\FGRD}{F_{\textsc{grd}}}
\newcommand{\GGRD}{G_{\textsc{grd}}}
\newcommand{\prov}{\mathcal{A}}
\newcommand{\prova}{\mathcal{A}}
\newcommand{\provb}{\mathcal{B}}
\newcommand{\xside}{\mathcal{A}}
\newcommand{\const}[1]{\m{\mathcal{C}\hspace{-0.11em}onst}(#1)}
\newcommand{\rewrite}{\Rightarrow}
\newcommand{\cutleaf}[1]{#1}
\newcommand{\ME}{M_{\textsc{exp}}}
\newcommand{\NE}{N_{\textsc{exp}}}
\newcommand{\symseq}[1]{\boldsymbol{#1}}
\renewcommand{\us}{\symseq{u}}
\renewcommand{\vs}{\symseq{v}}
\renewcommand{\fgs}{\symseq{f\!g}}
\renewcommand{\ffs}{\symseq{f}}
\renewcommand{\ggs}{\symseq{g}}
\newcolumntype{A}{>{$}l<{$}}
\newcolumntype{L}[1]{>{\raggedright\let\newline\\\arraybackslash\hspace{0pt}}p{#1}}
\newcommand\subsumedBy{\mathrel{\ooalign{$\geq$\cr
      \hidewidth\raise.225ex\hbox{$\cdot\mkern7.0mu$}\cr}}}
\newcommand\subsumes{\mathrel{\ooalign{$\leq$\cr
      \hidewidth\raise.225ex\hbox{$\cdot\mkern2.0mu$}\cr}}}
\newcommand\strictlySubsumes{\mathrel{\ooalign{$<$\cr
      \hidewidth\raise.0ex\hbox{$\cdot\mkern2.0mu$}\cr}}}
\newcommand\strictlySubsumedBy{\mathrel{\ooalign{$>$\cr
       \hidewidth\raise.0ex\hbox{$\cdot\mkern7.0mu$}\cr}}}
\newcommand\variant{\mathrel{\ooalign{$=$\cr
      \hidewidth\raise.7ex\hbox{$\cdot\mkern4.5mu$}\cr}}}
\newcommand{\eref}[1]{\prefGlobalNumber{equiv:#1}}
\title{\vspace{26pt}Interpolation with Automated First-Order Reasoning}
\titlerunning{Interpolation with Automated First-Order Reasoning}
\author{Christoph Wernhard}
       {University of Potsdam, Germany}
       {info@christophwernhard.com}
       {https://orcid.org/0000-0002-0438-8829}
       {}
\begin{document}

\maketitle
\noindent Draft July 6, 2026%

\begin{abstract}
We consider interpolation from the viewpoint of fully automated theorem
proving in first-order logic as a general core technique for mechanized
knowledge processing. For Craig interpolation, our focus is on the two-stage
approach, where first an essentially propositional ground interpolant is
calculated that is then lifted to a quantified first-order formula. We discuss
two possibilities to obtain a ground interpolant from a proof: with clausal
tableaux, and with resolution. Established preprocessing techniques for
first-order proving can also be applied for Craig interpolation if they are
restricted in specific ways. Equality encodings from automated reasoning
justify strengthened variations of Craig interpolation. Contributions to Craig
interpolation that emerged from automated reasoning include variations for
logics used in databases and logic programming. As an approach to uniform
interpolation we introduce second-order quantifier elimination with examples
and describe the basic algorithms DLS and SCAN.
\end{abstract}

\tableofcontents

\section{Introduction}

In this chapter we approach interpolation from the viewpoint of fully
automated theorem proving in first-order logic as a general core technique for
mechanized knowledge processing.
Craig interpolation \cite{craig:linear,craig:uses} fits quite naturally into
this methodology centered around first-order proving:
In first-order logic, if a formula entails another one, then there is a finite
proof of this entailment. A Craig interpolant -- a first-order formula
that is semantically between the entailing formulas and syntactically within
their shared vocabulary -- can be calculated from the proof.

Automated reasoning gives us Craig interpolation in \emph{mechanized} form. A
first-order theorem prover searches for the underlying proof, which is then
converted to an interpolant. This makes Craig interpolation available for
practical applications. Novel variations where interpolants satisfy stronger
syntactic constraints than just shared vocabulary can be explored with
numerous and large problem instances.
For proof search, we can benefit from decades of research in automated
reasoning, manifested in the \name{Handbook of Automated Reasoning}
\cite{handbook:ar:2001} and conference series such as \name{Conference on
  Automated Deduction (CADE)} and \name{International Joint Conference on
  Automated Reasoning (IJCAR)}. We can utilize advanced highly-optimized
systems and the \name{TPTP (Thousands of Problems for Theorem Provers) World}
\cite{tptp}, a research infrastructure that includes a problem library,
specified standard formats, software tools and data from prover evaluations.

Although taming proof search is a core objective of automated reasoning, it is
less relevant for Craig interpolation, where we can start from a given proof,
assuming it had already been found by some powerful system. However, a closer
inspection of the regular winners of the annual \name{CADE ATP System
  Competition (CASC)}, \Vampire \cite{vampire} and \EProver \cite{eprover},
reveals that in powerful configurations they do not output proof objects in some
defined calculus. Applications that require actual proofs, such as hammers
\cite{hammer}, which invoke automated systems on subproblems in an interactive
setting, use a workflow where just lemmas are taken from the powerful systems,
to guide the search with a weaker system that builds proofs. A practical
alternative is provided with \ProverN \cite{prover9}, a fairly strong
first-order prover that can output actual resolution proofs and represents the
state of the art in about 2009.

\Vampire, \EProver and \ProverN operate by maintaining clause sets, which grow
through adding clauses inferred by calculi that can be described as resolution
and equality processing with paramodulation or superposition
\cite{robinson:1965,paramodulation:1969,bachm:ganz:92}.
Also vice-versa, equality inferences by superposition can be taken as basis,
with resolution as a special case. There is a second tradition of fully
automated provers, going back to Prawitz
\cite{prawitz:1960:improved,prawitz:1969:advances}, which operate by
enumerating tableau-like proof structures in combination with unification of
formulas. Model elimination \cite{loveland:1968}, the connection
method \cite{bibel:atp:1987,bibel:otten:2020}, and clausal tableaux
\cite{letz:diss,letz:stenz:handbook,letz:habil} are such methods.
With respect to power for general proof search, this approach currently
resides at the state of the art of around 2000 but for some applications and
investigations it is still well suited, as demonstrated with the \leanCoP
family, e.g.,
\cite{otten:2003:leancopinabstract,kaliszyk:2015:femalecop,satcop:2021,otten:2021:nanocop}
and recent developments \cite{rwzb:lemmas:2023}.
This approach, enumerating proof structures, inherently leads to well-defined
proof objects. Moreover, due to the typical enumeration of structures by
increasing size, proofs tend to be small. \CMProver
\cite{cw:pie:2016,cw:pie:2020}, a system implementing this approach, thus
provides a further practical way to obtain proof objects for Craig
interpolation.

\emph{Uniform} interpolation is in automated reasoning considered since the
1990s as \name{second-order quantifier elimination}. The approach is based on
\emph{equivalence}, computing for a given second-order formula an
\emph{equivalent} first-order formula. It continues a thread that has its
roots -- just like Craig interpolation
\cite{craig:2008:history,craig:2008:road} -- in the \emph{elimination problem}
considered in the \name{algebra of logic}, e.g., by Boole and Schröder. In
early studies of first-order logic, elimination was applied to first-order
formulas with just unary predicates and no function symbols as a decision
procedure by Löwenheim, Skolem and Behmann. Behmann's presentation from 1922
\cite{beh:22} can be viewed as a modern computational method, using
equivalence-preserving formula rewriting until innermost second-order
subformulas have a shape that allows schematic elimination \cite{cw-relmon}.
Ackermann studied the elimination problem on the basis of full first-order
logic in depth and presented in 1935 numerous results
\cite{ackermann:35,ackermann:35:arity}, including: a polarity-related
elimination schema, known today as \name{Ackermann's lemma}, a variation of
resolution, which, for certain formula classes, yields an infinite conjunction
as elimination result, a form of Skolemization for predicates that, in certain
cases, allows reduction to unary predicates, and the negative result that
second-order quantifier elimination on the basis of first-order logic cannot
succeed in general.
Ackermann's results were reinvented and rediscovered in the early 1990s,
leading to two algorithms that expanded into main families of modern
elimination algorithms: First, the SCAN algorithm by Gabbay and Ohlbach
\cite{scan}, which eliminates predicates by a form of resolution. It had been
implemented \cite{scan-system-paper,scan:engel} on the basis of the \OTTER
first-order prover \cite{otter}, the predecessor of \ProverN mentioned above.
The presentation of SCAN \cite{scan} shows applications in different areas of
computational logic and initiated the modern study of second-order quantifier
elimination. Second, the DLS algorithm by Doherty, {\L}ukaszewicz and
Sza{\l}as \cite{dls:early,dls:1997} which rewrites formulas to let Ackermann's
lemma become applicable.
\pagebreak

For the concepts and methods from \emph{automated} proving some distinguishing
aspects can be observed that seem to have their roots in the requirement to
provide a basis for implementation but may also be useful in wider and
abstract contexts. We sketch four of these.

\enlargethispage{6pt}
\vspace{-3pt}
\begin{description}

\item[Certain Forms of Simplicity.] Formulas considered for proof search in
  automated proving are typically in clausal form. Skolemization makes
  explicit quantifier symbols dispensable. Variable instantiation is typically
  driven by most general unifiers (most general substitutions that make two
  terms or atoms identical) of pairs of atoms with the same predicate but in
  literals with complementary polarity.
  Herbrand's theorem is a common tool to justify completeness of calculi and
  to design calculi. In a sense it simplifies the problem of first-order
  proving by reducing unsatisfiability of a set of first-order clauses to
  propositional unsatisfiability of a finite set of clause instances.
  Robinson's resolution \cite{robinson:1965} has just a \emph{a single rule}.
  Gentzen's \LK, aiming to model human reasoning, has 19 rules. As
  traced by Bibel and Otten \cite{bibel:otten:2020}, the connection method
  emerged from \LK in a series of ``compressions'', reducing the number of
  rules, via Schütte's one-sided GS
  \cite{schuette:1956,bibel:atp:1987,troelstra:schwichtenberg:2000} (three
  rules plus cut) until there are no more rules and only a generalization of
  the \name{axiom} property remains that characterizes validity. It
  superimposes a graph structure on the input formula that links
  \name{connections}, certain occurrences of literals with the same predicate
  but complementary polarity. A graph labeling represents the implicit
  involvement of multiple formula copies, which, under the most general
  substitution obtained by unifying atoms in connection instances, form a
  propositionally valid disjunction. Proof search shifts in the connection
  method from exploring possibilities to apply sequent rules to search for a
  certificate of validity based on structures superimposed on the formula.

\item[Some Details Turn into Relevant Spaces of Possibilities.] An
  implementation can reveal details that have effect on specific features of
  an output or on the resources required to obtain it. For Craig
  interpolation, we have, e.g., choices in proof search that may influence the
  size of the found proof and thus the size of the interpolant, and choices
  that have heuristic effect on the duration of the proof search. A subformula
  might have occurrences in each of the two interpolated formulas. For
  interpolant calculation it can be considered as belonging to either one,
  each possibility resulting in a different interpolant.

\item[Approaching Problems and Special Logics with Encodings into Classical
  Logic.] Systems\linebreak of automated reasoning are complex and highly
  optimized, making internals hard to modify. Typically they process formulas
  of some general fundamental class, e.g., classical first-order logic
  (first-order provers) or classical propositional logic (SAT solvers). The
  most immediate approach to solve an application problem or to process some
  special logic is thus to \emph{encode} the problem or the special logic in
  the machine-supported classical logic.

\item[Robustness of Results.] The passage of an abstractly specified method to
  an implementation typically goes along with a strengthening of the
  correctness because overlooked subtle aspects and omissions come to light.
  Claims about the method can be substantiated by experiments that relate it
  to the state of the art. For automated theorem proving this is well
  supported by the \name{TPTP World} with evaluation records from many provers
  and standardized prover interfaces.
\end{description}

\pagebreak
\subparagraph*{Outline of the Chapter.}
After setting up preliminaries (Sect.~\ref{sec-notation}), we address
preprocessing for first-order proving, with specific constraints for Craig
interpolation (Sect.~\ref{sec-preprocessing}). For Craig interpolation our
focus is on the \name{two-stage approach}, where first an essentially
propositional ground interpolant is calculated that is then \name{lifted} to a
quantified first-order formula (Sect.~\ref{sec-lifting}). We discuss two
possibilities to obtain a ground interpolant from a proof, with \name{clausal
  tableaux} (Sect.~\ref{sec-tableaux}), and with \name{resolution}
(Sect.~\ref{sec-resolution}). Equality encodings from automated reasoning
justify\linebreak strengthened variations of Craig interpolation
(Sect.~\ref{sec-equality}). Also further contributions to Craig interpolation
emerged from automated reasoning (Sect.~\ref{sec-contrib}). We introduce
second-order quantifier elimination with examples and describe the basic
algorithms, \name{DLS} and \name{SCAN}~(Sect.~\ref{sec-soqe}).

\section{Notation and Preliminaries}
\label{sec-notation}

Unless specially noted, we consider formulas of first-order logic without
equality (which does not preclude incorporation of equality as an axiomatized
predicate).
The symbol $\entails$ expresses entailment, $\equiv$ equivalence.
A \defname{negation normal form (NNF) formula} is a quantifier-free formula
built up from \defname{literals} (atoms or negated atoms), truth value
constants $\true, \false$, conjunction and disjunction. A \defname{conjunctive
  normal form (CNF) formula} is an NNF formula that is a conjunction of
disjunctions (\defname{clauses}) of literals. A CNF formula is represented by
a \defname{set of clauses}. For clauses and sets of clauses the semantic
notions of entailment and equivalence apply with respect to their universal
closure, i.e., considering their free variables as governed by a universal
quantifier.
The complement of a literal~$L$ is denoted by $\du{L}$. For a set~$F$ of
clauses, the set of all literals in the clauses of $F$ is denoted by
$\lit{F}$, and the set of their complements by $\complit{F}$.

An occurrence of a subformula in a formula has positive (negative)
\defname{polarity}, depending on whether it is in the scope of an even (odd)
number of possibly implicit occurrences of negation.
We call predicate and function symbols briefly \name{predicates} and
\name{functions}. Constants are considered as 0-ary functions. Let $F$ be a
formula. $\var{F}$ is the set of its free individual variables, $\fun{F}$ the
set of functions occurring in it, $\const{F}$ the set of constants among
these, and $\pred{F}$ is the set of all polarity-predicate pairs ${+}p$ and
${-}p$ such that $p$ is the predicate of an atom occurrence in $F$ with the
indicated polarity, positive by $+$ and negative by $-$. For example,
$\pred{(\lnot \fp \lor \fq) \land (\lnot \fq \lor \fr)} = \{{-}\fp, {-}\fq,
{+}\fq, {+}\fr\}$. $\voc{F}$ is $\pred{F} \cup \fun{F} \cup \var{F}$. We also
define $\predplain{F}$ as the set of all predicates that occur in $F$ and
$\vocplain{F}$ as $\predplain{F} \cup \fun{F} \cup \var{F}$.

For second-order formulas~$F$, $\pred{F}$ and $\predplain{F}$ only contain
predicates with \emph{free} occurrences in $F$. We indicate second-order
quantification over predicates with letters $p, q$, e.g, $\exists p\, F$, and
over functions with letters $f, g$. The special case of quantification over a
function that is a constant is just first-order quantification over the
constant considered as individual variable.
A \defname{sentence} is a formula without free variables. A formula is
\defname{ground} if it is quantifier-free and has no free variables.
We can now define the central notion of a \name{Craig-Lyndon} interpolant
as follows.
\begin{definition}[Craig-Lyndon Interpolant]
  Let $F, G$ be formulas such that $F \entails G$. A \defname{Craig-Lyndon
    interpolant} for $F, G$ is a formula $H$ such that $F \entails H$, $H
  \entails G$, and $\voc{H} \subseteq \voc{F} \cap \voc{G}$.
\end{definition}
In the context of Craig-Lyndon interpolation for formulas $F, G$,
we call $\voc{F} \setminus \voc{G}$ the \defname{\FONLY symbols},
$\voc{G} \setminus \voc{F}$ the \defname{\GONLY symbols}, and
$\voc{F} \cap \voc{G}$ the \defname{\FGSHARED symbols}.
The perspective of validating an entailment $F \entails G$ by showing
unsatisfiability of $F \land \lnot G$ is reflected in the notion of
\defname{Craig-Lyndon separator} for $F, \GR$, defined as Craig-Lyndon
interpolant for $F, \lnot \GR$. A \defname{Craig interpolant} is defined
like a Craig-Lyndon interpolant, but using $\vocplainfun$ instead of
$\vocfun$ for the syntactic condition.

We represent a substitution $\sigma$ of variables by terms as a set of
assignments $\{x_1 \mapsto t_1, \ldots, x_n \mapsto t_n\}$. The application of
a substitution $\sigma$ to a formula or term $E$ is written $E\sigma$. We call
$E\sigma$ an \defname{instance} of $E$. If $E$ is an instance of $E'$ and $E'$
is an instance of $E$, we say that both are \defname{variants} of each other.
A substitution is called \defname{ground} if its range is a set of ground
terms.
We use the notation for substitutions also to express the substitution of a
predicate by another one. To express the substitution of a predicate by a
complex formula we generalize it as follows. Let $F$ be a formula, let $p$ be
an $n$-ary predicate, let $\xs = x_1,\ldots,x_n$ be distinct individual
variables, and let $G$ be a formula. Then $F\{p \mapsto \lambda \xs . G\}$
denotes $F$ with all occurrences $p(t_1,\ldots,t_n)$ of~$p$ replaced by the
respective instance $G\{x_1 \mapsto t_1, \ldots, x_n \mapsto t_n\}$ of~$G$.
For example, $(p(\fa) \land p(\fb))\{p \mapsto \lambda x. (\fq(x) \lor
\fr(x))\}$ stands for $(\fq(\fa) \lor \fr(\fa)) \land (\fq(\fb) \lor
\fr(\fb))$.

We use \emph{tuples} of terms, variables, functions, and predicates for
several purposes. For example, to abbreviate a nested quantification $\exists
x_1 \ldots \exists x_n\, F$ with the same quantifier as a single
quantification $\exists \xs\, F$, with $\xs$ representing the tuple $x_1
\ldots x_n$. As another example, we write an atom $p(t_1,\ldots,t_n)$ as
$p(\tts)$ with $\tts$ representing the tuple $t_1 \ldots t_n$ of argument
terms. If the ordering and number of occurrences of members are irrelevant, a
tuple can be identified with the set of its members. In such cases we use a
permissive notation, where tuples can directly appear as arguments of set
operations. For tuples $\xs$ of variables we assume that the members are
pairwise different. For tuples $\ts = t_1\ldots t_n$ and $\sss = s_1\ldots
s_n$ of terms, we write the formula $\ts = \sss$ as shorthand for $t_1 = s_1
\land \ldots \land t_n = s_n$, and $\ts \neq \sss$ for $t_1 \neq s_1 \lor
\ldots \lor t_n \neq s_n$. For formulas $F$, tuples $\xs =
x_1,\ldots,x_n$ of variables and $\ts = t_1,\ldots,t_n$ of terms we write
$F\{\xs \mapsto \ts\}$ as shorthand for $F\{x_1 \mapsto t_1, \ldots, x_n
\mapsto t_n\}$.

Tables~\ref{tab:equiv}--\ref{tab:skolem} show equivalences
\eref{imp}--\eref{aqs} that will be referenced and discussed in various
contexts. The presentation is such that rewriting from the left to the right
side is the more common case. Nevertheless, for some equivalences also
rewriting in the converse direction has applications.

\begin{table}[p]\small\setlength{\arraycolsep}{3pt}
  \prlpReset{equiv}{EQ~}
    \caption{Equivalences that are independent from quantification. They are
      to be considered modulo commutativity of $\land$ and $\lor$.
      \eref{equi:cnf} and~\eref{equi:dnf} provide alternate ways to eliminate
      $\equi$.}
    \label{tab:equiv}
    \vspace{-3pt}
  \noindent
  $\begin{array}[t]{Arcll}
    \multicolumn{5}{l}{\text{\bf Eliminating Implication and Biconditional}}\\[0.45ex]
    \prlp{imp} & F \imp G & \equiv & \lnot F \lor G\\
    \prlp{equi:cnf} & F \equi G & \equiv & (\lnot F \lor G) \land (F \lor \lnot G)\\
    \prlp{equi:dnf} & F \equi G & \equiv & (F \land G) \lor (\lnot F \land \lnot G)\\[1ex]
    \multicolumn{5}{l}{\text{\bf Converting to Negation Normal Form}}\\[0.45ex]
    \prlp{nnf:not} & \lnot \lnot F & \equiv & F\\
    \prlp{nnf:and} & \lnot (F \land G) & \equiv & \lnot F \lor \lnot G\\
    \prlp{nnf:or} & \lnot (F \lor G) & \equiv & \lnot F \land \lnot G\\[1ex]
  \end{array}$
  \hspace{4pt}
  $\begin{array}[t]{Arcll}
    \multicolumn{5}{l}{\text{\bf Distributing}}\\[0.45ex]
    \prlp{dist:cnf} & F \lor (G \land H) & \equiv & (F \lor G) \land (F \lor H)\\
    \prlp{dist:dnf} & F \land (G \lor H) & \equiv & (F \land G) \lor (F \land H)\\[1ex]
    \multicolumn{5}{l}{\text{\bf Truth Value Simplification}}\\[0.45ex]
    \prlp{tv:at} & F \land \true & \equiv & F\\
    \prlp{tv:af} & F \land \false & \equiv & \false\\
    \prlp{tv:ot} & F \lor \true & \equiv & \true\\
    \prlp{tv:of} & F \lor \false & \equiv & F\\
    \prlp{tv:nf} & \lnot \true & \equiv & \false\\
    \prlp{tv:nt} & \lnot \false & \equiv & \true
  \end{array}$
  \end{table}
\begin{table}[p]\small\setlength{\arraycolsep}{3pt}
    \prlpContinue{equiv}{EQ~}
  \caption{Equivalences involving quantifiers. They apply to first- and
    second-order formulas $F,G$, and also to individual and second-order
    quantifiers as well as to combinations of these. The placeholder~$Q$
    indicates that the equivalence holds for both quantifiers $Q \in
    \{\forall, \exists\}$.}
  \vspace{-3pt}
  \label{tab:q}
  \noindent
  $\begin{array}[t]{Arcl@{\hspace{0.5em}}l}
    \multicolumn{5}{l}{\text{\bf Moving Negation over Quantifiers}}\\[0.45ex]
    \prlp{q:na} & \lnot \forall x\, F & \equiv & \exists x\, \lnot F\\
    \prlp{q:ne} & \lnot \exists x\, F & \equiv & \forall x\, \lnot F\\[1ex]
    \multicolumn{5}{l}{\text{\bf Prenexing / Moving Quantifiers Inwards}}\\[0.45ex]
    \prlp{q:aa} & \forall x\, F \land \forall x\, G & \equiv & \forall x\, (F \land G)\\
    \prlp{q:ee} & \exists x\, F \lor \exists x\, G & \equiv & \exists x\, (F \lor G)\\
    \prlp{q:qo} & Q x\, F \lor G & \equiv & Q x\, (F \lor G) &
    \text{ if } x \notin \vocplain{G}\\
    \prlp{q:qa} & Q x\, F \land G & \equiv & Q x\, (F \land G) &
    \text{ if } x \notin \vocplain{G}
  \end{array}$
$\begin{array}[t]{Arcl@{\hspace{0.5em}}l}
    \multicolumn{5}{l}{\text{\bf Reordering Quantifiers}}\\[0.45ex]
    \prlp{q:rr} & Q x Q y\, F & \equiv & Q y Q x\, F\\[1ex]
    \multicolumn{5}{l}{\text{\bf Eliminating Void Quantifiers}}\\[0.45ex]
    \prlp{q:vv} & Q x\, F & \equiv & F
    & \hspace{-2em} \text{ if } x \notin \vocplain{F}\\
  \end{array}$
\end{table}
\begin{table}[p]\small
    \prlpContinue{equiv}{EQ~}
  \caption{Equivalences that justify pushing in and pulling out terms or
    formulas. Formulas $F, G$ are first-order. For \eref{t:dis} we assume that
    $\vocplain{\ts_i} \cap \xs = \emptyset$, $i \in \{1,\ldots,n\}$, for
    \eref{t:pos} and \eref{t:neg} that $\vocplain{\ts} \cap \xs =
    \emptyset$, and for \eref{p:def}--\eref{p:ackneg} that
    $p$ does not occur in $G$ and that variables free in
    $G$ are not quantified in~$F$.}
  \vspace{-3pt}
  \label{tab:pullpush}
  \noindent
  $\begin{array}{Arcl@{\hspace{2em}}l}
    \multicolumn{5}{l}{\text{\bf Pushing In / Pulling Out Terms}}\\[0.45ex]
    \prlp{t:dis} &
    \exists \xs\, [(\bigvee_{i=1}^n \xs = \ts_i) \land F]
    & \equiv & \bigvee_{i=1}^n F\{\xs \mapsto \ts_i\}\\
    \prlp{t:pos} & \forall \xs\, (\xs = \ts \imp F) & \equiv & F\{\xs \mapsto \ts\}\\
    \prlp{t:neg} & \forall \xs\, (F \imp \xs \neq \ts) & \equiv &
    \lnot F\{\xs \mapsto \ts\}\\[1ex]
    \multicolumn{5}{l}{\text{\bf Pushing In / Pulling Out Formulas,
        Ackermann's Lemma}}\\[0.45ex]
    \prlp{p:def} &
    \exists p\, [(\forall \xs\, (p(\xs) \equi G) \land F] & \equiv
    & F\{p \mapsto \lambda \xs.G\}\\
    \prlp{p:ackpos} &
    \exists p\, [(\forall \xs\, (p(\xs) \imp G) \land F] & \equiv &
    F\{p \mapsto \lambda \xs.G\} &
    \text{ if } {-}p \notin \pred{F}\\
    \prlp{p:ackneg} &
    \exists p\, [(\forall \xs\, (G \imp p(\xs)) \land F] & \equiv &
    F\{p \mapsto \lambda \xs.G\} &
    \text{ if } {+}p \notin \pred{F}
  \end{array}$
  \end{table}
\begin{table}[p]\small\setlength{\arraycolsep}{3pt}
    \prlpContinue{equiv}{EQ~}
  \caption{Equivalences involving quantifier switching. Here $f$ is a function
    such that $f \notin \fun{F}$, $p, q$ are predicates such that $q \notin
    \predplain{F}$, $\xs, \ys$ are tuples of variables, and $y$ is a
    variable.}
  \vspace{-3pt}
  \label{tab:skolem}
  $\begin{array}[t]{Arcl}
    \multicolumn{4}{l}{\text{\bf Second-Order Skolemization}}\\[0.45ex]
    \prlp{sk} & \forall \xs \exists y\, F & \equiv & \exists f \forall \xs F\{y \mapsto f(\xs)\}
  \end{array}$
  \hspace{12pt}
    $\begin{array}[t]{Arcl}
    \multicolumn{4}{l}{\text{\bf Ackermann's Quantifier Switching}}\\[0.45ex]
    \prlp{aqs} & \forall \xs \exists p\, F & \equiv & \exists q \forall \xs
    F\{p \mapsto \lambda \ys. q(\ys\xs)\}
  \end{array}$
\end{table}

\section{Preprocessing: Skolemization, Clausification, Simplification}
\label{sec-preprocessing}

Most core techniques of automated first-order provers operate on a set of
clauses. The given first-order formula whose unsatisfiability is to be proven
is thus \emph{preprocessed} to a set of clauses which is
\name{equi-satisfiable}, i.e., is unsatisfiable iff the original formula is
so. Conceptually, this proceeds in the following steps: (1)~\name{Prenexing},
bringing the formula into a form where a quantifier prefix is applied to a
quantifier-free formula by rewriting with \eref{q:na}--\eref{q:qa}.
(2)~\name{Skolemization}, eliminating existential variables with \eref{sk}.
This yields a formula whose quantifier prefix starts with existential
quantifiers over fresh functions, the \name{Skolem functions}. Unless the
Skolem function is a constant, the quantifier is a second-order quantifier.
(For determining unsatisfiability, the existential quantifier prefix over the
Skolem functions can be dropped.) (3)~\name{CNF transformation}, converting
the formula to a set of clauses, by \eref{imp}--\eref{dist:cnf}.
(4)~\name{Simplification}, simplifying the set of clauses to an
equi-satisfiable set.

Typically, these phases are performed interleaved. Skolemization, e.g., might
be applied separately to subformulas obtained by shifting quantifiers with
\eref{q:aa}--\eref{q:qa} outwards as well as inwards
\cite{soqe:book:2008,nonnengart:weidenbach:small:nf:2001}.
Distributing with \eref{dist:cnf} at CNF transformation calls for simplifying
formulas before they get duplicated. Blow-up through distributing can be
completely avoided with \name{structure-preserving} (or \name{definitional})
\name{normal forms}
\cite{handbook:baaz:egly:leitsch:nf}
that are based on rewriting with \eref{p:def}--\eref{p:ackneg} from right to
left before CNF transformation.

Simplification is important in automated reasoning and has many facets. The
general idea is that the expensive core proving is accompanied by cheaper
operations that do whatever can be done with few resources to make the problem
easier. Simplifications are not just applied at preprocessing but also
incorporated into the core proving. For example, the core operation of CDCL
SAT solvers involves unit propagation, a propositional simplification. As
another example, CDCL SAT solvers interrupt the core reasoning to simplify the
computed lemmas (learned clauses) in \name{inprocessing} phases. Resolution
provers can delete a clause if it is newly inferred and subsumed by a
previously inferred clause (\name{forward subsumption}) and if it was
previously inferred and is subsumed by a newly inferred clause (\name{backward
  subsumption}).

Simplifications for theorem proving fall into two broad categories, those that
preserve equivalence and those that do not preserve equivalence but
equi-satisfiability.
Well-known simplifications of the first class are
deletion of tautological clauses, deletion of subsumed clauses, replacing
clauses by condensation, and replacing clauses by subsumption resolution
\cite{chang:lee,resolution:handbook:2001,nonnengart:weidenbach:small:nf:2001}.
In view of interpolation, we note that none of these introduces additional
functions or predicates, which makes them straightforwardly applicable to each
of the two interpolated formulas.

A well-known simplification of the second class is the deletion of a clause
that contains a \name{pure} literal, a literal whose predicate does not occur
with opposite polarity in the set of clauses. A closer looks shows that this
does not just preserve equi-satisfiability, but, moreover, if $F'$ is obtained
from~$F$ by deleting a clause with a pure literal with predicate $p$, then
$\exists p\, F \equiv \exists p\, F'$. Evidently, no additional functions or
predicates are introduced. We can utilize such properties of simplifications
that do not preserve equivalence but actually preserve more than just
equi-satisfiability to justify their use in the preprocessing of interpolated
formulas. The following proposition shows corresponding constraints that are
suitable for Craig interpolation.

\pagebreak
\begin{proposition}[Simplifying Interpolated Formulas]
  \label{prop-interpol-constraints}
  Let $F, G, F', G'$ be first-order formulas such that
  \[\exists \sffs'\, F' \equiv \exists \sffs F,\;
  \forall \sggs'\, G' \equiv \forall \sggs\, G,\; \vocplain{\exists \sffs'\,
    F'} = \vocplain{\exists \sffs\, F},\text{ and }\; \vocplain{\forall
    \sffs'\, G'} = \vocplain{\forall \sggs\, G},\] where $\sffs' =
  \vocplain{F'} \setminus \vocplain{G'}$, $\sggs' = \vocplain{G'} \setminus
  \vocplain{F'}$, $\sffs = \vocplain{F} \setminus \vocplain{G}$, and $\sggs =
  \vocplain{G} \setminus \vocplain{F}$. Then, a first-order formulas $H$ is a
  Craig interpolant for $F, G$ iff it is a Craig interpolant for $F', G'$.
\end{proposition}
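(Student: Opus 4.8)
The proof proceeds by unfolding the definition of a Craig interpolant and tracking how the three defining conditions --- $F \entails H$, $H \entails G$, and the vocabulary inclusion $\vocplain{H} \subseteq \vocplain{F} \cap \vocplain{G}$ --- transfer between the primed and unprimed pairs. The key observation is that although the hypotheses are stated with second-order quantifiers $\exists \sffs$, $\forall \sggs$ over the \FONLY and \GONLY symbols, the candidate interpolant $H$ by definition contains \emph{none} of these symbols: $\vocplain{H} \subseteq \vocplain{F} \cap \vocplain{G}$ means $H$ shares no symbol with $\sffs$ or $\sggs$. So quantifying them away does not affect any entailment in which $H$ participates on the opposite side.

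First I would fix a Craig interpolant $H$ for $F, G$ and show it is one for $F', G'$. Entailment $F' \entails H$: since $\vocplain{H} \cap \sffs' = \emptyset$ (because $\vocplain{H}\subseteq\vocplain{F}\cap\vocplain{G}=\vocplain{F'}\cap\vocplain{G'}$ by the vocabulary hypotheses, which is disjoint from $\sffs'$), the entailment $F' \entails H$ is equivalent to $\exists \sffs'\, F' \entails H$; by hypothesis this is $\exists \sffs\, F \entails H$, which --- again because $H$ omits $\sffs$ --- is equivalent to $F \entails H$, which holds. Symmetrically, $H \entails G'$ reduces via $\vocplain{H}\cap\sggs'=\emptyset$ to $H \entails \forall \sggs'\, G' \equiv \forall \sggs\, G$, hence to $H \entails G$, which holds. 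The vocabulary condition $\vocplain{H}\subseteq\vocplain{F'}\cap\vocplain{G'}$ is immediate from $\vocplain{F'}\cap\vocplain{G'}=\vocplain{F}\cap\vocplain{G}$, which follows from the two $\vocplainfun$-equalities in the hypothesis (intersecting $\vocplain{\exists\sffs' F'}\cup\sffs' = \vocplain{\exists\sffs F}\cup\sffs$ appropriately, or more directly noting $\vocplain{F'}=\vocplain{\exists\sffs' F'}\cup\sffs'$ and likewise for the others). The converse direction --- a Craig interpolant for $F',G'$ is one for $F,G$ --- is entirely symmetric, swapping the roles of primed and unprimed, and uses that $\exists\sffs F \entails H$ iff $F\entails H$ etc. in the same way.

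The one technical point that needs care, and is the main obstacle, is the precise bookkeeping of the vocabulary equalities: one must verify that the stated hypotheses $\vocplain{\exists\sffs' F'}=\vocplain{\exists\sffs F}$ and $\vocplain{\forall\sggs' G'}=\vocplain{\forall\sggs G}$ really do force $\vocplain{F}\cap\vocplain{G}=\vocplain{F'}\cap\vocplain{G'}$ \emph{and} the disjointness facts $\vocplain{H}\cap\sffs'=\vocplain{H}\cap\sggs'=\emptyset$ that license passing the second-order quantifiers across the $\entails$ sign. This is a short computation: $\vocplain{F'}$ is the disjoint union of $\vocplain{F'}\setminus\vocplain{G'}=\sffs'$ and $\vocplain{F'}\cap\vocplain{G'}$, and $\vocplain{\exists\sffs' F'}=\vocplain{F'}\setminus\sffs'$ since $\sffs'$ consists exactly of the symbols of $F'$ not in $G'$ (none of which, being \FONLY, can reappear after existential quantification); hence $\vocplain{\exists\sffs' F'}=\vocplain{F'}\cap\vocplain{G'}$, and similarly $\vocplain{\exists\sffs F}=\vocplain{F}\cap\vocplain{G}$, so the first hypothesis gives exactly $\vocplain{F'}\cap\vocplain{G'}=\vocplain{F}\cap\vocplain{G}$. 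Once this identity of shared vocabularies is in hand, every other step is the routine observation that a second-order $\exists$ (resp.\ $\forall$) may be introduced or dropped on one side of an entailment when the other side's vocabulary is disjoint from the quantified symbols --- which is exactly the situation here because any interpolant lives in the shared vocabulary.
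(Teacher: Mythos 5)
Your proposal is correct. The paper states Proposition~\ref{prop-interpol-constraints} without giving an explicit proof, and your argument is exactly the intended justification: the hypotheses force $\vocplain{F}\cap\vocplain{G}=\vocplain{F'}\cap\vocplain{G'}$ (since $\vocplain{\exists\sffs\,F}=\vocplain{F}\cap\vocplain{G}$ and likewise for the primed pair), any candidate interpolant $H$ therefore shares no symbols with $\sffs,\sggs,\sffs',\sggs'$, and so the quantifiers can be introduced or dropped across $\entails$ on the side opposite to $H$, transferring $F\entails H$ and $H\entails G$ to $F'\entails H$ and $H\entails G'$ and back. The only implicit point worth making explicit is that $\vocplainfun$ counts only \emph{free} symbols of the mixed first-/second-order quantified formulas, which is the paper's stated convention and is what makes $\vocplain{\exists\sffs'\,F'}=\vocplain{F'}\setminus\sffs'$ legitimate; your brief aside about unioning $\sffs'$ back on is dispensable, since your direct computation already yields the needed equality of shared vocabularies.
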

The quantifications in this proposition are mixed, first- and second-order,
over predicates, functions, and constants or variables. The proposition shows
constraints for preprocessing given formulas $F, G$ to simpler formulas $F',
G'$ that have the same Craig interpolants.
Simplifications that match these constraints are, e.g., the mentioned deletion
of a clause with a pure literal, unless the literal's predicate is
\FGSHARED. At least in the propositional case also \name{blocked clause
  elimination} \cite{kullmann:1999,jarv:blocked:2010,blocked:fol:2017}, which
generalizes the purity deletion, is justified by
Prop.~\ref{prop-interpol-constraints}, if the predicates of the blocking
literals are not \FGSHARED. Moreover, since
Prop.~\ref{prop-interpol-constraints} stays unchanged if $F,G$ and $F',G'$ are
switched, also cases of \emph{addition} of blocked clauses
\cite{kullmann:1999} are justified.
Justified transformations that enrich the set of predicates include conversion
to structure-preserving normal forms, if the fresh predicates
introduced in the conversion of~$F$ and of $G$ are distinct.
Also second-order quantifier elimination, if it does not eliminate \FGSHARED
predicates, is justified. Modern SAT solvers \cite{een:biere:elim} as well as
some first-order provers, such as \ProverN, use elimination in specific cases
as simplifications. However, typically these systems provide no options for
specifying predicates as protected from being eliminated.

For \emph{Craig-Lyndon} interpolation, Prop.~\ref{prop-interpol-constraints}
can be relaxed by ``quantification over predicates only in a specific
polarity'', which can be defined as follows.
\begin{definition}[Polarity-Sensitive Predicate Quantification]
  \label{def-polarity-quantification}
  For second-order formulas $F$ and predicates $p$ define $\exists {+}p\, F$
  and $\exists {-}p\, F$ as follows, where $p'$ is a fresh predicate.
  \[\begin{array}{lcl@{\hspace{2em}}lcl}
  \exists {+}p\, F & \eqdefnospace &
  \exists p'\, (F\{p \mapsto p'\} \land \forall \xs\, (p(\xs) \imp p'(\xs)). &
  \forall {-}p\, F & \eqdefnospace & \lnot \exists {+}p\, \lnot F.\\
  \exists {-}p\, F & \eqdefnospace &
  \exists p'\, (F\{p \mapsto p'\} \land \forall \xs\, (p'(\xs) \imp p(\xs)). &
  \forall {+}p\, F & \eqdefnospace & \lnot \exists {-}p\, \lnot F.\\
  \end{array}
  \]
\end{definition}
The following examples illustrate polarity-sensitive predicate quantification
by showing the expansion into conventional predicate quantification and an
equivalent first-order formula, obtained by second-order quantifier
elimination, which is discussed in Sect.~\ref{sec-soqe}.
\begin{example}
  \ \subexample{ex-polquant-1} $\exists {-}\fp\, (\fp(\fa) \land \lnot
  \fp(\fb))\; \equiv\; \exists p'\, (p'(\fa) \land \lnot p'(\fb) \land \forall
  x (p'(x) \imp \fp(x)))\; \equiv\; \fa \neq \fb \land \fp(\fa)$.

  \subexample{ex-polquant-2}
  $\begin{array}[t]{ll}
  & \exists {-}\fq\, (\forall x (\fp(x) \imp \fq(x)) \land \forall x\, (\fq(x) \imp \fr(x)))\\
  \equiv & \exists q'\, (\forall x\, (\fp(x) \imp q'(x)) \land
  \forall x\, (q'(x) \imp \fr(x)) \land  \forall x\, (q'(x) \imp \fq(x)))\\
  \equiv & \forall x (\fp(x) \imp (\fq(x) \land \fr(x))).
  \end{array}$
  \par\vspace{-12pt}\lipicsEnd
\end{example}
Vice versa, conventional predicate quantification can be expressed by
polarity-sensitive quantification: $\exists p\, F \equiv \exists {+}p \exists
{-}p\, F$. Proposition~\ref{prop-interpol-constraints} can be adapted to
Craig-Lyndon interpolation by considering predicate quantification as
polarity-sensitive and using $\vocfun$ in place of $\vocplainfun$.

\section{The Two-Stage Approach to Craig Interpolation in First-Order Logic}
\label{sec-lifting}

Herbrand's theorem, due to Herbrand in 1930 \cite{herbrand}, is a central tool
in automated first-order theorem proving, where it is commonly stated in the
following form.
\begin{theorem}[Herbrand's Theorem]
  \label{thm-h}
A (possibly infinite) set $F$ of clauses is unsatisfiable iff there is a
finite set of ground instances of clauses of $F$.
\end{theorem}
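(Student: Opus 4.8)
The easy direction is immediate: if some finite set $S$ of ground instances of clauses of $F$ is unsatisfiable, then $F$ is unsatisfiable, because any model of $F$ satisfies every instance of every clause of $F$ (clauses being read under their universal closure), hence satisfies $S$. For the converse — $F$ unsatisfiable implies that some finite unsatisfiable set of ground instances exists — the plan is to reduce first-order satisfiability of $F$ to propositional satisfiability of the set of all its ground instances over the Herbrand universe, and then invoke compactness of propositional logic.

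Concretely, I would first fix the \emph{Herbrand universe} $U$ of $F$, the set of all ground terms over $\fun{F}$, adjoining a fresh constant if $\fun{F}$ contains none so that $U \neq \emptyset$. Let $F^{*}$ be the (possibly infinite) set of all ground instances $C\sigma$, for $C \in F$ and ground substitutions $\sigma$ with range included in $U$; since each $C\sigma$ is variable-free, $F^{*}$ may be viewed as a set of propositional clauses over the ground atoms. The key lemma is: \emph{$F$ is satisfiable iff $F^{*}$ is propositionally satisfiable.} For ``$\Rightarrow$'', from an arbitrary model $M \models F$ one builds the Herbrand interpretation $H$ with domain $U$ that interprets every function symbol syntactically and puts a ground atom $p(\ts)$ into $H$ exactly when $M \models p(\ts)$; then $H$ and $M$ agree on all ground clauses, and since $M \models \forall\xs\, C$ for each $C \in F$ one gets $H \models C\sigma$ for every ground $\sigma$, i.e.\ $H \models F^{*}$. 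For ``$\Leftarrow$'', a propositional model of $F^{*}$ is read back as a Herbrand interpretation $H$ (with $p^{H}$ the set of tuples whose atom is assigned true), and one checks $H \models \forall\xs\, C$ for each $C \in F$ because all of $C$'s ground instances lie in $F^{*}$; this uses the substitution lemma identifying an assignment of elements of $U$ to the variables of $C$ with a ground substitution.

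Given the lemma, contraposition yields: $F$ is unsatisfiable iff $F^{*}$ is propositionally unsatisfiable. By compactness of propositional logic, $F^{*}$ is propositionally unsatisfiable iff some finite subset of $F^{*}$ is unsatisfiable, and such a finite subset is exactly a finite unsatisfiable set of ground instances of clauses of $F$ — which is the claim. An equivalent, more self-contained route avoiding an explicit appeal to compactness is the \emph{semantic tree} argument: enumerate the ground atoms, form the binary tree of partial truth assignments, and call a node a failure node once the assignment along it already falsifies some member of $F^{*}$; unsatisfiability of $F$ means every branch contains a failure node, so by K\"onig's lemma there are only finitely many minimal ones, and the ground instances witnessing these finitely many failures form the desired set.

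I expect the main obstacle to be the key lemma, and within it the implication ``satisfiable $\Rightarrow$ satisfiable in a Herbrand interpretation'': the delicate point is the bookkeeping showing that truth of $\forall\xs\, C$ in the syntactically-defined interpretation $H$ reduces to truth of each ground instance $C\sigma \in F^{*}$, via the correspondence between variable assignments over $U$ and ground substitutions. The remaining ingredients — the empty-universe corner case (handled by adjoining a constant) and propositional compactness (provable independently by K\"onig's lemma or Tychonoff) — are routine or may be taken as known.
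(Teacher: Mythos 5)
The paper does not prove this theorem at all: it is stated as classical background, cited to Herbrand's 1930 thesis, and used later only through its consequences (e.g.\ the remark following it about instantiating over the functions of $F$ plus an extra constant $c_0$, and Proposition~\ref{prop-unsat-ct}). So there is no paper proof to compare against; judged on its own, your argument is the standard and correct one. You rightly read the statement with the implicit ``unsatisfiable'' (``a finite \emph{unsatisfiable} set of ground instances''), handle the easy direction via universal closure, and for the hard direction reduce to propositional logic over the Herbrand universe: the key lemma that $F$ is satisfiable iff the set $F^{*}$ of all its ground instances is propositionally satisfiable is exactly the Herbrand-interpretation argument, and it goes through here without complications because the paper works in first-order logic without equality (no quotient construction is needed). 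The final step by propositional compactness, or equivalently your semantic-tree variant with K\"onig's lemma, is the classical finishing move; the latter is in fact the version closest in spirit to the automated-reasoning literature the chapter draws on (semantic trees reappear in Sect.~\ref{sec-atomic-cut}). The one point to keep explicit when writing this up fully is the bookkeeping you already flag: the correspondence between variable assignments into the Herbrand universe and ground substitutions, which is what lets truth of the universal closure in the syntactic interpretation be checked instance by instance.
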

\begin{example}
  Let $F = \{\fp(\fa),\; \lnot \fp(x) \lor \fp \fp(\ff(x)),\; \lnot
  \fp(\ff(\ff(\fa)))\}$. Then $\{\fp(\fa),\; (\lnot \fp(x) \lor \fp
  \fp(\ff(x)))\{x \mapsto \fa\},\; (\lnot \fp(x) \lor \fp \fp(\ff(x)))\{x
  \mapsto \ff(\fa)\},\; \lnot \fp(\ff(\ff(\fa)))\}$
  is an unsatisfiable set of
  ground instances of $F$.
  \lipicsEnd
\end{example}
To obtain an unsatisfiable set of ground clauses according to Herbrand's
theorem it is sufficient to instantiate with terms built from functions in the
given set of clauses and, if there is no constant among these, an additional
constant~$c_0$.

Given the conversion of a first-order formula to a set of clauses as discussed
in Sect.~\ref{sec-preprocessing}, Herbrand's theorem tells us that behind a
first-order proof there is a propositional proof, with ground atoms in the
role of propositional variables. The proof object obtained from a theorem
prover can represent such a ground proof. This suggests to calculate a
Craig-Lyndon interpolant for first-order formulas in \emph{two stages}: (1)
Calculating a Craig-Lyndon interpolant from a ground proof with propositional
techniques; (2) Interpolant lifting, that is replacing ground terms with
quantified variables.

A corresponding two-stage proof of the Craig interpolation property of
first-order logic was given in the 1960s by Kreisel and Krivine
\cite{kreisel:krivine:modeltheory:1971}. Harrison \cite{harrison:2009}
presents a version of their proof that is adapted to automated reasoning,
explicitly referring to Skolemization and Herbrand's theorem. In automated
reasoning, the two-stage approach was introduced in 1995 with Huang's paper
\cite{huang:95} on interpolation with resolution, which was rediscovered in
the 2010s by Bonacina and Johansson \cite{bonacina:2015:on}, who coined the
name \name{two-stage approach}. They prove a limited form of interpolant
lifting that applies just to constants in contrast to compound terms. Baaz and
Leitsch \cite{baaz:leitsch:2011} prove a general form, which they call
\name{abstraction}, on the basis of a natural deduction calculus.

We present here a general form of interpolant lifting from \cite{cw:ipol},
where it is proven on the basis of Skolemization and Herbrand's theorem. The
relationships that allow a ground formula $\HG$ to be lifted to a Craig
interpolant for first-order formulas $F,G$ are captured by the notion of
\name{interpolant lifting base}. Given a lifting base, i.e., if the specified
relationships hold, a first-order Craig interpolant for $F,G$ can be
constructed from $\HG$ by replacing certain occurrences of terms with
variables and prepending a certain first-order quantifier prefix upon these
variables. The Craig interpolation property for first-order logic follows
since if $F \entails G$, then an abstract construction ensures existence of a
lifting base.
Clausal tableaux and resolution deductions provide proofs that can
straightforwardly be viewed \emph{as if} they were constructed by the abstract
method. This makes the two-stage approach applicable with different underlying
first-order calculi for ground interpolation, where we will discuss clausal
tableaux (Sect.~\ref{sec-tableaux}) and resolution
(Sect.~\ref{sec-resolution}).

We now specify \name{interpolant lifting base} formally. W.l.o.g. we assume
that the interpolated first-order formulas $F,G$ are \emph{sentences}: To
interpolate $F,G$ with free variables, we first replace these by dedicated
constants and finally replace the constants in the obtained interpolant by the
corresponding variables.
\begin{definition}[Interpolant Lifting Base]
  \label{def-ib}
  \prlReset{def-ib} An \defname{interpolant lifting base} is a tuple \[\la F,
  G, \ffs, \ggs, \HG\ra,\] where $F, G$ are first-order sentences, $\ffs,
  \ggs$ are disjoint tuples of distinct function symbols, $\HG$ (the subscript
  \textsc{grd} suggesting \name{ground}) is a ground formula such that there
  exist quantifier-free formulas $\FE(\us), \GE(\vs)$ (the subscript
  \textsc{exp} suggesting \name{expansion}) with disjoint tuples of free
  variables $\us, \vs$, respectively, and a ground substitution $\sth$ with
  the following properties.

  \smallskip
  \noindent
  $\begin{array}{r@{\hspace{0.4em}}l@{\hspace{0.5em}}r@{\hspace{0.4em}}l}
    \ibn{ib:sem} &
    F \entails \exists \ffs \forall \us\, \FE(\us). &
    \ibm{ib:sem} &
    \forall \ggs \exists \vs\, \GE(\vs) \entails G.\\
    \ibn{ib:pred} &
    \pred{\FE(\us)} \subseteq \pred{F}.
    & \ibm{ib:pred} &
    \pred{\GE(\vs)} \subseteq \pred{G}.\\
    \ibn{ib:fe} &
    \fun{\FE(\us)} \subseteq (\fun{F} \tightop{\cap} \fun{G}) \tightop{\cup} \ffs.
    & \ibm{ib:fe} &
    \fun{\GE(\vs)} \subseteq (\fun{F} \tightop{\cap} \fun{G}) \tightop{\cup} \ggs.\\
    \ibn{ib:fcap} &
    \fun{F} \cap \ggs = \emptyset.
    & \ibm{ib:fcap} &
    \fun{G} \cap \ffs = \emptyset.\\
  \end{array}\\[1ex]
  \begin{array}{r@{\hspace{0.4em}}l}
    \ibn{ib:domh} & \dom{\sth} = \us \cup \vs.\\
    \ibn{ib:rngh} &
    \fun{\rng{\sth}} \subseteq \fun{\FE(\us)} \cup \fun{\GE(\vs)} \cup \{c_0\},
    \text{where } c_0 \text{ is a constant in } \ffs\ggs.\\
    \ibn{ib:ipol} & \HG \text{ is a Craig-Lyndon interpolant for } \FE(\us)\sth
    \text{ and } \GE(\vs)\sth.\\
  \end{array}$

  \smallskip
\end{definition}

For first-order sentences $F,G$ such that $F \entails G$ we can construct an
interpolant lifting base as follows. Apply prenexing, Skolemization and CNF
transformation independently to each of $F, \lnot G$ to obtain formulas
$\exists \ffs' \forall \us'\, M'(\us')$, $\exists \ggs' \forall \vs'\,
N'(\vs')$ that are equivalent to $F, \lnot G$, respectively, where
$\ffs',\ggs'$ are the introduced Skolem functions and $M'(\us'), N'(\vs')$ are
sets of clauses, with free variables $\us', \vs'$.
Since $M'(\us') \cup N'(\vs')$ is unsatisfiable, by Herbrand's theorem there
is an unsatisfiable finite set of ground instances of clauses from $M'(\us')
\cup N'(\vs')$.
This set of ground instances can contain different instances of the same
clause in $M'(\us') \cup N'(\vs')$. Thus, this set can be considered as
obtained in two steps, by first creating \emph{copies} of clauses in $M'(\us')
\cup N'(\vs')$, one copy for each ground instance, where a \name{copy} of a
clause is a variant with fresh variables, and, second, applying a single
ground substitution to the set of copies. Let $\ME(\us)$ ($\NE(\vs)$) with
free variables $\us$ ($\vs$) be the set of these copies and let $\sth$ be the
ground substitution. The range of $\sth$ is a set of terms built from
functions in $M'(\us') \cup N'(\vs')$ and, if there is no constant among
these, a fresh constant $c_0$. Let $\ffs$ ($\ggs$) be the union of the Skolem
functions $\ffs'$ ($\ggs'$) introduced for $F$ ($G$) and the \FONLY (\GONLY)
functions. In case a fresh $c_0$ was introduced, add it either to $\ffs$ or to
$\ggs$. Let $\FE(\us) = \ME(\us)$, let $\GE(\vs) = \lnot \NE(\vs)$, and let
$\HG$ be a ground Craig-Lyndon interpolant for $\FE(\us)\sth, \GE(\vs)\sth$.
Then $\la F, G, \ffs, \ggs, \HG\ra$ is an interpolant lifting base.

Neither the ``copy expansions'' $\FE(\us), \GE(\vs)$, nor the ground
substitution~$\sth$ have to be actually constructed for obtaining an
interpolant lifting base. Just their \emph{existence} is required.
\begin{example}
  \label{examp-ibase}
  Each of the following examples shows a lifting base $\la F, G, \ffs, \ggs,
  \HG\ra$ together with suitable $\FE(u), \GE(v), \sth$. Introductory comments
  show specific properties.

  \subexample{ex-ib-nonlocal} $\ffs$ contains a non-constant; members of
  $\ffs$ and of $\ggs$ occur in $\HG$: $ \la F, G, \ffs, \ggs,
  \HG\ra\; =\; \la \forall x\, \fp(x, \ff(x)),\; \exists x\, \fp(\fg,
  x),\ \ff,\; \fg,\; \fp(\fg, \ff(\fg))\ra$, with $\FE(u) = \fp(u,
  \ff(u))$, $\GE(v) = \fp(\fg, v)$, $\sth = \{u \mapsto \fg, v \mapsto
  \ff(\fg)\}.$ It holds that $\FE(u)\sth = \GE(v)\sth = \HG = \fp(\fg,
  \ff(\fg))$.

  \subexample{ex-ib-skolem} A member of $\ffs$ (i.e., $\ff_2$) is a Skolem
  function: $\la F, G, \ffs, \ggs, \HG\ra\; =\; \la \forall x \exists y\,
  \fp(x, y, \ff_1),\linebreak\exists x \exists y \, \fp(\fg, x, y),\;
  \ff_1\ff_2,\; \fg,\; \fp(\fg,\ff_2(\fg),\ff_1)\ra$, with $\FE(u) =
  \fp(u,\ff_2(u),\ff_1),\; \GE(v_1,v_2) =\linebreak \fp(\fg, v_1, v_2),\; \sth
  = \{u \mapsto \fg, v_1 \mapsto \ff_2(\fg), v_2 \mapsto \ff_1\}.$ It holds
  that $\FE(u)\sth = \GE(v_1,v_2)\sth = \HG = \fp(\fg,\ff_2(\fg),\ff_1)$.

  \subexample{ex-ib-expansion} $\FE(u_1,u_2)$ is a conjunction of different
  variants of the quantifier-free inner formula $\fp(x,\ff)$ of $F$:
  $\la F, G, \ffs, \ggs, \HG\ra = \la \forall x\, \fp(x, \ff),\; \exists x\,
  \fp(\fg_1, x) \land \exists x\, \fp(\fg_2, x),\; \ff,\; \fg_1\fg_2,\;
  \fp(\fg_1, \ff) \land \fp(\fg_2, \ff)\ra$, with $\FE(u_1,u_2) = \fp(u_1,
  \ff) \land \fp(u_2, \ff),\; \GE(v) = \fp(\fg_1, v) \land \fp(\fg_2, v),\;
  \sth = \{u_1 \mapsto \fg_1, u_2 \mapsto \fg_2, v \mapsto \ff\}$. It holds
  that $\FE(u_1,u_2)\sth = \GE(v)\sth = \HG = \fp(\fg_1, \ff) \land \fp(\fg_2,
  \ff)$.

  \subexample{ex-ib-qr} Formulas~$F,G$ extend those of
  Example~(\ref{ex-ib-nonlocal}) by literals with predicates~$\fq, \fr$ that
  occur in only one of $F,G$, and a second function symbol in $G$. Differently
  from the previous three cases, the ground interpolant is not an instance of
  $\FE$ and $\GE$. $\la F = \forall x\, \fp(x, \ff(x)) \land \forall x \forall
  y\, \fq(\ff(x),y),\; G = \exists x\, (\fp(\fg_1, x) \lor \fr(\fg_2(x))),\;
  \ffs = \ff,\; \ggs = \fg_1, \fg_2,\; \HG = \fp(\fg_1, \ff(\fg_1))\ra$, with
  $\FE(u_1, u_2, u_3) = \fp(u_1, \ff(u_1)) \land \fq(\ff(u_2), u_3),\; \GE(v)
  = \fp(\fg_1, v) \lor \fr(\fg_2(v)),\; \sth = \{u_1 \mapsto \fg_1, v \mapsto
  \ff(\fg_1), u_2 \mapsto \fg_2(\ff(\fg_1)), u_3 \mapsto \fg_1\}$. It holds
  that $\FE(u_1, u_2, u_3)\sth = \fp(\fg_1, \ff(\fg_1)) \land
  \fq(\ff(\fg_2(\ff(\fg_1))), \fg_1)$ and $\GE(v)\sth = \fp(\fg_1, \ff(\fg_1))
  \lor \fr(\fg_2(\ff(\fg_1)))$. Other values of~$\FE$, $\GE$ and~$\sth$ are
  also possible. For example, $u_2, u_3$ could be merged with~$u_1$, or $\sth$
  could assign $u_2, u_3$ to other ground terms. \lipicsEnd
\end{example}

Given a lifting base $\la F, G, \ffs, \ggs, \HG \ra$ for first-order sentences
$F, G$ such that $F \entails G$, we can construct a Craig-Lyndon interpolant
for $F, G$ as specified with Theorem~\ref{thm-lifting} below. Its statement
needs additional notation. We write $\strictsubterm{s}{t}$ to express that $s$
is a strict subterm of~$t$.
If $\symseq{f}$ is a set or sequence of functions, then an $\symseq{f}\sterm$
is a term whose outermost symbol is in $\symseq{f}$.
An occurrence of an $\symseq{f}\sterm$ in a formula $F$ is
\defname{$\symseq{f}$-maximal} if it is not within an occurrence of another
$\symseq{f}\sterm$.
If $\sigma$ is an injective variable substitution whose range is a set of
ground $\symseq{f}\sterms$, then $\invsubst{F}{\sigma}$ denotes $F$ with all
$\symseq{f}$-maximal occurrences of terms $t$ in the range of $\sigma$
replaced by the variable that is mapped by $\sigma$ to $t$.
As an example, let $\symseq{f} = \ff\fg$, let $F =
\fp(\fh(\ff(\fa),\fg(\ff(\fa))))$ and let $\sigma = \{x \mapsto \ff(\fa),\, y
\mapsto \fg(\ff(\fa))\}$. Then $\invsubst{F}{\sigma} = \fp(\fh(x,y))$. We can
now state the theorem that specifies the variable introduction into ground
interpolants.

\begin{theorem}[Interpolant Lifting]
  \label{thm-lifting}
  Let $\la F, G, \ffs, \ggs, \HG \ra$ be an interpolant lifting base. Let
  $\{t_1, \ldots, t_n\}$ be the set of the $\fgs\sterms$ with an
  $\fgs$-maximal occurrence in $\HG$, ordered such that if
  $\strictsubterm{t_i}{t_j}$, then $i < j$. Let $\{v_1,\ldots,v_n\}$ be a set
  of fresh variables and let $\stt$ be the injective substitution $\stt \eqdef
  \{v_i \mapsto t_i \mid i \in \{1,\ldots,n\}\}$. For $i \in \{1,\ldots,n\}$
  let $Q_i \eqdef \exists$ if $v_i\stt$ is an $\ffs\sterm$ and $Q_i \eqdef
  \forall$ otherwise, that is, if $v_i\stt$ is a $\ggs\sterm$. Then
  \[H = Q_1 v_1 \ldots Q_n v_n\, \HG\invsubstpost{\stt}\] is a Craig-Lyndon
  interpolant for $F, G$.
\end{theorem}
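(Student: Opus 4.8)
The plan is to verify the three defining conditions of a Craig-Lyndon interpolant for $H$ with respect to $F,G$: the syntactic vocabulary condition $\voc{H} \subseteq \voc{F} \cap \voc{G}$, the entailment $F \entails H$, and the entailment $H \entails G$. The syntactic condition should be the easiest. By construction, $H$ is obtained from $\HG$ by abstracting away every $\fgs$-maximal occurrence of an $\fgs\sterm$ and prefixing quantifiers over the resulting fresh variables; so the functions remaining in $H$ are exactly those in $\fun{\HG}\setminus\fgs$. Since $\HG$ is a Craig-Lyndon interpolant for $\FE(\us)\sth,\GE(\vs)\sth$ (condition \ibref{ib:ipol}), $\fun{\HG} \subseteq \fun{\FE(\us)\sth} \cap \fun{\GE(\vs)\sth}$; combining \ibref{ib:fe}, \ibrefm{ib:fe}, \ibref{ib:rngh} one sees every such function is either in $\fun{F}\cap\fun{G}$ or in $\fgs$, so the non-$\fgs$ ones lie in $\fun{F}\cap\fun{G} \subseteq \voc{F}\cap\voc{G}$. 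Predicates are untouched by the abstraction, and $\pred{\HG}\subseteq \pred{\FE(\us)\sth}\cap\pred{\GE(\vs)\sth}\subseteq \pred{F}\cap\pred{G}$ by \ibref{ib:pred}, \ibrefm{ib:pred}; $H$ has no free variables since $F,G$ are sentences and every abstracted ground term becomes a bound variable. Hence $\voc{H}\subseteq\voc{F}\cap\voc{G}$, and in fact polarities are preserved so the Lyndon condition holds too.

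For $F\entails H$, I would argue through the expansion side. By \ibref{ib:sem}, $F\entails \exists\ffs\forall\us\,\FE(\us)$, so it suffices to show $\forall\us\,\FE(\us) \entails H$ in a context where the symbols of $\ffs$ may be treated as fixed (their second-order quantifier can be dropped for an entailment into a formula not containing them, modulo the usual care with $\ggs$-symbols occurring in $H$). Now $\HG$ is entailed by $\FE(\us)\sth$, which is an instance of $\FE(\us)$ under the ground substitution $\sth$; so $\forall\us\,\FE(\us) \entails \HG$. The key point is then to lift this to $\forall\us\,\FE(\us)\entails H$: the quantifier prefix $Q_1 v_1 \ldots Q_n v_n$ places an $\exists$ in front of each variable $v_i$ with $v_i\stt$ an $\ffs\sterm$ and a $\forall$ in front of each $v_i$ with $v_i\stt$ a $\ggs\sterm$, and $H\{\stt\} = \HG\invsubstpost{\stt}\{\stt\} = \HG$. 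So from any model of $\forall\us\,\FE(\us)$ one instantiates the $\exists$-bound $v_i$ (the $\ffs$-terms, which in the canonical construction are witnessed by Skolem/$F$-only terms appearing in $\HG$) by their values $t_i$, while the $\forall$-bound $v_i$ (the $\ggs$-terms, not in the vocabulary of $F$) may be assigned arbitrarily — and here one uses that $\HG$, being a single ground formula, is satisfied under the witnessing assignment regardless of how the $\ggs$-indexed variables are set, because $\HG$'s truth was already secured by $\FE(\us)\sth \entails \HG$ and the $\ggs\sterms$ in $\HG$ are, from $F$'s perspective, just opaque constants. Symmetrically, for $H\entails G$ one dualizes: by \ibrefm{ib:sem} it suffices to show $H\entails \forall\ggs\exists\vs\,\GE(\vs)$, equivalently $H\entails\exists\vs\,\GE(\vs)$ with $\ggs$-symbols fixed, and since $\GE(\vs)\sth\entails$ is reversed — actually $\HG\entails\GE(\vs)\sth$ — one shows the $\forall$-bound variables of $H$ (the $\ggs$-terms) can be instantiated by the $t_i$ to recover $\HG$, which then entails the corresponding instance of $\GE(\vs)$, witnessing the existential over $\vs$.

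The main obstacle is the correctness of the \emph{ordering and mixed quantifier prefix}, together with the bookkeeping that the abstraction $\invsubstpost{\stt}$ interacts correctly with subterm nesting. One must check that ordering $t_1,\ldots,t_n$ so that $\strictsubterm{t_i}{t_j}$ implies $i<j$ makes the prefix $Q_1 v_1 \ldots Q_n v_n$ a \emph{legal} prenex form — i.e., when an $\ffs\sterm$ is nested inside a $\ggs\sterm$ (or vice versa), the outer term's variable is quantified later, so that the witness for an inner $\exists$ may depend on an outer $\forall$ exactly as a Skolem-style dependency demands. The real content is that the canonical construction preceding the theorem guarantees precisely the alternation pattern needed: $F$-only/Skolem-$F$ terms get existentials, $G$-only/Skolem-$G$ terms get universals, and nesting respects the quantifier scoping of the original prenex forms of $F$ and $\lnot G$. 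I would isolate this as the crux and handle it by an induction on the term ordering, peeling off $Q_n v_n$ first and applying \eref{t:dis} (for an $\exists$, pushing in the defining equation $v_n = t_n$) or its universal analogue \eref{t:pos} (for a $\forall$), reducing $H$ to the interpolant-lifting statement for the base with one fewer abstracted term, the induction hypothesis then closing the argument. The remaining verifications — that dropping the outer second-order $\exists\ffs$, $\forall\ggs$ is sound for the entailments in question, and that $\HG$'s status as a \emph{Lyndon} interpolant (polarity-respecting) is preserved under abstraction and quantification — are routine given \ibref{ib:fcap}, \ibrefm{ib:fcap} (disjointness keeping $\ffs$ out of $G$ and $\ggs$ out of $F$) and the fact that quantifier introduction does not alter atom polarities.
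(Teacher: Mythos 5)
Your vocabulary/Lyndon argument is fine, and you correctly isolate the mixed, subterm-ordered quantifier prefix as the crux, but the central semantic step you give for $F \entails H$ does not work as stated. From a model $M$ of $F$ and the condition $F \entails \exists\ffs\forall\us\,\FE(\us)$ of Def.~\ref{def-ib} you obtain a model $M'$, agreeing with $M$ outside $\ffs$, with $M' \models \forall\us\,\FE(\us)$ and hence $M' \models \FE(\us)\sth \models \HG$. But $M' \models \HG$ only says that the matrix $\HG\invsubstpost{\stt}$ holds under the \emph{single} assignment sending each $v_i$ to the value of $t_i$ in $M'$. Your step ``the $\forall$-bound $v_i$ may be assigned arbitrarily because $\HG$'s truth was already secured, the $\ggs$-terms being opaque constants'' is therefore not an argument, and instantiating the existential variables ``by their values $t_i$'' is precisely what one may not do: the entire content of the ordering condition is that the witness for an existential variable must \emph{vary} with the preceding universal ones. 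In Example~\ref{examp-ibase}(\ref{ex-ib-nonlocal})/Example~\ref{examp-lifting}, $\HG = \fp(\fg,\ff(\fg))$ lifts to $\forall v_1 \exists v_2\, \fp(v_1,v_2)$; for an arbitrary value $d$ of $v_1$ the witness has to be $\ff^{M'}(d)$, not the fixed element $\ff^{M'}(\fg^{M'})$, and justifying that this works needs the full universally quantified $\forall\us\,\FE(\us)$ (not merely its instance $\FE(\us)\sth$), together with $\fun{F}\cap\ggs=\emptyset$, the restrictions on $\fun{\FE}$, $\fun{\rng{\sth}}$, and the subterm ordering. Note that nothing in your reasoning would break if the prefix were the unsound $\exists v_2 \forall v_1$, which shows the step is not merely under-detailed but invalid. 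The dual direction has the same conflation: after instantiating the universal variables of $H$ by the values of the corresponding $\ggs$-terms, the existential positions of $H$ deliver \emph{some} witnesses, not the values of the $\ffs$-terms $t_i$, so you do not ``recover $\HG$''; the work is to convert whatever witnesses $H$ supplies into witnesses for $\exists\vs\,\GE(\vs)$.

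Your fallback — induction on the prefix, peeling off $Q_n v_n$ via \eref{t:dis}/\eref{t:pos} — is only named, not carried out, and as formulated it does not reduce to the theorem itself: substituting $t_n$ back reintroduces a function symbol from $\ffs\ggs$ into the candidate interpolant, so the residual tuple is not a lifting base for $F,G$ in the sense of Def.~\ref{def-ib}; one needs a strengthened induction statement (e.g.\ re-classifying the head symbol of $t_n$ as shared and re-verifying all conditions, or proving two separate lifting lemmas, one per entailment). Also \eref{t:dis} and \eref{t:pos} import equality, which is not part of the language here, so that device would at least need a detour through a conservative extension. The chapter itself does not reprove Theorem~\ref{thm-lifting} but refers to the cited source, where the proof is built on Skolemization and Herbrand's theorem; the quantifier-dependency analysis sketched above is exactly what that proof supplies and what your proposal leaves open.
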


Theorem~\ref{thm-lifting} shows the construction of a first-order sentence $H$
from a given ground formula~$\HG$ and sets, $\ffs, \ggs$, of function symbols.
Sentence $H$ is obtained from $\HG$ by replacing $\ffs\ggs$-maximal
occurrences of $\ffs\sterms$ and $\ggs\sterms$ with variables, and prepending
a quantifier prefix over these. Variables replacing an $\ffs\sterm$
($\ggs\sterm$) are existential (universal), and whenever variables $x,y$
replace terms~$s,t$, respectively, such that $\strictsubterm{s}{t}$, then the
quantifier over $x$ precedes that over $y$. The obtained $H$ is a Craig-Lyndon
interpolant for the first-order sentences~$F,G$, provided $\la F, G, \ffs,
\ggs, \HG \ra$ is an interpolant lifting base.
\begin{example}
  \label{examp-lifting}
  Consider the interpolant lifting bases from Example~\ref{examp-ibase}.
  Respective Craig-Lyndon interpolants according
  to Theorem~\ref{thm-lifting} are as follows.
  For (\ref{ex-ib-nonlocal}) and for (\ref{ex-ib-qr}):
  $\forall v_1 \exists v_2\, \fp(v_1, v_2)$.
  For (\ref{ex-ib-skolem}): $\exists v_1 \forall v_2 \exists v_3\, \fp(v_2,
  v_3, v_1)$. Also other orderings of the quantifiers are possible according
  to Theorem~\ref{thm-lifting}. The only required condition is (expressed with
  the variable names of the shown interpolant) that $\forall v_2$ must precede
  $\exists v_3$.
  For (\ref{ex-ib-expansion}): $\exists v_1 \forall v_2 \forall v_3\,
  (\fp(v_2, v_1) \land \fp(v_3, v_1))$, which is equivalent to $\exists v_1
  \forall v_2 \fp(v_2,v_1)$. Also arbitrary other quantifier orderings
  are possible.
  \lipicsEnd
\end{example}

\section{Ground Interpolation with Clausal Tableaux}
\label{sec-tableaux}

We discuss clausal tableaux as a technique for first-order theorem proving and
ground interpolation in the two-stage approach.

\subsection{Clausal Tableaux -- Proof Objects for Automated Reasoning}

The framework of \name{clausal tableaux}
\cite{letz:diss,handbook:tableaux:letz,letz:stenz:handbook,letz:habil,handbook:ar:haehnle:2001}
was developed in the 1990s by Letz as a bridge between analytic tableaux and a
family of methods for fully automated first-order proving with highly
optimized systems. These methods, model elimination \cite{loveland:1968} and
the connection method \cite{bibel:atp:1987}, share with resolution the
operation on sets of clauses, but instead of generating consequences they
enumerate proof structures.
First-order provers that can be described as constructing clausal tableaux
include the \name{Prolog Technology Theorem Prover (PTTP)}
\cite{pttp:1984}, \SETHEO
\cite{setheo:1992,letz:stenz:handbook}
and \METEOR \cite{meteor:1994}. Until around 2001, \SETHEO was a competitive
first-order prover. \CMProver \cite{cw:pie:2016,cw:pie:2020} is a
Prolog-based system that is still maintained. In 2003 \leanCoP
\cite{otten:2003:leancopinabstract} was designed as a minimalistic
Prolog-based system that has since been used for numerous studies and
adaptations to non-classical logics.
For Craig interpolation, the crucial relevance of the clausal tableau
framework is motivated by its \name{proof objects} -- the \name{clausal
  tableaux}. Based on clauses instead of complex formulas, they are compatible
with highly optimized fully automated systems. Their tree structure allows
inductive calculation of ground interpolants as known from analytic tableaux
and sequent systems. The simplicity of clausal tableaux facilitates abstract
investigations, proof transformations, and developing strengthened variations
of Craig-Lyndon interpolation.

\pagebreak
\begin{definition}[Clausal Tableau and Related Notions]
  \label{def-ct}
A \defname{clausal tableau} (briefly \name{tableau}) \defname{for} a set $F$
of clauses is a finite ordered tree whose nodes~$N$ with exception of the root
are labeled with a literal $\nlit{N}$, such that for each inner node~$M$ the
disjunction of the literals of all its children in their left-to-right order,
$\nclause{M}$, is an instance of a clause in~$F$. The clauses $\nclause{M}$
are called the \defname{clauses of} the tableau. A tableau whose clauses are
ground is called \defname{ground}.
A branch of a tableau is \defname{closed} iff it contains nodes with
complementary literals. A node is \defname{closed} iff all branches through it
are closed. A tableau is \defname{closed} iff its root is closed.
\end{definition}
All occurrences of variables in the clauses of a clausal tableau are free and
their scope spans the whole tableau. That is, we consider \name{free-variable
  tableaux} \cite[p.~158ff]{handbook:tableaux:letz}
\cite[Sect.~2.2]{letz:habil}, or \name{rigid} variables
\cite[p.~114]{handbook:ar:haehnle:2001}.
That a clausal tableau indeed represents a proof is stated in the following
proposition, which follows from Herbrand's theorem (Theorem~\ref{thm-h}).
\begin{proposition}
  \label{prop-unsat-ct}
  A set $F$ of clauses is unsatisfiable iff there exists a closed clausal
  tableau for $F$.
\end{proposition}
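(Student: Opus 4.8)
The plan is to establish both directions of the biconditional by leaning on Herbrand's theorem (Theorem~\ref{thm-h}) to mediate between first-order unsatisfiability and the existence of a closed ground tableau, and then to remove the groundness assumption. Concretely, I would first prove the statement for \emph{ground} clause sets, where a closed clausal tableau is essentially a propositional refutation tree, and then lift to arbitrary clause sets.

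\smallskip\noindent\textbf{From a closed tableau to unsatisfiability.} Suppose there is a closed clausal tableau $T$ for $F$. Since every clause $\nclause{M}$ of $T$ is an instance of a clause of $F$, it is entailed by $F$ (reading clauses under universal closure). By induction on the tree, starting from the leaves and using that a branch is closed exactly when it contains complementary literals, I would show that the root is closed only if the conjunction of all tableau clauses is unsatisfiable: a closed node $M$ forces, for every truth assignment, some literal on every branch through $M$ to be false together with its complement present, so no assignment can satisfy all clauses $\nclause{M}$ simultaneously. Hence $F$ is unsatisfiable. This direction is purely syntactic/propositional and needs no appeal to Herbrand's theorem.

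\smallskip\noindent\textbf{From unsatisfiability to a closed tableau.} For the converse, assume $F$ is unsatisfiable. By Herbrand's theorem there is a finite unsatisfiable set $F_0$ of ground instances of clauses of $F$. Since $F_0$ is a finite unsatisfiable set of ground (hence propositional) clauses, it has a closed \emph{propositional} clausal tableau; here I would either invoke the standard completeness of the clausal-tableau (model-elimination/connection) calculus in the ground case, or give a direct construction by recursion on the number of distinct ground atoms: pick an atom $A$ occurring in $F_0$, expand with a clause containing $A$ or $\lnot A$ suitably, and recurse on the residual clause sets $F_0 \cup \{A\}$-falsifying branches, closing branches as complementary literals appear; termination follows because the set of available atoms strictly decreases along each branch. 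Each tableau clause used is a ground instance of a clause of $F$, so the resulting closed ground tableau is in particular a closed clausal tableau for $F$ in the sense of Definition~\ref{def-ct}.

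\smallskip\noindent The main obstacle is the completeness half, i.e.\ actually exhibiting a closed clausal tableau from an unsatisfiable finite ground clause set while respecting the precise shape demanded by Definition~\ref{def-ct} (an ordered tree whose inner-node child-literal disjunctions are clause instances, with closure meaning complementary literals on a branch). One must be careful that the recursive construction always makes progress and that the ``free/rigid variable'' convention is irrelevant in the ground case, so that the ground tableau is literally a tableau for $F$. If one prefers not to re-prove ground completeness from scratch, the cleanest route is to cite the established completeness of model elimination / clausal tableaux for clause logic and simply note that, via Herbrand's theorem, it suffices to refute the finite ground instance set $F_0$; the soundness direction is then the short inductive argument sketched above.
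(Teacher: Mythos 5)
Your proposal is correct and matches the paper's intent: the paper offers no detailed argument beyond remarking that the proposition ``follows from Herbrand's theorem'' (Theorem~\ref{thm-h}), and your proof is the natural fleshing-out of exactly that route -- a direct soundness argument from the closure condition plus entailment of instantiated tableau clauses, and completeness by reducing via Herbrand's theorem to a finite unsatisfiable ground instance set and invoking (or reconstructing) ground completeness of clausal tableaux, which also accords with the paper's subsequent remark relating closed ground tableaux to Herbrand's theorem. The only caution is that your ad hoc recursive ground construction must attach only instances of clauses of $F$ (no cut-like case splits on atoms), so if you spell it out, use the standard induction for ground clausal tableau completeness rather than a semantic-tree branching on atoms.
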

Clauses in a closed tableau according to Prop.~\ref{prop-unsat-ct} may
have variables. A closed \emph{ground} tableau, in direct correspondence to
Herbrand's theorem, can then be obtained by instantiating each variable with
an arbitrary ground term built from functions in the given set of clauses and,
if there is no constant among these, an extra constant.

As proof systems, clausal tableaux and cut free analytic tableaux, as well as
clausal tableaux with atomic cut (Sect.~\ref{sec-atomic-cut}) and analytic
tableaux with atomic cut, polynomially simulate each other if
structure-preserving normal forms are permitted in both types of tableaux
\cite[p.~119]{letz:diss}.

\subsection{The Connection Tableau Calculus}

For Craig interpolation, our main interest in clausal tableaux is as proof
objects that were delivered by an automated system, without caring about
\emph{how} they were found in proof search with some calculus. Nevertheless,
we briefly present a clausal tableau calculus, the \name{connection tableau
  calculus}, referring to \cite{letz:habil} for a comprehensive discussion.
Differently from analytic tableaux, the initially given formula is not placed
on the tableau, but kept separately, as a set of \emph{input clauses}. The
calculus builds the tableau by attaching copies, i.e., variants with fresh
variables, of input clauses to the tableau and by instantiating variables in
these tableau clauses. Instantiating is done with most general unifiers that
equate a leaf literal with the complement of an ancestor literal, such that a
branch gets closed. The rules of the calculus involve several nondeterministic
selections. For completeness, a backtracking regime has to ensure that each
possible selection is eventually made.

\begin{definition}[Connection Tableau Calculus]
  \label{def-calc-ser} The \defname{connection tableau calculus}
  consists of the following three rules.
  \begin{description}

  \item{\rname{Start}:} If the tableau consists only of the root node, select
    a clause from the input clauses, make a fresh copy and attach children
    with its literals to the root.

  \item{\rname{Extension}:} Select an open branch with leaf $N$ and select an
    input clause $L' \lor C$ such that $\nlit{N}$ and $\du{\nlit{L'}}$ have a
    most general unifier. Make a fresh copy $L'' \lor C'$ of the clause and
    attach children with its literals to $N$. Let $\sigma$ be the most general
    unifier of $L$ and $\du{L''}$ and apply $\sigma$ to all literals of the
    tableau. The branch ending in the child corresponding to $L''$ is then
    closed.

\item{\rname{Reduction}:} Select an open branch with leaf $N$ and select an
  ancestor $N'$ of $N$ such that $\nlit{L}$ and $\du{\nlit{L'}}$ have a most
  general unifier $\sigma$. Apply $\sigma$ to the tableau. The branch ending
  in $N$ is then closed.
\end{description}
\end{definition}
\begin{example}
  \label{examp-ser}
  Let $F = \{\lnot \fp(x,\ff(x)),\; (\fp(x,y) \lor \lnot \fq(y)),\; (\fq(x)
  \lor \fp(\fg(y),x))\}$. The connection tableau calculus can build a closed
  clausal tableau for $F$ as follows.

  \vspace{-3mm}
  \noindent\hspace{-1.4mm}
  \begin{tikzpicture}[scale=0.84,
        baseline=(a.north),
      sibling distance=5em,level distance=8ex,
      every node/.style = {transform shape,anchor=mid}]
      \node (a) {\vbar\textbullet}
      child { node {$\lnot \fp(u,\ff(u))$} };
  \end{tikzpicture}
  \raisebox{-13ex}{$\!\stackrel{\text{\rname{Extension}}}{\Rightarrow}\hspace{-1.7em}$}
  \begin{tikzpicture}[scale=0.84,
      baseline=(a.north),
      sibling distance=5em,level distance=8ex,
      every node/.style = {transform shape,anchor=mid}]
      \node (a) {\vbar\textbullet}
      child { node {$\lnot \fp(u,\ff(u))$}
        child { node {$\begin{array}[t]{c}\fp(u,\ff(u))\\\times\end{array}$}}
        child { node {$\lnot \fq(\ff(u))$ }}};
  \end{tikzpicture}
  \raisebox{-13ex}{$\hspace{-1.6em}\stackrel{\text{\rname{Extension}}}{\Rightarrow}\hspace{-1.7em}$}
  \begin{tikzpicture}[scale=0.84,
      baseline=(a.north),
      sibling distance=5em,level distance=8ex,
      every node/.style = {transform shape,anchor=mid}]
      \node (a) {\vbar\textbullet}
      child { node {$\lnot \fp(u,\ff(u))$}
        child { node {$\begin{array}[t]{c}\fp(u,\ff(u))\\\times\end{array}$}}
        child { node {$\lnot \fq(\ff(u))$ }
          child { node {$\begin{array}[t]{c}\fq(\ff(u))\\\times\end{array}$} }
          child { node {$\fp(\fg(v),\ff(u))$}}
        }};
  \end{tikzpicture}
  \raisebox{-13ex}{$\hspace{-3.9em}\stackrel{\text{\rname{Reduction}}}{\Rightarrow}\hspace{-2.3em}$}
  \begin{tikzpicture}[scale=0.84,
      baseline=(a.north),
      sibling distance=6.5em,level distance=8ex,
      every node/.style = {transform shape,anchor=mid}]
      \node (a) {\vbar\textbullet}
      child { node {$\lnot \fp(\fg(v),\ff(\fg(v)))$}
        child { node {$\begin{array}[t]{c}\fp(\fg(v),\ff(\fg(v)))\\\times\end{array}$} }
        child { node {$\lnot \fq(\ff(\fg(v)))$ }
          child { node {$\begin{array}[t]{c}\fq(\ff(\fg(v)))\\\times\end{array}$} }
          child { node {$\begin{array}[t]{c}\fp(\fg(v),\ff(\fg(v)))\\\times\end{array}$}}
        }};
  \end{tikzpicture}

  \medskip

  \noindent
  The first step, \rname{Start}, adds a copy $\lnot \fp(u,\ff(u))$ of the
  first input clause. Then an \rname{Extension} step adds a copy $\fp(x',y')
  \lor \lnot \fq(y')$ of the second input clause and applies the unifier $\{x'
  \mapsto u, y' \mapsto \ff(u)\}$. A second \rname{Extension} step adds a copy
  $\fq(x'') \lor \fp(\fg(v),x'')$ of the third input clause and applies the
  unifier $\{x'' \mapsto \ff(u)\}$. Finally, a \rname{Reduction} step with the
  node labeled by $\fp(a,\ff(u))$ as $N$ and its ancestor labeled by $\lnot
  \fp(u,\ff(u))$ as $N'$ applies the unifier $\{u \mapsto \fg(v)\}$.
  \lipicsEnd
\end{example}

If we add a further input clause $C = \fp(x,y) \lor \lnot \fr(y)$ to $F$ from
Example~\ref{examp-ser} and select it as start clause, then no further rule is
applicable. Also, if, as in the example, $\lnot \fp(x,\ff(x))$ is selected as
start clause but the first extension step is with $C$, then no further rule is
applicable. In such cases alternate selections have to be explored. In
implementations this is typically done with chronological backtracking
embedded in an iterative deepening upon the depth of the tableau tree or some
other measure (e.g., \cite[Sect.~6.2]{setheo:1992}). With iterative deepening,
if the minimal depth of a closed tableau for the given formula is $n$, proof
search exhaustively explores for all $i \in \{0, \ldots, n-1\}$ the trees of
depth up to $i$ that are generated by the calculus rules. Finally, in
iteration $n$ it terminates with the first closed tableau it finds. In
variations of this setup, the prover enumerates alternate closed tableaux,
ordered by increasing depth, which is of interest for interpolation since
different tableaux yield different interpolants.

A core idea of the connection method is to guide proof search by the
\emph{connections} in the given formula, pairs of literal occurrences with the
same predicate but complementary polarity. This is reflected in the connection
tableau calculus in that at each step except of \rname{Start} a pair of
literal occurrences is made complementary through unification, closing an open
branch. The generated tableaux are thus \name{strongly connected}, which is
defined as follows.
\begin{definition}[Strong Connection Condition]
  A clausal tableau is \defname{strongly connected} iff every inner
  node with exception of the root has a node with complementary literal as a
  child.
\end{definition}
The strong connection condition does not affect completeness, i.e., whenever
there is a closed clausal tableau for a set $F$ of clauses, then there is a
strongly connected closed clausal tableau for $F$
\cite[Sect.~5.2]{letz:habil}, \cite[Sect.~4]{handbook:ar:haehnle:2001}.

\subsection{The Regularity Restriction}
\label{sec-regularity}

\name{Regularity} is an important restriction of clausal tableaux, defined as
follows.

\begin{definition}[Regular]
A clausal tableau is \defname{regular} iff no node has an ancestor
with the same literal.
\end{definition}

The number of nodes of a non-regular closed tableau can be strictly reduced
with the following operation that again yields a closed tableau for the same
set of clauses \cite[Sect.~2]{letz:habil}: \textit{Select a node $N$ with an
  ancestor $N'$ such that both nodes are labeled with the same literal. Remove
  the edges originating\footnote{Tableau edges are considered as directed
  downward.} in the parent $N''$ of $N$ and replace them with the edges
  originating in $N$.}
Repeating this until the result is regular provides a polynomial proof
transformation procedure, or tableau simplification, to achieve regularity.
Any closed tableau for a given set~$F$ of clauses with a \emph{minimal} number
of nodes must be regular. Hence, enforcing regularity can be useful at proof
search. For interpolation, regularity simplification reduces the size of the
tableau if it is obtained from a prover that does not ensure regularity.

Regularity combined with the strong connection condition is complete
\cite{letz:diss,handbook:ar:haehnle:2001}, i.e., if there is a closed tableau for a
set of clauses, then there is one that is both regular and strongly connected.
However, with respect to size the interplay of both restrictions is not
smooth: clausal tableaux with the strong connection condition cannot
polynomially simulate clausal tableaux without that condition, and regular
clausal tableaux cannot simulate clausal tableaux with the strong connection
condition \cite[Sect.~3.4.1]{letz:diss} \cite[Chapter~7]{letz:habil}.

\subsection{The Hyper Property}
\label{sec-hyper}

The \name{hyper} property \cite{cw:range:2023} is a restriction of clausal
tableaux with applications in strengthened variations of Craig-Lyndon
interpolation and in the conversion of resolution proofs to clausal tableaux.
\begin{definition}[Hyper]
  A clausal tableau is \defname{hyper} iff the nodes labeled with a negative
  literal are exactly the leaf nodes.
\end{definition}
The name \name{hyper} alludes to hyperresolution and hypertableaux, which aim
at narrowing proof search through a related restriction.
Any closed clausal tableau for a set~$F$ of clauses can be converted to one
with the hyper property, as shown with the algorithm presented below
\cite{cw:range:2023}. Its specification involves a further tableau property,
\name{leaf-closed}, which, like regularity, can be achieved with a
straightforward simplification.

\begin{definition}[Target Node, Leaf-Closed]
A clausal tableau node is \defname{closing} iff it has an ancestor with
complementary literal. With a closing node~$N$, a particular such ancestor is
associated as \defname{target of}~$N$, written~$\ntgt{N}$. A tableau is
\defname{leaf-closing} iff all closing nodes are leaves. A closed tableau that
is leaf-closing is called \defname{leaf-closed}.
\end{definition}

\begin{algo}[Hyper Conversion]
\label{algo-proc-hyper}
\

\algoinput A leaf-closed and regular clausal tableau.

\algomethod Repeat the following operations until the resulting tableau is
hyper.
\begin{enumerate}
\item \label{step-proc-pick} Let $N^\prime$ be the first node visited in
  pre-order\footnote{\name{Pre-order tree traversal} is the method of
  depth-first traversal where the current node is visited \emph{before} its subtrees
  are recursively traversed left-to-right.} with a child that is an inner node
  with a negative literal label. Let $N$ be the leftmost such child.

\item Create a fresh copy~$U$ of the subtree rooted at $N^\prime$. In~$U$
  remove the edges that originate in the node corresponding to $N$.

\item \label{step-proc-attach} Replace the edges originating in $N^\prime$
  with the edges originating in $N$.

\item \label{step-proc-fix} For each leaf descendant~$M$ of $N^\prime$ with
  $\nlit{M} = \du{\nlit{N}}$: Create a fresh copy~$U^\prime$ of $U$. Change
  the origin of the edges originating in the root of $U^\prime$ to~$M$.

\item \label{step-proc-simp} Simplify the tableau to leaf-closing and regular
  form.
\end{enumerate}
\algooutput A leaf-closed, regular and hyper clausal tableau whose clauses are
clauses of the input tableau.
\end{algo}

\begin{example}
  The following tableaux show a conversion with
  Algorithm~\ref{algo-proc-hyper} in two steps.

\vspace{-3mm}
  \begin{tikzpicture}[scale=0.8, %
    baseline=(a.north), sibling distance=4em,level distance=4ex, every
    node/.style = {transform shape,anchor=mid}]
  \node (a) {\vbar\textbullet}
  child { node {$\lnot \fq$}
    child { node {$\lnot \fp$}
      child { node {$\fp$} } }
    child { node {$\fq$} }
  };
\end{tikzpicture}
\raisebox{-7ex}{$\;\;\rewrite\;\;$}
\begin{tikzpicture}[scale=0.8, %
    baseline=(a.north), sibling distance=4em,level distance=4ex, every
    node/.style = {transform shape,anchor=mid}]
  \node (a) {\vbar\textbullet}
  child { node {$\lnot \fp$}
    child { node {$\fp$} }
  }
  child { node {$\fq$}
    child { node {$\lnot \fq$} }
  };
\end{tikzpicture}
\raisebox{-7ex}{$\;\;\rewrite\;\;$}
\begin{tikzpicture}[scale=0.8, %
    baseline=(a.north), sibling distance=4em,level distance=4ex, every
    node/.style = {transform shape,anchor=mid}]
  \node (a) {\vbar\textbullet}
  child { node {$\fp$}
    child { node {$\lnot \fp$}
    }
    child { node {$\fq$}
      child { node {$\lnot \fq$} }
    }
  };
\end{tikzpicture}
\lipicsEnd
\end{example}

The algorithm is specified by destructive tableau manipulations. A
\defname{fresh copy} of an ordered tree~$T$ is an ordered tree~$T^\prime$ with
fresh nodes and edges, related to $T$ through a bijection~$c$ such that any
node $N$ of $T$ has the same literal label as node $c(N)$ of $T^\prime$ and
such that the $i$-th edge originating in node $N$ of $T$ ends in node~$M$ iff
the $i$-th edge originating in node $c(N)$ of $T^\prime$ ends in node $c(M)$.
In each iteration the procedure chooses an inner node with negative literal
label and modifies the tableau. At termination the tableau is then hyper.
Since the procedure copies parts of subtrees it is not a polynomial operation
but practical usefulness has been demonstrated
\cite{cw:range:2023,cw:2024:synthesis}.

\subsection{Interpolant Calculation from a Clausal Tableau}
\label{sec-ground-ipol-tableaux}

The calculation of a ground Craig interpolant from a clausal tableau adapts
the propositional core of interpolation methods for sequent systems and
analytic tableaux
\cite{takeuti:book:1987,smullyan:book:1968,fitting:book:foar:2nd}
to the setting of clausal tableaux.
To calculate a Craig-Lyndon separator for sets $F, G$ of clauses we generalize
clausal tableaux by an additional node label, \name{side}, which is shared by
siblings and indicates whether a tableau clause is an instance of an input
clause in $F$ or in $G$.
\begin{definition}[\Sided Clausal Tableau and Related Notions]
  \

\subdefinition{def-coltab} Let $\FL, \GR$ be sets of clauses. A
\defname{\sided clausal tableau for} $\FL, \GR$ (briefly \defname{tableau for
  $\FL, \GR$}) is a clausal tableau for $\FL \cup \GR$ whose nodes~$N$ with
exception of the root are labeled additionally with a \name{side} $\nside{N}
\in \{\aaa, \bbb\}$, such that the following conditions are met by all nodes
$N, N'$: (1) If $N$ and $N^\prime$ are siblings, then $\nside{N} =
\nside{N^\prime}$; (2) If $N$ has a child $N'$ with $\nside{N^\prime} = \aaa$
($\nside{N^\prime} = \bbb$), then $\nclause{N}$ is an instance of a clause in
$\FL$ ($\GR$). We also refer to the side of the children of a node~$N$ as
\defname{side of} $\nclause{N}$.

\subdefinition{def-path} For $\xside \in \{\aaa,\bbb\}$ and all nodes $N$ of a
\sided clausal tableau define
\[\npath{\xside}{N}\; \eqdef
\{\nlit{N^\prime} \mid N^\prime \in \mathit{Path} \text{ and }
\nside{N^\prime} = \xside\},\] where
$\mathit{Path}$ is the union of $\{N\}$ and the set of the ancestors of $N$.
\end{definition}

\noindent
Example~\ref{examp-ct-ipol} in Sect.~\ref{sec-two-stage-overall} shows
examples of the defined notions. To integrate truth value simplifications
\eref{tv:at}--\eref{tv:of} into interpolant calculation from the very
beginning, we define the following variations of conjunction and disjunction:
For formulas $F_1, \ldots, F_n$ define $\bigwedgedot_{i=1}^n F_i$
($\bigveedot_{i=1}^n F_i$) as $\false$ ($\true$) if at least one of the
formulas $F_i$ is identical to $\false$ ($\true$), else as the conjunction
(disjunction) of those formulas $F_i$ that are not identical to $\true$
($\false$). We can now specify ground interpolant calculation inductively as a
function of tableau nodes.
\begin{definition}[Ground Interpolant Calculation from a Clausal Tableau]
Let $N$ be a node of a leaf-closed \sided clausal ground tableau. The value of
\defname{$\nipol{N}$} is a ground NNF formula, defined inductively as
specified with the tables below, the left for the base case where $N$ is a
leaf, the right for the case where $N$ is an inner node with children $N_1,
\ldots, N_n$.

\medskip

$\begin{array}[t]{c@{\hspace{1em}}c@{\hspace{1em}}c}
\nside{N} &  \nside{\ntgt{N}} & \nipol{N}\\\midrule
\aaa & \aaa & \false\\[0.5ex]
\aaa & \bbb & \nlit{N}\\[0.5ex]
\bbb & \aaa & \du{\nlit{N}}\\[0.5ex]
\bbb & \bbb & \true
\end{array}$
\hspace{1cm}
$\begin{array}[t]{c@{\hspace{1em}}c}
  \nside{N_1} & \nipol{N}\\\midrule
  \aaa & \bigveedot_{i=1}^{n} \nipol{N_i}\\[1ex]
  \bbb & \bigwedgedot_{i=1}^{n} \nipol{N_i}
\end{array}$
\end{definition}
The key property of $\nipol{N}$ is stated in the following lemma.

\pagebreak
\begin{lemma}[Invariant of Ground Interpolant Calculation from a Clausal
    Tableau]
  \label{lem-ground-ipol-invariant}
  Let $\FL, \GR$ be sets of ground clauses and let $N$ be a node of a
  leaf-closed \sided clausal ground tableau for $\FL, \GR$. It then holds
  that

\sublemma{lem-gi-sem}
$\FL \cup \npathL{N} \entails \nipol{N}\;$ and
$\;G \cup \npathR{N} \entails \lnot \nipol{N}$
 \sublemma{lem-gi-syn}
$\lit{\nipol{N}} \subseteq
\lit{\FL \cup \npathL{N}} \cap
\complit{G \cup \npathR{N}}$.
\end{lemma}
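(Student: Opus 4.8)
The plan is to prove both~\sublemma{lem-gi-sem} and~\sublemma{lem-gi-syn} simultaneously by structural induction on the \sided clausal ground tableau, working bottom-up from the leaves. The two parts must be proven together because the inductive step for the semantic part will need the syntactic part at the children (to ensure that when we combine interpolants at an inner node, the side-conditions on literals line up), and conversely the syntactic part at an inner node is a routine consequence of the syntactic part at the children once we know the structure of $\nipol{N}$.

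**Base case (leaves).** First I would treat the four rows of the left table. Since the tableau is leaf-closed, a leaf~$N$ is closing, so $\ntgt{N}$ is an ancestor with literal $\du{\nlit{N}}$. I would split on the four combinations of $\nside{N}$ and $\nside{\ntgt{N}}$. For instance, when $\nside{N} = \aaa$ and $\nside{\ntgt{N}} = \bbb$: then $\nlit{N} \in \npathL{N}$ and $\du{\nlit{N}} \in \npathR{N}$, so $\FL \cup \npathL{N} \entails \nlit{N} = \nipol{N}$ trivially, and $G \cup \npathR{N} \entails \lnot\nlit{N} = \lnot\nipol{N}$ since $\du{\nlit{N}} = \lnot\nlit{N}$ is in $\npathR{N}$; the syntactic condition holds because $\nlit{N} \in \npathL{N}$ and $\lnot\nlit{N} = \du{\nlit{N}} \in \npathR{N}$ so $\nlit{N} \in \complit{G \cup \npathR{N}}$. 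The case $\nside{N} = \aaa$, $\nside{\ntgt{N}} = \aaa$ gives $\nipol{N} = \false$, and here both $\nlit{N}$ and $\du{\nlit{N}}$ are in $\npathL{N}$, making $\FL \cup \npathL{N}$ contradictory, so it entails $\false$; the right entailment and the (empty) syntactic condition are immediate. The remaining two rows are symmetric, swapping the roles of $\FL$/$\GR$ and dualising.

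**Inductive step (inner nodes).** For an inner node~$N$ with children $N_1, \dots, N_n$, the sides of all children agree, say all are $\aaa$ (the $\bbb$-case is dual, exchanging $\land/\lor$, $\FL/\GR$, and negating). Then $\nclause{N}$ is an instance of a clause in $\FL$, and each $\npathL{N_i} = \npathL{N} \cup \{\nlit{N_i}\}$ while $\npathR{N_i} = \npathR{N}$. By the induction hypothesis, $\FL \cup \npathL{N} \cup \{\nlit{N_i}\} \entails \nipol{N_i}$ for each~$i$. Since the disjunction $\bigvee_i \nlit{N_i}$ is (an instance of) a clause in $\FL$, case analysis gives $\FL \cup \npathL{N} \entails \bigvee_i \nipol{N_i}$, which coincides with $\nipol{N} = \bigveedot_i \nipol{N_i}$ modulo the truth-value simplifications~\eref{tv:ot}--\eref{tv:of} (a $\true$ disjunct would make $\nipol{N}$ equal to $\true$, which is still entailed; $\false$ disjuncts are dropped harmlessly). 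For the second entailment, from $\GR \cup \npathR{N} \entails \lnot\nipol{N_i}$ for all~$i$ we get $\GR \cup \npathR{N} \entails \bigwedge_i \lnot\nipol{N_i} = \lnot \bigvee_i \nipol{N_i}$, and again this survives the $\bigveedot$ simplification. The syntactic part~\sublemma{lem-gi-syn} follows since $\lit{\nipol{N}} \subseteq \bigcup_i \lit{\nipol{N_i}}$, each term of which is contained in $\lit{\FL \cup \npathL{N_i}} \cap \complit{\GR \cup \npathR{N_i}}$ by IH, and one checks that $\lit{\FL \cup \npathL{N_i}} \subseteq \lit{\FL \cup \npathL{N}}$ because $\nlit{N_i}$ is a literal of the clause $\nclause{N}$, which is an instance of a clause in $\FL$, hence $\nlit{N_i} \in \lit{\FL}$; and $\complit{\GR \cup \npathR{N_i}} = \complit{\GR \cup \npathR{N}}$.

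**Main obstacle.** The one point needing care is exactly the bookkeeping in the syntactic claim at inner nodes: one must argue that the new literal $\nlit{N_i}$ appearing in $\npathL{N_i}$ but not in $\npathL{N}$ does not escape $\lit{\FL \cup \npathL{N}}$, which hinges on $\nlit{N_i}$ being a literal of the tableau clause $\nclause{N}$ and that clause being an instance of an $\FL$-clause — so $\nlit{N_i}$ is a literal of a ground clause in $\FL$. Everything else is routine propositional reasoning, but this step is where the definition of \sided tableau (condition~(2) linking a child's side to membership of $\nclause{N}$ in $\FL$ or $\GR$) does the real work, and it is what makes the induction close cleanly without the vocabulary of the interpolant drifting outside the shared part.
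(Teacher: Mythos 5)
Your proof is correct and is exactly the intended argument: the paper states Lemma~\ref{lem-ground-ipol-invariant} without giving a proof (deferring to the cited source), and the expected justification is precisely this structural induction over the leaf-closed \sided ground tableau, with the four side/target cases at leaves and, at inner nodes, the case split over the clause $\nclause{N}$ belonging to $\FL$ (resp.\ $\GR$) together with the observation that $\npathL{N_i}$ (resp.\ $\npathR{N_i}$) only adds a literal of that clause, so nothing escapes $\lit{\FL \cup \npathL{N}}$ or $\complit{\GR \cup \npathR{N}}$. One inessential remark: the semantic part~(\prefGlobalNumber{lem-gi-sem}) and the syntactic part~(\prefGlobalNumber{lem-gi-syn}) are actually independent inductions — neither uses the other at the children — so your claimed interdependence is harmless but not needed.
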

If $N_0$ is the tableau root, then $\npathL{N_0} = \npathR{N_0} = \{\}$.
Hence:
\begin{corollary}[Ground Interpolation with Clausal Tableaux]
\label{cor-ground-ipol-correct}
Let $\FL, \GR$ be sets of ground clauses and let $N_0$ be the root of a
leaf-closed \sided clausal ground tableau for $\FL,\GR$. Then $\nipol{N_0}$ is
a Craig-Lyndon interpolant for $\FL, \lnotGR$.
\end{corollary}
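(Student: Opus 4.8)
The plan is to obtain Corollary~\ref{cor-ground-ipol-correct} as the special case $N = N_0$ of Lemma~\ref{lem-ground-ipol-invariant}, so the substance is the proof of the lemma, which I would carry out by structural induction on the \sided clausal ground tableau, establishing the semantic and the syntactic invariant together and working bottom-up from the leaves: the recursion in the definition of $\nipolfun$ runs from leaves to root, whereas the path sets $\npathL{\cdot}, \npathR{\cdot}$ grow as one descends, so the natural induction is on the height of the subtree at the current node. For the corollary one then uses $\npathL{N_0} = \npathR{N_0} = \emptyset$: the semantic invariant becomes $\FL \entails \nipol{N_0}$ and $\nipol{N_0} \entails \lnotGR$, and the syntactic invariant becomes $\lit{\nipol{N_0}} \subseteq \lit{\FL} \cap \complit{\GR}$. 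From the latter the Lyndon vocabulary condition $\voc{\nipol{N_0}} \subseteq \voc{\FL} \cap \voc{\lnotGR}$ follows routinely: $\nipol{N_0}$ is a ground NNF formula, hence has no variables and no functions or polarity-predicate pairs beyond those contributed by its literals; $\lit{\nipol{N_0}} \subseteq \lit{\FL}$ gives $\pred{\nipol{N_0}} \subseteq \pred{\FL}$ and $\fun{\nipol{N_0}} \subseteq \fun{\FL}$, while $\lit{\nipol{N_0}} \subseteq \complit{\GR}$ gives, because complementation flips polarity exactly as prefixing $\lnot$ does, $\pred{\nipol{N_0}} \subseteq \pred{\lnotGR}$ and $\fun{\nipol{N_0}} \subseteq \fun{\lnotGR}$.

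Base case. First note that in a leaf-closed tableau every leaf is closing: a closed branch contains a node whose complementary literal occurs at an ancestor, that node is closing, hence a leaf by leaf-closingness, hence the branch's leaf; so $\ntgt{N}$ is well defined, with $\nlit{\ntgt{N}} = \du{\nlit{N}}$, and therefore $\nlit{N} \in \npath{\nside{N}}{N}$ and $\du{\nlit{N}} \in \npath{\nside{\ntgt{N}}}{N}$. When $\nside{N} = \nside{\ntgt{N}}$, that common side's path contains both $\nlit{N}$ and $\du{\nlit{N}}$ and is thus unsatisfiable, so the value $\nipol{N} \in \{\false, \true\}$ satisfies one semantic conjunct by unsatisfiability and the other trivially, and the syntactic claim holds because $\lit{\false} = \lit{\true} = \emptyset$. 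When the sides differ, $\nipol{N}$ is $\nlit{N}$ or $\du{\nlit{N}}$; the two membership facts above give both semantic entailments (the literal lies on one side's path, its complement on the other's) and the syntactic inclusion directly.

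Inductive step. Let $N$ be an inner node with children $N_1, \ldots, N_n$, all of one common side. If that side is $\aaa$, then by the side condition in the definition of a \sided clausal tableau the ground clause $\nclause{N} = \nlit{N_1} \lor \cdots \lor \nlit{N_n}$ is a clause of $\FL$, so $\FL \entails \nclause{N}$ and each $\nlit{N_i} \in \lit{\FL}$; also $\npathL{N_i} = \npathL{N} \cup \{\nlit{N_i}\}$ while $\npathR{N_i} = \npathR{N}$. The hypotheses $\FL \cup \npathL{N} \cup \{\nlit{N_i}\} \entails \nipol{N_i}$ together with $\FL \cup \npathL{N} \entails \nclause{N}$ yield $\FL \cup \npathL{N} \entails \bigvee_i \nipol{N_i}$ by splitting, in any model of $\FL \cup \npathL{N}$, on which disjunct of $\nclause{N}$ is satisfied there; and the hypotheses $\GR \cup \npathR{N} \entails \lnot \nipol{N_i}$ for all $i$ immediately give $\GR \cup \npathR{N} \entails \lnot \bigvee_i \nipol{N_i}$; up to the truth-value simplifications built into the definition of $\nipol{N}$ as the dotted disjunction of the $\nipol{N_i}$, these are the two semantic conjuncts. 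For the syntactic part, $\lit{\nipol{N}} \subseteq \bigcup_i \lit{\nipol{N_i}}$ and each $\lit{\nipol{N_i}} \subseteq \lit{\FL \cup \npathL{N} \cup \{\nlit{N_i}\}} \cap \complit{\GR \cup \npathR{N}}$; since $\nlit{N_i} \in \lit{\FL}$ already, $\lit{\FL \cup \npathL{N} \cup \{\nlit{N_i}\}} = \lit{\FL \cup \npathL{N}}$, so the inclusion collapses to what is required. The case where the common side is $\bbb$ is entirely dual: $\nclause{N}$ is a clause of $\GR$, so $\nlit{N_i} \in \lit{\GR}$ and $\du{\nlit{N_i}} \in \complit{\GR}$, $\npathR{N_i} = \npathR{N} \cup \{\nlit{N_i}\}$ while $\npathL{N_i} = \npathL{N}$, $\nipol{N}$ is the dotted conjunction of the $\nipol{N_i}$, and $\FL$ and $\GR$, $\land$ and $\lor$, $\npathL{\cdot}$ and $\npathR{\cdot}$ trade places throughout.

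I expect the genuine obstacle to lie in the syntactic invariant rather than the semantic one. The semantic step is the familiar propositional tableau/sequent interpolation argument -- cut the path against the parent tableau clause, plus De Morgan -- whereas the syntactic step only closes because the single literal that would otherwise escape the vocabulary bound when passing from a child $N_i$ up to its parent $N$, namely the literal $\nlit{N_i}$ that has just entered $\npathL{N_i}$ or $\npathR{N_i}$, is guaranteed to occur in $\FL$ (respectively $\GR$) by being a literal of the parent clause $\nclause{N}$, and is therefore silently absorbed. Identifying and discharging this absorption on the correct side is the crux; the remainder is careful bookkeeping with $\npathL{\cdot}$, $\npathR{\cdot}$, $\lit{\cdot}$, $\complit{\cdot}$ and the dotted connectives, together with the translation of the literal-set invariant into the vocabulary condition needed for the corollary.
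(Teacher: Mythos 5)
Your proposal is correct and follows the paper's route exactly: the corollary is obtained by specializing Lemma~\ref{lem-ground-ipol-invariant} to the root, where $\npathL{N_0} = \npathR{N_0} = \emptyset$, and then reading off the Lyndon vocabulary condition from the literal-set inclusion. Your additional bottom-up induction establishing the lemma (with the key absorption of the fresh path literal $\nlit{N_i}$ into $\lit{\FL}$ resp.\ $\complit{\GR}$ via the parent clause) is sound and matches the standard argument the paper relies on but does not spell out.
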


A ground interpolant according to Corollary~\ref{cor-ground-ipol-correct},
i.e., the value of $\nipolfun$ for the tableau root, is a ground NNF formula.
Due to the integrated truth value simplification it does not have truth value
constants as strict subformulas. The number of its literal occurrences is at
most the number of tableau leaves.

\subsection{The Two-Stage Approach: Overall Workflow}
\label{sec-two-stage-overall}

We illustrate the overall workflow of the two-stage approach to Craig
interpolation with Algorithm~\ref{algo-two-stage} below, assuming first-order
proving and ground interpolation is performed with a clausal tableau prover as
described above. In Sect.~\ref{sec-resolution} we will see that also a
resolution-based system can be used, either directly or supplemented with a
proof translation to clausal tableaux. Some subtasks can be performed in
alternate ways, possibly with substantial effect on success of the prover as
well as on size and shape of the resulting interpolant. The keyword
\OPTIONS\ introduces discussions of such alternatives.

\begin{algo}[Craig-Lyndon Interpolation for First-Order
    Logic with Clausal Tableaux]
  \label{algo-two-stage}
 \

 \algoinput First-order formulas $F, G$ such that $F \entails G$.

 \algooutput A Craig-Lyndon interpolant $H$ for $F,G$.

 \algomethod The algorithm proceeds in the following phases.
 \begin{enumerate}[I.]
 \item \label{step-fv-const} \textbf{Eliminating Free Variables.} Replace free
   variables in $F, G$ with fresh constants to obtain sentences $F^S, G^S$.
 \item \label{step-clausification} \textbf{Preprocessing: Skolemization, Clausification, Simplification.}
   Preprocess each of $F^S, \lnot G^S$ separately, as outlined in
   Sect.~\ref{sec-preprocessing} to obtain clause sets $F', G'$ with fresh
   Skolem functions $\ffs', \ggs'$. In case $F'$ ($G'$) contains the empty
   clause, exit with $H \eqdef \false$ ($H \eqdef \true$). \OPTIONS: Various
   forms of Skolemization; various ways of CNF conversion, including
   structure-preserving forms, and various simplifications, constrained by
   Prop.~\ref{prop-interpol-constraints}.
 \item \label{step-proving} \textbf{First-Order Proving.} Use a first-order
   prover to obtain a closed clausal tableau for $F' \cup G'$. \OPTIONS:
   Choice and configuration of the first-order prover.
 \item \label{step-grounding} \textbf{Proof Grounding.} Instantiate all
   variables in the tableau with ground terms.
   \OPTIONS: The choice of the terms
   for instantiating has effect on the interpolant calculated in
   phase~\ref{step-extract} and on the lifting in phase~\ref{step-lifting}. It
   is possible to use the same term for all variables, or different terms for
   different variables. Terms with \FONLY, \GONLY or \FGSHARED functions can
   be preferred.
 \item \label{step-side-assignment} \textbf{Side Assignment.} Attach side
   labels to the tableau clauses, according to whether they are an instance of
   a clause in $F'$ or in $G'$. \OPTIONS: It is possible that a clause in $F'$
   and a clause in $G'$ are identical or have common instances. Thus, there
   are cases where a tableau clause can be assigned either one of the two side
   labels. The choice can have effect on the ground interpolant calculated in
   phase~\ref{step-extract}.
 \item \label{step-extract} \textbf{Ground Interpolant Calculation.} Calculate
   the ground interpolant $\HG$ with the $\nipolfun$ function applied to the
   tableau. $\la F^S,G^S,\ffs,\ggs,\HG\ra$ is then a lifting base, where $\ffs,
   \ggs$ are determined from the Skolem functions $\ffs', \ggs'$ and
   $\fun{F^S}, \fun{G^S}$, as described after Def.~\ref{def-ib}. \OPTIONS:
   Incorporation of equivalence-preserving simplifications, including
   dedicated simplifications for equality, e.g., with $t = t \equiv \true$.
 \item \label{step-lifting} \textbf{Interpolant Lifting.} Let $H^S$ be the
   result of replacing terms in $\HG$ and adding a quantifier prefix
   according to Theorem~\ref{thm-lifting}
   with respect to the obtained lifting base. \OPTIONS:
   Theorem~\ref{thm-lifting} constrains the quantifier prefix by a partial
   order that may have different linear extensions.

 \item \label{step-constants-to-vars} \textbf{Reintroducing Free Variables.}
   Obtain the final result $H$ by replacing in $H^S$ any constants introduced
   in step~\ref{step-fv-const} with the corresponding free variables.
 \end{enumerate}
\end{algo}

\enlargethispage{5pt}

The following simple example illustrates the steps of the two-stage approach
to Craig interpolation with Algorithm~\ref{algo-two-stage}.

\begin{example}
  \label{examp-ct-ipol}
  Let $F \eqdef \forall x\, \fp(x) \land \forall x\, (\lnot \fp(x) \lor
  \fq(x))$ and let $G \eqdef \forall x\, (\lnot \fq(x) \lor \fr(x)) \imp
  \forall y\, \fr(y)$. Since these formulas do not have free variables, we
  skip steps~\ref{step-fv-const} and~\ref{step-constants-to-vars}.
  Step~\ref{step-clausification}, Skolemization and clausification applied
  separately to $F$ and to $\lnot G$, yields the first-order clause sets $F' =
  \{\fp(x),\; (\lnot \fp(x) \lor \fq(x))\}$ and $G' = \{(\lnot \fq(x) \lor
  \fr(x)),\; \lnot \fr(\fg)\}$, where $\fg$ is a Skolem constant. In
  step~\ref{step-proving}, first-order proving, the connection tableau
  calculus (Def.~\ref{def-calc-ser}) for start clause $\lnot \fr(\fg)$ yields
  the following leaf-closed tableau for $F' \cup G'$ (additional node
  annotations will be explained in a moment).

  \begin{tikzpicture}[scale=0.83,
      sibling distance=9em,level distance=6.8ex,
      every node/.style = {transform shape,anchor=mid}]]
      \node (a) {\vbar\textbullet}
      child { node {$\tbbb{\lnot \fr(\fg)}$\nannot{$\fq(\fg)$}}
        child { node {$\tbbb{\lnot \fq(\fg)}$\nannot{$\fq(\fg)$}}
          child { node {$\taaa{\lnot \fp(\fg)}$\nannot{$\false$}}
            child { node {$\taaa{\fp(\fg)}$\nannot{$\false$}} }
          }
          child { node {$\taaa{\fq(\fg)}$\nannot{$\fq(\fg)$}} }
        }
        child { node {$\tbbb{\fr(\fg)}$\nannot{$\true$}}
        }
      };
  \end{tikzpicture}

  \noindent
  The calculus propagates the constant $\fg$ through unification into
  variables so that in case of the example the tableau is already ground,
  leaving nothing to do for step~\ref{step-grounding}, proof grounding.
  Step~\ref{step-side-assignment}, side assignment, leaves no options as each
  tableau clause is either an instance of a clause in $F'$ or of a clause in
  $G'$, but never of a clause in both. Thus, the clauses of the tableau with
  side $\aaa$ are $\fp(\fg)$ and $\lnot \fp(\fg) \lor \fq(\fg)$, and the
  clauses with side $\bbb$ are $\lnot \fq(\fg) \lor \fr(\fg)$ and $\lnot
  \fr(\fg)$. The respective side labels of the nodes are indicated as
  superscripts.
  To give examples for $\npath{\xside}{N}$, let $N$ be the bottom left node,
  which has literal $\fp(\fg)$. It then holds that $\npath{\aaa}{N} = \{\lnot
  \fp(\fg), \fp(\fg)\}$ and $\npath{\bbb}{N} = \{\lnot \fr(\fg), \lnot
  \fq(\fg)\}$.

  We now have a two-sided leaf-closed ground tableau for $F', G'$ and can, in
  step~\ref{step-extract}, calculate the ground interpolant $\HG$ with the
  $\nipolfun$ function. The values of $\nipolfun$ for the individual nodes are
  annotated in brackets. Its value for the root is $\HG = \fq(\fg)$.
  The tuple $\la F, G, \{\}, \{\fg\}, \fq(\fg) \ra$ forms an interpolant
  lifting base $\la F, G, \ffs, \ggs, \HG\ra$. This holds in general for
  $\la F, G, \ffs, \ggs, \HG\ra$ obtained with the described steps of
  Algorithm~\ref{algo-two-stage}. In our example, we can verify this with $\FE
  = \fp(u) \land (\lnot \fp(u) \lor \fq(u))$, $\GE = \lnot ((\lnot \fq(v) \lor
  \fr(v)) \land \lnot \fr(\fg))$ and $\sth = \{u \mapsto \fg,\; v \mapsto
  \fg\}$.
  Finally, in step~\ref{step-lifting}, interpolant lifting, we apply
  Theorem~\ref{thm-lifting} and obtain $H = \forall v_1\, \fq(v_1)$ as
  a Craig-Lyndon interpolant for $F,G$.
  \lipicsEnd
\end{example}

\section{Ground Interpolation with Resolution}
\label{sec-resolution}

We discuss resolution as a technique for first-order theorem proving and
ground interpolation in the two-stage approach, and relate it to clausal
tableaux in these roles.

\subsection{From a Deduction via a Deduction Tree to a Ground Deduction}

We consider a simple sound and complete first-order resolution calculus, which
we call $\RCALC$. It has the two following two inference rules.

\smallskip
    \begin{tabular}[t]{ll}
    \textbf{Binary Resolution} &
    $\begin{array}{cc}
      C \lor L & D \lor K\\\midrule
      \multicolumn{2}{c}{(C \lor D)\sigma}
    \end{array}$\\[2.5ex]
    \multicolumn{2}{p{6.5cm}}{where $\sigma$ is the most general unifier of
      $L$ and $\du{K}$}
    \end{tabular}
    \hspace{\fill}
    \begin{tabular}[t]{ll}
    \textbf{Factoring} &
    $\begin{array}{c}
      C \lor L \lor K\\\midrule
      (C \lor L)\sigma
    \end{array}$\\[2.5ex]
    \multicolumn{2}{p{6.5cm}}{where $\sigma$ is the most general unifier of
      $L$ and $K$}
  \end{tabular}

\medskip

\noindent
Clauses are considered here as \emph{multisets} of literals. It is assumed
that premises of binary resolution do not share variables, which is achieved
by renaming variables if necessary.
The conclusion of binary resolution is called \defname{resolvent}
\defname{upon} $L, K$, the conclusion of factoring \defname{factor with
  respect to} $L, K$.
The notion of \name{proof} is captured in the following definition.
\begin{definition}[Deduction]
Let $\ICALC$ be a calculus characterized by a set of inference rules for
clauses. An \defname{$\ICALC$-deduction} of a clause $C$ from a set $F$ of
clauses is a
sequence of clauses $C_1, \ldots, C_k = C$ such that each $C_i$ is either in
$F$ ($C_i$ is then called an \defname{input clause}), or the conclusion of an
inference rule of $\ICALC$ for premises preceding $C_i$. An $\ICALC$-deduction
of the empty clause $\false$ from $F$ is called
\defname{$\ICALC$-refutation}, or \defname{$\ICALC$-proof} of~$F$.
\end{definition}
The $\RCALC$-calculus is \name{sound}: if there is an $\RCALC$-refutation of
$F$, then $F$ is unsatisfiable, which follows since for both inference rules
the conclusion is entailed by the premises. The $\RCALC$-calculus is also
complete: if $F$ is unsatisfiable, then there is an $\RCALC$-refutation of $F$
\cite{chang:lee,resolution:handbook:2001}.

\begin{example}
  \label{examp-deduction}
  Let $F$ be the clause set $\{\lnot \fp(x) \lor \fp(\ff(x)),\; \fp(\fg(x)),\;
  \lnot \fp(\ff(\ff(\ff(\ff(\fg(x))))))\}$, which is unsatisfiable. The
  following table shows an $\RCALC$-deduction of $\false$ from $F$. We use
  $\ff^2(\_)$ as shorthand for $\ff(\ff(\_))$, and analogously
  $\ff^4(\_)$ for $\ff(\ff(\ff(\ff(\_))))$.

  \smallskip

  {\small
  \begin{tabular}[b]{lll}
    Clause-Id & Clause & Justification\\\midrule
    $C_1$ & $\lnot \fp(x) \lor \fp(\ff(x))$ & Input clause\\
    $C_2$ & $\fp(\fg(x))$ & Input clause\\
    $C_3$ & $\lnot \fp(\ff^4(\fg(x)))$ & Input clause\\
    $C_4$ & $\lnot \fp(x) \lor \fp(\ff^2(x))$ & Resolvent of $C_1$ and $C_1$\\
    $C_5$ & $\lnot \fp(x) \lor \fp(\ff^4(x))$ & Resolvent of $C_4$ and $C_4$\\
    $C_6$ & $\fp(\ff^4(\fg(x)))$ & Resolvent of $C_2$ and $C_5$\\
    $C_7$ & $\false$ & Resolvent of $C_6$ and $C_3$
  \end{tabular}}
\lipicsEnd
\end{example}

Binary resolution and factoring can be combined into a single inference rule,
as in the original presentation of resolution \cite{robinson:1965}. Numerous
refinements of resolution aim at improving proof search by reducing the vast
number of deductions that can be generated.
Interpolant calculation, however, starts from a \emph{given proof}. Many
resolution refinements can be translated to our two basic rules. \ProverN
actually comes with a tool to convert its proofs to basic rules. Thus,
interpolant calculation for just a basic form of resolution does not exclude
employing advanced resolution refinements at \emph{proof search}.

Our interpolant calculation operates on a resolution deduction that has only
ground clauses. It is obtained via expanding the given first-order
$\RCALC$-deduction into a deduction \emph{tree}.
\begin{definition}[Deduction Tree]
  \label{def-r-deduction-tree}
  An \defname{$\ICALC$-deduction tree} of a clause $C$ is an (upward) tree with
  nodes labeled by clauses, such that the clause of the root is $C$ and the
  clause of any inner node is the conclusion of an inference rule of $\ICALC$
  from the clauses of the parents as premises.
\end{definition}
Variables in a deduction tree have global scope, as in a clausal tableau, and
differently from a deduction. Since the deduction tree expands the DAG
structure of the premise/conclusion relationship represented by the deduction
into a tree, the number of nodes of the deduction tree may be exponentially
larger than the number of clauses of the deduction.
The construction of a deduction tree from a given deduction can be sketched as
follows. We start with creating the leaf nodes of the deduction tree, one leaf
node for each \emph{instance} of an input clause in the deduction, each leaf
node with a \emph{fresh copy} of the respective input clause such that no
variables are shared between leaves. Then we proceed downwards to the root by
attaching children according to the inferences in the deduction, however, now
without renaming variables before applying inference rules. Instead, we
compute the most general unifier and apply it to all variables in the tree
under construction, such that also occurrences in ancestor nodes are
substituted.\footnote{We assume here w.l.o.g. that for the most general
unifier its domain and the set of variables occurring in its range are
disjoint subsets of the set of variables in the unified terms
\cite[Rem.~4.2]{eder:subst:1985}.}

\begin{example}
  \label{examp-deduction-tree}
  The $\RCALC$-deduction of Example~\ref{examp-deduction} expands into the
  following $\RCALC$-deduction tree. We may assume that variable $y$ stems
  from the sole involved copy of $C_2$.

  \medskip

  \noindent
  \scalebox{0.78}{
  \begin{tikzpicture}[scale=1.0,
      baseline=(a.north),sibling distance=13em,level distance=2.5em]
    \node (a) {$\false$} [grow'=up,sibling distance=13em]
    child {node {$\fp(\ff^4(\fg(y)))$}
      child {node {$\fp(\fg(y))$}}
      child {node {$\lnot \fp(\fg(y)) \lor \fp(\ff^4(\fg(y)))$}
        [sibling distance=27em]
        child {node {$\lnot \fp(\fg(y)) \lor \fp(\ff^2(\fg(y)))$}
          [sibling distance=13em]
          child {node {$\lnot \fp(\fg(y)) \lor \fp(\ff(\fg(y)))$}}
          child {node {$\lnot \fp(\ff(\fg(y))) \lor \fp(\ff^2(\fg(y)))$}}}
        child {node {$\lnot \fp(\ff^2(\fg(y)))) \lor
            \fp(\ff^4(\fg(y)))$}
          [sibling distance=13em]
          child {node {$\lnot \fp(\ff^2(\fg(y))) \lor \fp(\ff^3(\fg(y)))$}}
          child {node {$\lnot \fp(\ff^3(\fg(y))) \lor \fp(\ff^4(\fg(y)))$}}
          }}
    }
    child {node {$\lnot \fp(\ff^4(\fg(y)))$}
    };
  \end{tikzpicture}}

\vspace{-8pt}  
\lipicsEnd
\end{example}

We now move to \emph{ground} resolution, with the ground resolution calculus
$\GRCALC$ that operates on sets of ground clauses. It has the following two
inference rules.

\smallskip

    \begin{tabular}[t]{ll}
    \textbf{Ground Resolution} &
    $\begin{array}{cc}
      C \lor L & D \lor \du{L}\\\midrule
      \multicolumn{2}{c}{C \lor D}
    \end{array}$
    \end{tabular}
    \hspace{2em}
    \begin{tabular}[t]{ll}
    \textbf{Merging} &
    $\begin{array}{c}
      C \lor L \lor L\\\midrule
      C \lor L
    \end{array}$
  \end{tabular}

\medskip

\noindent The conclusion of merging is called \defname{merge with respect to}
$L$. $\GRCALC$- and $\RCALC$-deductions can be related via their
deduction tree as follows.
\begin{definition}[$\GRCALC$-Deduction for an $\RCALC$-Deduction]
  \label{def-rgr-conversion}
  Let $F$ be a set of clauses. An $\GRCALC$-deduction~$\DED$ from a set
  $F_{\textsc{grd}}$ of ground clauses is said to be \defname{for} an
  $\RCALC$-deduction~$\DEDR$ from a set $F$ of clauses if $\DED$ can be
  obtained from $\DEDR$ by expansion to an $\RCALC$-Deduction tree,
  instantiating all variables in this $\RCALC$-deduction tree by ground terms,
  which yields a $\GRCALC$-deduction tree,
  followed by converting this $\GRCALC$-deduction tree
  into a $\GRCALC$-deduction.
\end{definition}

Definition~\ref{def-rgr-conversion} is expressed in terms of transformations,
which for practical interpolation have to be implemented. For instantiating
with ground terms there are options as discussed for
phase~\ref{step-grounding} of Algorithm~\ref{algo-two-stage}. The final
conversion of the $\GRCALC$-deduction tree to a $\GRCALC$-deduction may be
just a linearizing of the tree structure. In case different nodes are labeled
with identical ground clauses, it might be shortened to a representation as
DAG.

\subsection{Interpolant Calculation from a Ground Deduction}

To compute a Craig-Lyndon interpolant for first-order formulas $F,G$ such that
$F \entails G$ on the basis of a resolution proof we have to build an
interpolant lifting base $\la F, G, \ffs, \ggs, \HG \ra$.
\begin{proposition}[Deduction and Lifting Base]
  \label{prop-lifting-ded}
Let $F, G$ be formulas such that $F \entails G$. Let $F',G'$ are obtained from
$F, \lnot G$ by prenexing, Skolemization, and CNF transformation. Let $\DED$
be an $\GRCALC$-proof for an $\RCALC$-proof of $F' \cup G'$. Let $\ffs$
($\ggs$) be the union of the Skolem functions $\ffs'$ ($\ggs'$) introduced for
$F$ ($G$) and the \FONLY (\GONLY) functions. If, so far, neither in $\ffs$,
nor in $\ggs$, nor in the $\FGSHARED$ functions is a constant, then add a
fresh constant $c_0$ to either $\ffs$ or $\ggs$. Let $\FGRD$ ($\GGRD$) be the
input clauses of $\DED$ that are instances of $F'$ ($G'$). If $\HG$ is a
Craig-Lyndon interpolant of $\FGRD, \lnot \GGRD$, then $\la F, G, \ffs, \ggs,
\HG\ra$ is an interpolant lifting base.
\end{proposition}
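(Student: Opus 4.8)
The plan is to verify directly that $\la F, G, \ffs, \ggs, \HG\ra$ meets every requirement of Definition~\ref{def-ib}, i.e.\ to produce quantifier-free formulas $\FE(\us), \GE(\vs)$ with disjoint tuples of free variables and a ground substitution $\sth$ witnessing the two semantic entailments, the two predicate-inclusions, the two function-inclusions, the two disjointness requirements, the equation $\dom{\sth} = \us \cup \vs$, the condition on $\fun{\rng{\sth}}$, and the condition that $\HG$ be a Craig-Lyndon interpolant for $\FE(\us)\sth, \GE(\vs)\sth$. The data $\FE, \GE, \sth$ are to be read off from the construction underlying $\DED$. Unfolding Definition~\ref{def-rgr-conversion}, there is an $\RCALC$-deduction tree $T$ of $\false$ from $F' \cup G'$ whose leaves carry pairwise variable-disjoint fresh copies of input clauses, some being copies of clauses of $F'$ and the rest copies of clauses of $G'$; there is the composition $\mu$ of the most general unifiers applied while expanding the $\RCALC$-proof into $T$; and there is a ground substitution $\theta$ instantiating the variables that remain in $T$ after $\mu$, so that $\DED$ linearises the $\GRCALC$-deduction tree obtained from $T$ by applying $\mu$ and then $\theta$. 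I would then put $\FE(\us) \eqdef$ the conjunction of the $F'$-copies among the leaves of $T$, with $\us$ the tuple of their variables; $\NE(\vs) \eqdef$ the conjunction of the $G'$-copies, $\GE(\vs) \eqdef \lnot \NE(\vs)$, with $\vs$ the tuple of their variables, disjoint from $\us$ by freshness of the copies; and $\sth \eqdef (\mu\theta)|_{\us\cup\vs}$, which is ground since $\theta$ grounds $T$ after $\mu$. Both $\FE(\us)$ and $\GE(\vs)$ are quantifier-free.

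With these choices $\dom{\sth} = \us\cup\vs$ holds by construction. Moreover $\FE(\us)\sth$ is, up to repetition of whole clauses, exactly the set $\FGRD$ of those input clauses of $\DED$ that are instances of $F'$, and likewise $\NE(\vs)\sth$ is $\GGRD$, so $\GE(\vs)\sth$ equals $\lnot\GGRD$ up to clause repetition. Since deleting duplicated clauses changes neither the models nor the predicates, functions, or literals of a clause set, being a Craig-Lyndon interpolant for $\FGRD, \lnot\GGRD$ coincides with being one for $\FE(\us)\sth, \GE(\vs)\sth$, and the hypothesis on $\HG$ gives the interpolant condition verbatim. For the condition on $\fun{\rng{\sth}}$ I would note that the Herbrand grounding $\theta$ may be taken to range only over terms built from functions occurring in the input clauses of $\DED$ — i.e.\ in $\fun{\FE(\us)} \cup \fun{\GE(\vs)}$ — together with the fresh constant $c_0$ when one was introduced; and the construction places $c_0$ into $\ffs$ or $\ggs$ precisely in the case where neither of those nor the \FGSHARED functions already contains a constant, so a constant of $\ffs\ggs$ is always available to play the role of $c_0$ there.

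It remains to check the semantic and vocabulary conditions, which follow by tracking the preprocessing. Prenexing, second-order Skolemization and CNF transformation are equivalence-preserving, so $F \equiv \exists\ffs'\forall\us'\,F'$ and $\lnot G \equiv \exists\ggs'\forall\vs'\,G'$, where $\ffs', \ggs'$ are the Skolem functions. The universal closure of the clause set $F'$ entails the universal closure of any variant of any of its clauses, whence $\forall\us'\,F' \entails \forall\us\,\FE(\us)$; pushing this under $\exists\ffs'$ and weakening from $\ffs'$ to the superset $\ffs$ (prefixing additional existential function quantifiers is sound weakening) yields $F \entails \exists\ffs\forall\us\,\FE(\us)$. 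Dually, $\forall\vs'\,G' \entails \forall\vs\,\NE(\vs)$ gives $\lnot G \entails \exists\ggs\forall\vs\,\NE(\vs)$, i.e.\ $\forall\ggs\exists\vs\,\GE(\vs) \entails G$. Prenexing and Skolemization change neither the predicates of a formula nor their polarities, and the remaining CNF steps (eliminating $\imp$ and $\equi$, NNF conversion, distribution, truth-value simplification) can only shrink the polarity-predicate set, so $\pred{\FE(\us)} \subseteq \pred{F'} \subseteq \pred{F}$, and $\pred{\GE(\vs)} = \overline{\pred{\NE(\vs)}}$ (negation reverses all polarities) $\subseteq \overline{\pred{G'}} \subseteq \overline{\pred{\lnot G}} = \pred{G}$. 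Likewise Skolemization adds only $\ffs'$ to $\fun{F}$ and only $\ggs'$ to $\fun{G}$, so $\fun{\FE(\us)} \subseteq \fun{F'} \subseteq \fun{F}\cup\ffs' \subseteq (\fun{F}\cap\fun{G})\cup\ffs$ and symmetrically for $\GE$. Finally $\ggs$ consists of the fresh $\ggs'$, the \GONLY functions $\fun{G}\setminus\fun{F}$, and possibly the fresh $c_0$, none of which lies in $\fun{F}$, so $\fun{F}\cap\ggs = \emptyset$, and symmetrically $\fun{G}\cap\ffs = \emptyset$.

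I expect the main obstacle to be the bookkeeping in the first paragraph: making the passage from the $\RCALC$-proof through the deduction tree to the ground deduction precise enough that $\FE(\us)\sth$ and $\GE(\vs)\sth$ provably line up with $\FGRD$ and $\lnot\GGRD$ modulo duplicated clauses, since it is exactly this identification that transfers the interpolant condition from the hypothesis on $\HG$. A lesser subtlety is the polarity-preservation claim for CNF transformation needed for the Lyndon (polarity-sensitive) predicate condition, and the corner case in the condition on $\fun{\rng{\sth}}$ concerning whether a constant of $\ffs\ggs$ is genuinely available.
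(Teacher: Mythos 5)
Your route is the one the paper itself relies on: it gives no separate proof of Prop.~\ref{prop-lifting-ded} but appeals to the abstract construction following Def.~\ref{def-ib}, viewing the ground input clauses of $\DED$ as Herbrand instances obtained from fresh clause copies by a single ground substitution; your verification of the individual conditions of Def.~\ref{def-ib} spells out exactly that argument, at the same level of rigor as the paper (which defers the detailed lifting proof to its reference for interpolant lifting).

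There is, however, one concrete slip at the step you yourself flag as the crux, namely the identification of $\FE(\us)\sth$ with $\FGRD$ and of $\NE(\vs)\sth$ with $\GGRD$ up to duplicated clauses. You build $\FE$ and $\NE$ from the \emph{provenance} of the leaves of the deduction tree, whereas the proposition defines $\FGRD$ and $\GGRD$ purely by the property of being an instance of a clause of $F'$ resp.\ $G'$, and these sets may overlap: a ground input clause that stems in the tree only from a copy of a $G'$-clause can still be an instance of an $F'$-clause (the paper explicitly notes such common instances in the side-assignment step of Algorithm~\ref{algo-two-stage}). In that situation $\FE(\us)\sth \subsetneq \FGRD$ and $\NE(\vs)\sth \subsetneq \GGRD$ are possible, and then the hypothesis $\FGRD \entails \HG$ does not give $\FE(\us)\sth \entails \HG$, nor does $\HG \entails \lnot \GGRD$ give $\HG \entails \GE(\vs)\sth$, so the interpolant condition of Def.~\ref{def-ib} does not transfer as you claim. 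The repair is easy and leaves the rest of your argument untouched: since Def.~\ref{def-ib} only requires the \emph{existence} of suitable $\FE, \GE, \sth$, generate the copies from $\FGRD$ and $\GGRD$ themselves rather than from the tree --- for each clause of $\FGRD$ take a fresh copy of a matching clause of $F'$, likewise for $\GGRD$ and $G'$, and assemble $\sth$ from the corresponding matchers, sending copy variables that do not occur in their clause to an available constant as in your $c_0$ discussion. Then $\FE(\us)\sth = \FGRD$ and $\NE(\vs)\sth = \GGRD$ hold exactly, the deduction is needed only to supply these two ground clause sets, and your checks of the semantic, polarity, function and range conditions go through unchanged.
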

The construction of $\ffs$ and $\ggs$ is immediate from
Prop.~\ref{prop-lifting-ded}. To complete the lifting base we calculate a
ground interpolant $\HG$ from the ground deduction $\DED$. To this end we
enhance literal occurrences in proofs with a \name{provenance} label,
analogous to the \name{side} in clausal tableaux.

\begin{definition}[Provenance-Labeling and Related Notions]
  \

  \subdefinition{def-provclause} A clause (considered as multiset of literals)
  is \defname{provenance-labeled} if each of its literals is associated with a
  \defname{provenance label}, a nonempty subset of $\{\FF, \GG\}$. A literal
  $L$ with provenance label $\prov$ is written $L^{\prov}$.

  \subdefinition{def-provresol} A \defname{provenance-labeled ground
    resolution proof} for sets $F, G$ of ground clauses is a ground resolution
  proof for $F \cup G$, where each literal in an input clause from $F$ has
  provenance label $\{\FF\}$, each literal in an input clause from $G$ has
  provenance label $\{\GG\}$, and provenance labels are propagated as follows:
  For a resolvent $C \lor D$ the provenance labels are taken from $C, D$ in
  the premises; for a merge $C \lor L$, the provenance label of $L$ is the
  union of the provenance labels of the two merged occurrences of $L$ in the
  premise, and the provenance labels of $C$ are taken from $C$ in the premise.

  \subdefinition{def-subclause} For $\mathit{\prov} \in \{\FF, \GG\}$ and
  provenance-labeled clause~$C = \bigvee_i^n L_i^{\prov_i}$ define
  \[\rsubclause{\mathit{\prov}}{C} \eqdef \bigvee_{\mathit{\prov} \in \prov_i} L_i.\]

\end{definition}

\begin{example}
  Let $C = p^{\{\FF\}} \lor q^{\{\GG\}} \lor r^{\{\FF,\GG\}} \lor
  s^{\{\FF\}} \lor s^{\{\GG\}}$, then $\rsubclause{\FF}{C} = p \lor r \lor s$
  and $\rsubclause{\GG}{C} = q \lor r \lor s$.
\lipicsEnd
\end{example}

\begin{definition}[Ground Interpolant Calculation from a Resolution Proof]
  \label{def-ripol}
  Let $C$ be a clause in a provenance-labeled ground resolution proof for
  clausal ground formulas $F, G$. The value of \defname{$\ripol(C)$} is a
  ground NNF formula, defined inductively as follows. For the base cases where
  $C$ is an input clause and the case where $C$ is obtained by merging, the
  value of $\ripol(C)$ is specified in the following table.

  \smallskip
  $\begin{array}{ll}
    \text{Case} & \ripol(C)\\\midrule
    C \text{ is an input clause } \bigvee_{i = 1}^n L_i^{\{\FF\}} \text{ from } F & \false\\
    C \text{ is an input clause } \bigvee_{i = 1}^n L_i^{\{\GG\}} \text{ from } G & \true\\
    C \text{ is obtained as merge from premise } D & \ripol(D)\\
  \end{array}$
  \smallskip

  \noindent
  For the case where $C \lor D$ is obtained as resolvent from premises $C \lor
  L^{\prova}$ and $D \lor \duh{L}^{\provb}$, the value of $\ripol(C \lor D)$ is
  specified in the following table. It depends on the provenance labels
  $\prova,\provb$ of the literals resolved upon. We use the shorthands $H_1 =
  \ripol(C \lor L^{\prova})$ and $H_2 = \ripol(D \lor \duh{L}^{\provb})$. For
  two of the subcases, alternate possibilities are given.

  \smallskip
  $\begin{array}
    {ccll}
    \prova & \provb & \ripol(C \lor D) & \text{Remark}\\\midrule
    \{\FF\} & \{\FF\} & H_1 \lor H_2\\
    \{\FF\} & \{\GG\} & H_1 \lor (L \land H_2)\\
    \{\FF\} & \{\GG\} & (L \lor H_1) \land H_2 & \text{Alternate possibility}\\
    \{\FF\} & \{\FF, \GG\} & H_1 \lor (L \land H_2)\\
    \{\GG\} & \{\GG\} & H_1 \land H_2\\
    \{\GG\} & \{\FF, \GG\} & H_1 \land (\du{L} \lor H_2)\\
    \{\FF, \GG\} & \{\FF, \GG\} & (\du{L} \land H_1) \lor (L \land H_2)\\
    \{\FF, \GG\} & \{\FF, \GG\} & (L \lor H_1) \land (\du{L} \lor H_2) &
    \text{Alternate equivalent possibility}\\
  \end{array}$
  \medskip
\end{definition}

\noindent
The key property of $\ripol(C)$ that holds for all clauses $C$ of the
provenance-labeled resolution proof is stated in the following lemma.

\begin{lemma}[Invariant of Ground Interpolant Calculation from a Resolution Proof]
  \label{lem-ground-ripol-invariant}
  Let $F, G$ be sets of ground clauses and let $C$ be a provenance-labeled
  clause of a resolution proof for $F, G$. It then holds that
  \begin{enumerate}[(i)]
  \item \label{item-gri-f} $F \entails \ripol(C) \lor \rsubclause{\FF}{C}\;$ and
    $\;G \entails \lnot \ripol(C) \lor \rsubclause{\GG}{C}$.
  \item \label{item-gri-lit} $\lit{\ripol(C)} \subseteq \lit{F} \cap \complit{G}$.
  \end{enumerate}
\end{lemma}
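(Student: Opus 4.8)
The plan is to prove both parts of Lemma~\ref{lem-ground-ripol-invariant} simultaneously by induction on the position of $C$ in the provenance-labeled ground resolution proof, following exactly the case distinction of Definition~\ref{def-ripol}: $C$ is an input clause from $F$, $C$ is an input clause from $G$, $C$ is obtained by merging, or $C$ is a resolvent. As a preparatory step I would record a simple \emph{provenance-soundness} observation, provable by a parallel (or folded-in) induction: whenever a literal occurrence $L^{\prov}$ appears in a clause of the proof, $\FF \in \prov$ implies $L \in \lit{F}$ and $\GG \in \prov$ implies $L \in \lit{G}$. This holds for input clauses by definition and is preserved by resolution, where the surviving literals keep their premise labels, and by merging, where the new label $\prov_1 \cup \prov_2$ only re-uses symbols already present in $\prov_1$ or $\prov_2$. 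In the resolvent case, part~(ii) will then reduce to this observation applied to the at most two occurrences of $L, \du{L}$ that enter $\ripol(C \lor D)$, together with the induction hypotheses $\lit{H_1}, \lit{H_2} \subseteq \lit{F} \cap \complit{G}$.

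The base cases are immediate. If $C = \bigvee_i L_i^{\{\FF\}}$ is an input clause from $F$, then $\ripol(C) = \false$, $\rsubclause{\FF}{C} = C$, and $\rsubclause{\GG}{C}$ is the empty disjunction; so the $F$-side entailment is $F \entails \false \lor C$, which holds since $C \in F$, the $G$-side holds trivially because $\lnot\ripol(C) = \lnot\false$ is valid, and $\lit{\false} = \emptyset$. The $G$-input case is symmetric. For the merge case, $C$ arises from a premise $D$ by collapsing two occurrences of a literal $L$ with labels $\prov_1, \prov_2$ into one occurrence with label $\prov_1 \cup \prov_2$; for each $\prov \in \{\FF, \GG\}$ the disjunctions $\rsubclause{\prov}{C}$ and $\rsubclause{\prov}{D}$ then contain the same literals up to the multiplicity of $L$, hence are logically equivalent, while $\ripol(C) = \ripol(D)$, so both invariants transfer verbatim from $D$ to $C$.

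The resolvent case carries the weight of the argument and is the main obstacle -- not because any single step is deep, but because of the bookkeeping over the six label combinations and the interaction of $\rsubclause{}{}$ with the resolved literal. Here $C \lor D$ is the resolvent of $C \lor L^{\prova}$ and $D \lor \du{L}^{\provb}$ with $H_1 = \ripol(C \lor L^{\prova})$ and $H_2 = \ripol(D \lor \du{L}^{\provb})$; since the resolvent and the value assigned by Definition~\ref{def-ripol} are symmetric in the two premises, by choosing which premise to designate as the first we may assume $(\prova, \provb)$ is one of the six pairs tabulated there. In each subcase the first move is to rewrite the premise subclauses appearing in the induction hypotheses: $\rsubclause{\FF}{C \lor L^{\prova}}$ is $\rsubclause{\FF}{C} \lor L$ if $\FF \in \prova$ and $\rsubclause{\FF}{C}$ otherwise, with the analogous rule for $\GG$, for $\provb$, and for $\du{L}$, while $\rsubclause{\FF}{C \lor D} = \rsubclause{\FF}{C} \lor \rsubclause{\FF}{D}$ and likewise for $\GG$ since the labels on $C, D$ are inherited from the premises. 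The two entailments for $C \lor D$ then follow by a short propositional check: one takes a model of $F$ (respectively $G$) in which every literal of $\rsubclause{\FF}{C}$ and $\rsubclause{\FF}{D}$ (respectively $\rsubclause{\GG}{C}$ and $\rsubclause{\GG}{D}$) is false, and verifies from the two induction hypotheses that the tabulated $\ripol(C \lor D)$ is forced true (respectively $\lnot\ripol(C \lor D)$ is forced true), using the validity of $L \lor \du{L}$ for ground literals in the mixed subcases -- this last point is exactly what makes the two non-equivalent ``alternate possibility'' forms in the $(\{\FF\},\{\GG\})$ case each correct and what reconciles the two ``alternate equivalent'' forms in the $(\{\FF,\GG\},\{\FF,\GG\})$ case. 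Part~(ii) in each subcase is the induction hypothesis for $H_1, H_2$ together with the provenance-soundness observation. Applying the established invariants to the root clause $\false$, whose $\FF$- and $\GG$-subclauses are empty, yields $F \entails \ripol(\false)$, $G \entails \lnot\ripol(\false)$, and $\lit{\ripol(\false)} \subseteq \lit{F} \cap \complit{G}$, the Craig-Lyndon separator property needed for Proposition~\ref{prop-lifting-ded}.
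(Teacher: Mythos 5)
Your proposal is correct. Note that the paper itself states Lemma~\ref{lem-ground-ripol-invariant} without giving an explicit proof (it defers to the cited literature on Huang's method, \HKPYM, and the author's earlier work), so there is no in-paper argument to diverge from; your proof is the standard and intended one: simultaneous induction over the proof with the case analysis of Definition~\ref{def-ripol}, a semantic check of the tabulated interpolant forms against the two induction hypotheses (using that exactly one of $L, \du{L}$ holds in the mixed-label cases), and the provenance-soundness observation ($\FF \in \prov$ implies $L \in \lit{F}$, $\GG \in \prov$ implies $L \in \lit{G}$), which is precisely what part~(ii) needs when the resolved literal enters the interpolant. Your reduction of the untabulated label pairs by premise symmetry (which also swaps $L$ and $\du{L}$) and your treatment of merging via equivalence of the subclauses up to multiplicity are both sound.
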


If $C_0$ is the empty clause $\false$ at the root of the resolution proof,
then $\rsubclause{\FF}{C_0} = \rsubclause{\GG}{C_0} = \false$. Hence:
\begin{corollary}[Ground Interpolation with Resolution]
\label{cor-ground-ripol-correct}
Let $F, G$ be clausal ground formulas and let $C_0$ be the empty clause
obtained with a provenance-labeled ground resolution proof from $F, G$. Then
$\ripol(C_0)$ is a Craig-Lyndon interpolant for $F, \lnot G$.
\end{corollary}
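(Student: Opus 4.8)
The plan is to obtain the corollary directly from Lemma~\ref{lem-ground-ripol-invariant}, instantiated at the empty root clause $C_0$, by reading off the three defining conditions of a Craig-Lyndon interpolant for $F, \lnot G$ with $H = \ripol(C_0)$ (which is a ground NNF formula by Definition~\ref{def-ripol}). First I would note that, since the given provenance-labeled ground resolution proof is in particular a $\GRCALC$-deduction of $\false$ from $F \cup G$, soundness of ground resolution makes $F \cup G$ unsatisfiable, so $F \entails \lnot G$ and the notion of interpolant applies. Next, because $C_0 = \false$ contains no literals, both provenance-restricted subclauses $\rsubclause{\FF}{C_0}$ and $\rsubclause{\GG}{C_0}$ are empty disjunctions and hence equal to $\false$; substituting this into the first statement of the lemma collapses it to $F \entails \ripol(C_0)$ and $G \entails \lnot \ripol(C_0)$. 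The former is already the condition $F \entails H$. For the latter I would contrapose: $G \entails \lnot \ripol(C_0)$ says every model of $G$ falsifies $\ripol(C_0)$, hence every model of $\ripol(C_0)$ falsifies $G$, i.e. $\ripol(C_0) \entails \lnot G$, which is the condition $H \entails \lnot G$.

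The remaining, slightly more delicate, step is the syntactic condition $\voc{\ripol(C_0)} \subseteq \voc{F} \cap \voc{\lnot G}$, which I would derive from the second statement of the lemma, $\lit{\ripol(C_0)} \subseteq \lit{F} \cap \complit{G}$. The care needed is that $\voc{}$ records polarity-predicate pairs, so I must keep polarities straight across $\ripol(C_0)$, $F$, $G$, and $\lnot G$. The convenient observation is that $\ripol(C_0)$ is a ground NNF formula: it has no free variables, and the polarity of a literal occurrence in it equals that literal's sign, so $\pred{\ripol(C_0)}$ and $\fun{\ripol(C_0)}$ are exactly the polarity-predicate pairs and functions visible in the members of $\lit{\ripol(C_0)}$. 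Since literals inside clauses of $F$ likewise carry their own sign as polarity, $\lit{\ripol(C_0)} \subseteq \lit{F}$ gives $\pred{\ripol(C_0)} \subseteq \pred{F}$ and $\fun{\ripol(C_0)} \subseteq \fun{F}$, hence $\voc{\ripol(C_0)} \subseteq \voc{F}$. From $\lit{\ripol(C_0)} \subseteq \complit{G}$ I get that for every literal $L$ of $\ripol(C_0)$ the complement $\du{L}$ occurs in a clause of $G$; passing from $G$ to $\lnot G$ flips that occurrence's polarity, realigning it with $L$, so $\pred{\ripol(C_0)} \subseteq \pred{\lnot G}$, while functions match and variables are absent on both sides. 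Hence $\voc{\ripol(C_0)} \subseteq \voc{\lnot G}$, and together with the previous inclusion, $\voc{\ripol(C_0)} \subseteq \voc{F} \cap \voc{\lnot G}$.

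Collecting the three conditions shows $\ripol(C_0)$ is a Craig-Lyndon interpolant for $F, \lnot G$. Since all the substance sits in Lemma~\ref{lem-ground-ripol-invariant} — whose own proof is an induction over the structure of the provenance-labeled resolution proof, case-splitting on the provenance labels $\prova,\provb$ exactly as in Definition~\ref{def-ripol} — I do not expect a genuine obstacle at the level of the corollary; the one point that demands attention is the polarity bookkeeping in the second paragraph, in particular the realignment that negating $G$ performs on the complemented literals in $\complit{G}$.
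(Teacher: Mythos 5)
Your proposal is correct and follows exactly the paper's route: the paper derives the corollary immediately from Lemma~\ref{lem-ground-ripol-invariant} by observing that $\rsubclause{\FF}{C_0} = \rsubclause{\GG}{C_0} = \false$ at the empty root clause, which is precisely your first paragraph, with the entailment and vocabulary conditions then read off as you do. Your added polarity bookkeeping for the syntactic condition is a sound elaboration of what the paper leaves implicit in the ``Hence''.
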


\label{sec-hkpym}

Definition~\ref{def-ripol} refines a well known interpolation system for
propositional resolution called \name{HKPYM} by Bonacina and Johansson
\cite{bonacina:2015:ground}, after the initials of several authors who
discovered or investigated it independently. Huang \cite{huang:95} uses it for
first-order ground resolution proofs, like we do, but extended to
paramodulation. He assumes merging implicitly with ground resolution. The
essential difference is that Huang uses for \emph{all} cases of ground
resolution with exception of $\{\FF\} \{\FF\}$ and $\{\GG\} \{\GG\}$ the
schema of our case $\{\FF, \GG\} \{\FF, \GG\}$, that is, $(\du{L} \land H_1)
\lor (L \land H_2)$. Similarly, \HKPYM as defined in
\cite{bonacina:2015:ground} uses for \emph{all} these cases our
\name{alternate possibility} for case $\{\FF, \GG\} \{\FF, \GG\}$. Hence,
these methods do not construct Craig-\emph{Lyndon} interpolants. Although
Huang uses provenance to label literal \emph{occurrences in the proof}, like
we do, versions of \HKPYM often use a labeling that just distinguishes on the
basis of the two given interpolated formulas $F,G$ whether an atom is
\FGSHARED (is ``transparent'' or ``grey''), \FONLY (has one of two
``colors''), or \GONLY (has the other ``color''). With this coarse labeling
more literals may enter the interpolant, instead of truth value constants that
could be eliminated by simplifying with \eref{tv:at}--\eref{tv:nt}.

\subsection{Deduction Trees and Clausal Tableaux}
\label{sec-atomic-cut}

It is well known \cite{letz:habil} that a resolution deduction tree of the
empty clause $\false$ represents a closed clausal tableaux in a specific form
and vice versa. This correspondence is of interest for interpolation since it
can be utilized in practice to convert resolution proofs to clausal tableaux
and, moreover, indicates a systematization of resolution-based interpolation
methods.
\begin{definition}[Atomic Cut, Clausal Tableau in Cut Normal Form]
  \

  \subdefinition{def-atomic-cut}
  An \defname{atomic cut} is a clause of the form
  $\lnot p(\tts) \lor p(\tts)$.

  \subdefinition{ded-cut-normal-form} A closed clausal tableau is in
  \defname{cut normal form} for a set of clauses $F$ if for all inner
  nodes $N$ whose children are not leaves $\nclause{N}$ is an
  atomic cut, and for all inner nodes $N$ whose children are leaves
  $\nclause{N}$ is an instance of a clause in $F$.
\end{definition}

\begin{example}
  The following closed clausal tableau is in cut normal form. It represents
  the $\RCALC$-deduction tree from Example~\ref{examp-deduction-tree}.

  \vspace{-5pt}
  \noindent\scalebox{0.78}{
  \begin{tikzpicture}[scale=1.0,
      baseline=(a.north),sibling distance=20em]
    \coordinate (a) [sibling distance=21em,level distance=2.5em]
    child {node {$\lnot\fp(\ff^4(\fg(y)))$}
      [sibling distance=21em]
      child {node {$\lnot\fp(\fg(y))$}
        child {node {\cutleaf{$\lnot\fp(\fg(y))$}}}}
      child {node {$\fp(\fg(y))$}
        [sibling distance=27em]
        child {node {$\lnot\fp(\ff^2(\fg(y)))$}
          [sibling distance=13em]
          child {node {$\lnot\fp(\ff(\fg(y)))$}
            [sibling distance=6.5em]
            child {node {\cutleaf{$\lnot \fp(\fg(y))$}}}
            child {node {\cutleaf{$\fp(\ff(\fg(y)))$}}}}
          child {node {$\fp(\ff(\fg(y)))$}
            [sibling distance=6.5em]
            child {node {\cutleaf{$\lnot \fp(\ff(\fg(y)))$}}}
            child {node {\cutleaf{$\fp(\ff^2(\fg(y)))$}}}
          }}
        child {node {$\fp(\ff^2(\fg(y))))$}
          [sibling distance=12.5em]
          child {node {$\lnot\fp(\ff^3(\fg(y)))$}
            [sibling distance=6em]
            child {node {\cutleaf{$\lnot \fp(\ff^2(\fg(y)))$}}}
            child {node {\cutleaf{$\fp(\ff^3(\fg(y)))$}}}
          }
          child {node {$\fp(\ff^3(\fg(y)))))$}
            [sibling distance=6em]
            child {node {\cutleaf{$\lnot \fp(\ff^3(\fg(y)))$}}}
            child {node {\cutleaf{$\fp(\ff^4(\fg(y)))$}}}
          }
          }}
    }
    child {node {$\fp(\ff^4(\fg(y)))$}
        child {node {\cutleaf{$\lnot \fp(\ff^4(\fg(y)))$}}}
    };
  \end{tikzpicture}}

\vspace{-5pt}  
\lipicsEnd
\end{example}

A clausal tableau in cut normal form for $F$ is a special case of a clausal
tableau \emph{for} the formula $F \cup \bigcup_{p \in \predplain{F}}{\{\lnot
  p(\xs) \lor p(\xs)\}}$, which is equivalent to $F$. Cut normal form can be
seen as a notational variant of semantic trees \cite{chang:lee,letz:habil}.
A given deduction tree of the empty clause~$\false$ from a set of clauses $F$
can be converted in linear time to a closed clausal tableau in cut normal form
for $F$ as follows: (1) Delete the factoring or merging steps (considering
factoring/merging as integrated into the resolution rule).
(2)~Remove the root label $\false$. (3)~Replace the labels representing the
premises $C \lor L$ and $C \lor \du{L}$ of a resolution step with the
complements of the literals resolved upon, $\du{L}$ and $L$, respectively.
(4)~Turn the tree upside down, such that the root is now at the top. (5)~At
each leaf that was labeled by an instance $L_1 \lor \ldots L_n \lor L$ of an
input clause, and is now labeled by $L$, attach that clause, i.e., attach
children labeled by $L_1, \ldots, L_n, L$.

In the resulting tableau, each involved instance of an input clause is
falsified by the branch leading to it. That is, each literal in the clause has
a complement in the branch. The converse translation of a closed clausal
tableaux in cut normal form to a deduction tree of the empty clause $\false$
is straightforward, with a potential quadratic increase in size because
intermediate resolvents have to be attached \cite{letz:habil}. Expressed in
terms of semantic trees, this converse translation underlies a classic
completeness proof of resolution \cite{chang:lee}.

The \name{hyper} property (Sect.~\ref{sec-hyper}) is, under assumption of
regularity and leaf-closedness, incompatible with presence of an atomic cut:
Consider an atomic cut $\lnot A \lor A$ as tableau clause. If the tableau is
hyper, then the node labeled with $\lnot A$ is a leaf that is closed by an
ancestor with label $A$. But this ancestor is also an ancestor of the other
node of the atomic cut, labeled with $A$, which violates regularity.
Thus, Algorithm~\ref{algo-proc-hyper}, which converts a clausal tableau to
hyper form, when applied to a closed clausal tableau in cut normal form yields
a closed clausal tableau without atomic cuts. All tableau clauses of the
converted tableau are instances of the input clauses at the leaves of the
tableau in cut normal form. The hyper conversion thus ``eliminates'' the
atomic cuts, which can be practically applied to convert resolution proofs to
clausal tableaux without atomic cuts \cite{cw:range:2023}.

Since a closed clausal tableau in cut normal form is a special case of a
closed clausal tableau we can also use it directly for interpolant calculation
with the $\nipolfun$ operator. How should the side labeling be chosen? The
clause instances at the leaves evidently receive the side label of the
original clause of which they are an instance. The atomic cuts where the
predicate is \FONLY get side~$\aaa$, the atomic cuts where the predicate is
\GONLY get side~$\bbb$. For atomic cuts where the predicate is \FGSHARED we
can take either side or we can stack instances with both side labels upon each
other, leading the following eight combinations.

\smallskip
\bgroup
\setlength{\tabcolsep}{8pt}
\begin{tabular}{rrrr}
\footnotesize\begin{tikzpicture}[scale=0.9,baseline=(a.south),
    every node/.style = {transform shape,anchor=north},
    level distance=0.8em, sibling distance=5em]
  \node[] at (-1,0) {(1)};
  \node (a) {\vbar\textbullet}
  child { node {$\lnot A^\aaa$}}
  child { node {$A^\aaa$} };
\end{tikzpicture}
&
\footnotesize\begin{tikzpicture}[scale=0.9,baseline=(a.south),
    every node/.style = {transform shape,anchor=north},
    level distance=0.8em, sibling distance=5em]
  \node[] at (-1,0) {(2)};
  \node (a) {\vbar\textbullet}
  child { node {$\lnot A^\bbb$}}
  child { node {$A^\bbb$} };
\end{tikzpicture}
&
\footnotesize\begin{tikzpicture}[scale=0.9,baseline=(a.south),
    every node/.style = {transform shape,anchor=north},
    level distance=0.8em, sibling distance=5em]
    \node[] at (-1,0) {(3)};
    \node (a) {\vbar\textbullet}
    child { node {$\lnot A^\aaa$}
      [sibling distance=2.5em,level distance=1.5em]
      child { node {$\lnot A^\bbb$} }
      child { node [align=center]{$A^\bbb$\\[-3pt]$\times$}} }
    child { node {$A^\aaa$} };
\end{tikzpicture}
&
\footnotesize\begin{tikzpicture}[scale=0.9,baseline=(a.south),
      every node/.style = {transform shape,anchor=north},
      level distance=0.8em, sibling distance=5em]
    \node[] at (-1,0) {(4)};
    \node (a) {\vbar\textbullet}
    child { node {$\lnot A^\aaa$}}
    child { node {$A^\aaa$}
      [sibling distance=2.5em,level distance=1.5em]
      child { node [align=center]{$\lnot A^\bbb$\\[-3pt]$\times$}}
      child { node {$A^\bbb$} }
    };
  \end{tikzpicture}\\[1.25cm]
\footnotesize\begin{tikzpicture}[scale=0.9,baseline=(a.south),
      every node/.style = {transform shape,anchor=north},
      level distance=0.8em, sibling distance=5em]
    \node[] at (-1,0) {(5)};
    \node (a) {\vbar\textbullet}
    child { node {$\lnot A^\bbb$}
      [sibling distance=2.5em,level distance=1.5em]
      child { node {$\lnot A^\aaa$} }
      child { node [align=center]{$A^\aaa$\\[-3pt]$\times$}} }
    child { node {$A^\bbb$} };
  \end{tikzpicture}
&
\footnotesize\begin{tikzpicture}[scale=0.9,baseline=(a.south),
    every node/.style = {transform shape,anchor=north},
    level distance=0.8em, sibling distance=5em]
      \node[] at (-1,0) {(6)};
      \node (a) {\vbar\textbullet}
      child { node {$\lnot A^\bbb$}}
      child { node {$A^\bbb$}
        [sibling distance=2.5em,level distance=1.5em]
        child { node [align=center]{$\lnot A^\aaa$\\[-3pt]$\times$}}
        child { node {$A^\aaa$} }
      };
\end{tikzpicture}
&
\footnotesize\begin{tikzpicture}[scale=0.9,baseline=(a.south),
    every node/.style = {transform shape,anchor=north},
    level distance=0.8em, sibling distance=5em]
      \node[] at (-1,0) {(7)};
      \node (a) {\vbar\textbullet}
      child { node {$\lnot A^\aaa$}
        [sibling distance=2.5em,level distance=1.5em]
        child { node {$\lnot A^\bbb$} }
        child { node [align=center]{$A^\bbb$\\[-3pt]$\times$}} }
      child { node {$A^\aaa$}
        [sibling distance=2.5em,level distance=1.5em]
        child { node [align=center]{$\lnot A^\bbb$\\[-3pt]$\times$}}
        child { node {$A^\bbb$} }
      };
\end{tikzpicture}
&
\footnotesize\begin{tikzpicture}[scale=0.9,baseline=(a.south),
    every node/.style = {transform shape,anchor=north},
    level distance=0.8em, sibling distance=5em]
      \node[] at (-1,0) {(8)};
      \node (a) {\vbar\textbullet}
      child { node {$\lnot A^\bbb$}
        [sibling distance=2.5em,level distance=1.5em]
        child { node {$\lnot A^\aaa$} }
        child { node [align=center]{$A^\aaa$\\[-3pt]$\times$}} }
      child { node {$A^\bbb$}
        [sibling distance=2.5em,level distance=1.5em]
        child { node [align=center]{$\lnot A^\aaa$\\[-3pt]$\times$}}
        child { node {$A^\aaa$} }
      };
\end{tikzpicture}
\end{tabular}
\egroup

\smallskip

Some branches occurring at stacking are immediately closed, corresponding to a
literal in the interpolant calculated by $\nipolfun$. Each combination has
exactly two open branches, such that the stacking effects no substantial
increase of the tree size.
Depending on the employed stacking schemas, interpolant calculation with
$\nipolfun$ on the clausal tableau in cut normal form simulates different
resolution-based calculi for ground interpolation, modulo commutativity of
$\land$ and $\lor$ and truth value simplification. With stacking according to
schema~(7) we obtain Huang's method, with schema~(8) \HKPYM. We assume
here that as target for closing a branch with a leaf from an instance of an
input clause the node with the same side label as the leaf is selected.
Schema~(2) gives McMillan's method
\cite{mcmillan:2003,mcmillan:2005,bonacina:2015:ground}. For a more detailed
exposition and examples see \cite{cw:ipol}. Our calculation by $\ripol$ is
obtained if the stacking schema is chosen according to the provenance label of
the literal occurrences upon which the resolution step is performed. In case
both are labeled by $\{\FF, \GG\}$, schema~(7) is chosen. Schema~(8)
corresponds to the alternate possibility for this case in
Def.~\ref{def-ripol}. For the other cases, $H_1 \lor (L \land H_2)$ is
simulated for positive (negative)~$L$ by schema~(4) (schema~(3)), and $(L \lor
H_1) \land H_2$ for positive (negative)~$L$ by schema~(5) (schema~(6)).

\section{Craig-Lyndon Interpolation and Equality}
\label{sec-equality}

Adding equality axioms, e.g., those from Table~\ref{tab-rst}, is a simple way
to incorporate equality into first-order logic. For provers with no dedicated
equality support, adding such axioms is common practice. Equality-specific
inferences often can be translated into inferences of an equality-free
calculus, if the input is enriched by equality axioms.
Paramodulation \cite{paramodulation:1969}
provides an example. The same holds for superposition rules
\cite{bachm:ganz:92} since they are restrictions of paramodulation.
Thus, a practical workflow for incorporating equality into interpolation with
first-order provers is performing the proof search with dedicated equality
support, followed by translating the proof to an equality-free calculus with
axiomatized equality. Interpolant calculation is then applied to the
translated proof, with equality handled as a predicate.

\begin{table}
  \caption{A first-order axiomatization of equality, shown as clauses.
    Substitutivity axioms \axname{SubstPred}$_{p,i}$ and
    \axname{SubstFun}$_{f,i}$ are for each predicate $p$ (function $f$) with
    arity $n > 0$ in the vocabulary, and for each argument position $i \in
    \{1,\ldots,n\}$.}
  \label{tab-rst}
\vspace{-5pt}
  \begin{tabular}{ll}
    \axname{Reflexivity} & $x = x$\\
    \axname{Symmetry} & $x \neq y \lor y = x$\\
    \axname{Transitivity} & $x \neq y \lor y \neq z \lor x = z$\\
    \axname{SubstPred}$_{p,i}$ &
    $\lnot p(x_1,\ldots,x_{i-1},x,x_{i+1},\ldots,x_n) \lor
    x \neq y \lor p(x_1,\ldots,x_{i-1},y,x_{i+1},\ldots,x_n)$\\
    \axname{SubstFun}$_{f,i}$ &
    $x \neq y \lor f(x_1,\ldots,x_{i-1},x,x_{i+1},\ldots,x_n) =
    f(x_1,\ldots,x_{i-1},y,x_{i+1},\ldots,x_n)$\\
  \end{tabular}
\vspace{-6pt}
\end{table}

A Craig-Lyndon interpolant for formulas $F, G$ of first-order logic with
equality is then a Craig-Lyndon interpolant for formulas $E_F \land F,\;
E_G \imp G$ of first-order logic without equality, where $E_F$ and $E_G$ are
conjunctions of equality axioms, say from Table~\ref{tab-rst}. Axioms
\axname{SubstPred}$_{p,i}$ are placed in $E_F$ ($E_G$) if $p$ is an \FONLY
(\GONLY) predicate. In these axioms $=$ occurs only in negative polarity. The
other equality axioms, which involve $=$ positively, can be placed arbitrarily
in $E_F$ or $E_G$. By controlling their placement, Craig-Lyndon interpolation
yields interpolants according to the following theorem, due to Oberschelp
\cite{oberschelp:1968} (see also \cite{motohashi:1984}).
\begin{theorem}[Oberschelp Interpolation]
  \label{thm-oberschelp}
  Let $F, G$ be formulas of first-order logic with equality such that $F
  \entails G$. Then there exists a formula $H$ of first-order logic with
  equality such that
  \begin{enumerate}[(1)]
  \item $F \entails H$ and $H \entails G$.
  \item $\pred{H} \subseteq \pred{F} \cap \pred{G}$, $\var{H} \subseteq
    \var{F} \cap \var{G}$,
    and $\const{H} \subseteq \const{F} \cap \const{G}$.
  \item $\fun{H} \subseteq \fun{F} \cup \fun{G}$.
  \item \label{item-oberschelp-fg} If $=$ occurs positively (negatively) in $H$,
    then $=$ occurs positively (negatively) in $F$ ($G$).
  \end{enumerate}
\end{theorem}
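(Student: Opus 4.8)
The plan is to reduce the statement to Craig-Lyndon interpolation for first-order logic \emph{without} equality, which is already at hand from Sect.~\ref{sec-lifting} (Theorem~\ref{thm-lifting} together with Cor.~\ref{cor-ground-ipol-correct} or~\ref{cor-ground-ripol-correct}). Treating $=$ as an ordinary binary predicate and writing $\entails_{=}$ for entailment in first-order logic with equality, I would use the classical fact that $F \entails_{=} G$ iff $\mathrm{Eq} \land F \entails G$ in equality-free first-order logic, where $\mathrm{Eq}$ is the conjunction of the axioms of Table~\ref{tab-rst} taken for all predicates and functions of $\vocplain{F} \cup \vocplain{G}$. I then split $\mathrm{Eq}$ into conjunctions $E_F$ and $E_G$ that together still contain all of $\mathrm{Eq}$, so that $E_F \land F \land E_G$ is equivalent to $\mathrm{Eq} \land F$ and hence $E_F \land F \entails (E_G \imp G)$; as in the paragraph preceding the theorem, \axname{SubstPred}$_{p,i}$ is placed in $E_F$ ($E_G$) when $p$ is \FONLY (\GONLY), and the placement of the remaining axioms is pinned down by the case analysis below. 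Applying Craig-Lyndon interpolation to the sentences $E_F \land F$ and $E_G \imp G$ --- free variables being handled by the constant-replacement device of Algorithm~\ref{algo-two-stage} --- yields the formula $H$.

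It then remains to translate the properties of $H$ back. Since every instance of a Table~\ref{tab-rst} axiom holds when $=$ denotes real equality, $E_F \land F \entails H$ gives $F \entails_{=} H$, and dually $H \entails (E_G \imp G)$ gives $H \entails_{=} G$: this is~(1). For the vocabulary, keeping the \axname{SubstFun}$_{f,i}$ axioms and the shared-predicate \axname{SubstPred} axioms on sides that do not enlarge the predicate or individual-variable vocabulary gives $\vocplain{E_F \land F} \subseteq \predplain{F} \cup \{{=}\} \cup (\fun{F} \cup \fun{G}) \cup \var{F}$, and symmetrically for the $G$-side, so the Craig-Lyndon containment $\voc{H} \subseteq \voc{E_F \land F} \cap \voc{E_G \imp G}$ yields $\predplain{H} \setminus \{{=}\} \subseteq \predplain{F} \cap \predplain{G}$ with matching polarities, $\fun{H} \subseteq \fun{F} \cup \fun{G}$, $\var{H} \subseteq \var{F} \cap \var{G}$ and $\const{H} \subseteq \const{F} \cap \const{G}$ --- that is~(3) and the non-equality part of~(2).

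The delicate point --- and what I expect to be the main obstacle --- is~(4) together with the $=$-part of~(2). Among the axioms of Table~\ref{tab-rst}, only the \axname{SubstPred} axioms have no positive $=$-literal, whereas \axname{Reflexivity}, \axname{Symmetry}, \axname{Transitivity}, the \axname{SubstFun} axioms, and \axname{SubstPred} for the predicate $=$ itself each contain one. By the Craig-Lyndon condition, ${+}{=} \in \pred{H}$ forces ${+}{=} \in \pred{E_F \land F}$, and hence forces ${+}{=} \in \pred{F}$ precisely when no axiom of $E_F$ has a positive $=$-literal; dually, since $E_G$ occurs negatively in $E_G \imp G$, ${-}{=} \in \pred{H}$ forces ${-}{=} \in \pred{G}$ precisely when no axiom of $E_G$ has a positive $=$-literal. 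These two provisos cannot both hold, since at least one positive $=$-literal (from \axname{Reflexivity}) is unavoidable. I would resolve this by a case distinction on the polarities with which $=$ occurs in $F$ and in $G$: put all positive-$=$ axioms on the $F$-side when $=$ occurs positively in $F$ --- then the positive half of~(4) is vacuous and $E_G$, carrying $=$ only negatively, secures the negative half --- and symmetrically on the $G$-side when $=$ occurs negatively in $G$; and in the remaining situation, $=$ occurring at most negatively in $F$ and at most positively in $G$, argue that $H$ need not mention $=$ at all, since then $F \land \lnot G$ uses $=$ only negatively, so the congruence generated by the positive equality axioms reduces to syntactic identity of ground terms and the residual equality literals can be removed from the computed ground interpolant by the simplification $t = t \equiv \true$. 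The analogous care handles~(2) for a shared predicate $p$ whose polarity sets in $F$ and $G$ differ: \axname{SubstPred}$_{p,i}$ carries both polarities of $p$, so either those sets coincide (nothing to do) or one must show that $p$ cannot occur in $H$ with the excluded polarity.

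So the routine content is the back-and-forth translation of the equality-free Craig-Lyndon statement; the real work lies in choosing the split of $\mathrm{Eq}$ so that the polarities of $=$ (and of shared predicates) in the interpolant stay controlled across the $E_G \imp G$ flip. A self-contained alternative that sidesteps the case analysis is a direct model-theoretic proof in the Robinson joint consistency style: assuming no suitable $H$ exists, one amalgamates a model of $F$ with a model of the restricted-language consequences of $F$ together with $\lnot G$ along a \emph{homomorphism} --- homomorphisms rather than embeddings being what yields the Lyndon and Oberschelp polarity refinements --- contradicting $F \entails_{=} G$; this is in essence Oberschelp's original proof~\cite{oberschelp:1968}.
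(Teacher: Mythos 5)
Your proposal takes essentially the same route as the paper's proof: it obtains $H$ as a Craig-Lyndon interpolant for $E_F \land F$ and $E_G \imp G$, placing the axioms \axname{SubstPred}$_{p,i}$ for \FONLY (\GONLY) predicates in $E_F$ ($E_G$) and distributing the axioms that contain $=$ positively according to a case analysis on the polarities of $=$ in $F$ and $G$ --- exactly the paper's device of ``considering condition~(4) contrapositively,'' with correctness then read off from the Lyndon vocabulary condition. The additional material you supply --- the separate handling of the situation where $=$ occurs at most negatively in $F$ and at most positively in $G$ via the $t = t \equiv \true$ simplification, the remark about shared predicates with differing polarity sets, and the Robinson-style model-theoretic alternative --- goes beyond what the paper's proof itself spells out, which rests solely on the axiom placement.
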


\begin{proof}
  We obtain $H$ as Craig-Lyndon interpolant for $E_F \land F,\; E_G \imp G$,
  where $E_F$ and $E_G$ are defined as follows, considering
  conditions~(\ref{item-oberschelp-fg}) contrapositively. If $=$ occurs only
  negatively or not at all in $F$ (only positively or not at all in $G$), then
  $E_F$ ($E_G$) is the conjunction of the axioms \axname{SubstPred}$_{p,i}$
  for the \FONLY (\GONLY) predicates~$p$, and $E_G$ ($E_F$) is the conjunction
  of the remaining equality axioms for the vocabularies of~$F, G$. Else $=$
  occurs positively in $F$ or negatively in $G$. In this case let $E_F$
  ($E_G$) include the substitutivity axioms for \FONLY (\GONLY) predicates and
  functions, and place the remaining equality axioms for the vocabularies of
  $F, G$ arbitrarily in $E_F$ or $E_G$.
\end{proof}

Theorem~\ref{thm-oberschelp} strengthens Craig-Lyndon interpolation only for
formulas without functions except of constants. As shown by Brand in the
context of his \name{modification method} \cite{brand:75} (see
\cite{equality:sequent:handbook:ar} for a summary and further references), the
substitutivity axioms are dispensable for a set of clauses that are
\defname{flat}, i.e., all occurrences of non-variable terms are arguments to
the equality predicate~$=$. Any clause can be converted to an equivalent flat
clause through ``pulling out'' terms with \eref{t:pos}. For example,
$\fg(\fa,\fb) = \fb$ is equivalent to the flat clause $\fb \neq x \lor \fa
\neq z \lor \fg(z,x) = x$. Conversion to flat form introduces $=$ only with
\emph{negative} polarity. This can be utilized to show the following theorem,
due to Fujiwara \cite{fujiwara:1978} and proven also by Motohashi
\cite{motohashi:1984}, which strengthens both Craig-Lyndon interpolation and
Theorem~\ref{thm-oberschelp}.

\begin{theorem}[Oberschelp-Fujiwara Interpolation]
  \label{thm-oberschelp-fujiwara}
  Let $F, G$ be formulas of first-order logic with equality such that $F
  \entails G$. Then there exists a formula $H$ of first-order logic with
  equality such that
  \begin{enumerate}[(1)]
  \item $F \entails H$ and $H \entails G$.
  \item $\voc{H} \subseteq \voc{F} \cap \voc{G}$.
  \item \label{item-oberschelp-fuji-fg} If $=$ occurs positively (negatively) in $H$, then $=$
    occurs positively (negatively) in $F$ ($G$).
  \end{enumerate}
\end{theorem}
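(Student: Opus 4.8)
The plan is to follow the strategy used in the proof of Theorem~\ref{thm-oberschelp} — obtaining $H$ as a Craig-Lyndon interpolant for $E_F \land F,\; E_G \imp G$ in first-order logic \emph{without} equality, where $E_F, E_G$ partition a set of equality axioms — but to precede it by rewriting $F$ and $G$ into \emph{flat} form. Brand's observation is that for flat clause sets the substitutivity axioms \axname{SubstPred}$_{p,i}$ and \axname{SubstFun}$_{f,i}$ are dispensable; these are the only equality axioms that mention a predicate other than $=$ or any function symbol. Once they are gone, the equality axioms that remain needed involve only the predicate $=$, so distributing them over $E_F$ and $E_G$ can never push an \FONLY or \GONLY function (or any non-$=$ predicate) onto the wrong side. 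This is exactly the gap that keeps Theorem~\ref{thm-oberschelp} from reaching the sharp $\voc{H} \subseteq \voc{F} \cap \voc{G}$ when proper functions are present.

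Concretely I would proceed as follows. (i)~Reduce to sentences by replacing free variables of $F, G$ with fresh shared constants, as in Sect.~\ref{sec-lifting}, to be undone at the end. (ii)~Skolemize and clausify $F$ and $\lnot G$, and \emph{then} apply Brand's flattening, repeatedly pulling out every non-variable subterm that is not a direct argument of $=$ by rewriting with \eref{t:pos} (as in $\fg(\fa,\fb)=\fb \equiv \fb \neq x \lor \fa \neq z \lor \fg(z,x)=x$); doing this after Skolemization is essential, since Skolem terms are themselves non-variable terms. The result are equivalent flat clause sets, one from $F$ and one from $\lnot G$, and the rewriting by \eref{t:pos} introduces occurrences of $=$ \emph{only in negative polarity}, so it leaves unchanged which positive occurrences of $=$ appear in $F$ and in $G$. (iii)~Invoke Brand's modification method: the flat, equality-unsatisfiable union of the two clause sets is already unsatisfiable in ordinary first-order logic once a fixed set $E^{0}$ of equality axioms over the single predicate $=$ is added (reflexivity, together with Brand's clausal handling of symmetry and transitivity, or these three axioms taken directly). (iv)~Split $E^{0}=E_F\cup E_G$, placing the axioms in which $=$ occurs positively — in particular reflexivity — into $E_F$ if $=$ occurs positively in $F$, and otherwise into $E_G$, and symmetrically for occurrences relevant to the negative side in $G$; since these axioms mention no other symbols, condition~(2) imposes no constraint on this choice. (v)~Run the two-stage Craig-Lyndon interpolation of Sects.~\ref{sec-lifting}--\ref{sec-resolution} to get an interpolant $H$ for $E_F \land F,\; E_G \imp G$; since the flattened clause sets are equivalent to the clausifications of $F$ and $\lnot G$, this $H$ satisfies $F \entails H \entails G$. (vi)~Reintroduce the free variables from step~(i).

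Verification of (1) is immediate from step~(v). For (2), flattening adds no function and no non-$=$ predicate symbol (only bound variables, which are irrelevant for sentences) and $E_F, E_G$ add only $=$, so $\voc{H}$ is contained in $(\voc{F}\cap\voc{G})\cup\{=\}$; and whenever $=$ survives in $H$, Lyndon's polarity conditions force it to occur on both sides, hence in $\voc{F}\cap\voc{G}$. For (3) one uses that flattening preserves the positive-$=$ content of $F$ and of $G$ exactly, together with the placement in step~(iv): a positive $=$ in the $F$-side formula $E_F\land F$ then forces a positive $=$ in $F$, and a negative $=$ in the $G$-side formula $E_G\imp G$ (equivalently $\lnot E_G\lor G$) forces a negative $=$ in $G$, so Lyndon's conditions for $H$ give~(3).

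The main obstacle I expect lies in the bookkeeping of step~(iv)/condition~(3): reflexivity $x=x$ necessarily carries a positive occurrence of $=$ and must be placed \emph{somewhere}, so in the borderline situation where $=$ occurs neither positively in $F$ nor negatively in $G$ one must argue not just that the vocabulary bound \emph{permits} dropping $=$ from $H$, but that the two-stage construction actually produces an interpolant free of $=$. This is where one leans on Brand's specific normal form and on the fact noted after Corollary~\ref{cor-ground-ipol-correct} that the computed interpolant contains only literals actually occurring in the proof, not every symbol allowed by the bound. The second ingredient that should be cited precisely rather than re-derived here is Brand's elimination of the substitutivity axioms for flat clause sets, as used by Fujiwara and by Motohashi.
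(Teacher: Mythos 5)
Your proposal follows essentially the same route as the paper's own proof: flatten $F$ and $\lnot G$ following Brand so that the substitutivity axioms become dispensable, keep only \axname{Reflexivity}, \axname{Symmetry} and \axname{Transitivity} (which mention no symbol besides $=$), distribute them over $E_F, E_G$ according to the polarity of $=$ in $F$ and $G$, and take a Craig-Lyndon interpolant for $E_F \land F',\; E_G \imp G'$. The borderline situation you flag (no positive $=$ in $F$ and no negative $=$ in $G$) is handled just as tersely in the paper -- there by setting one of $E_F, E_G$ to $\true$ and giving the three axioms to the other side -- so your extra caution is warranted, but it does not amount to a different approach.
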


\begin{proof}
  We obtain $H$ as Craig-Lyndon interpolant for $E_F \land F',\; E_G \imp
  G'$, where $F'$ and~$\lnot G'$ are flattened formulas that are equivalent to
  $F$ and $\lnot G$, respectively, and $E_F$ and $E_G$ are defined as follows.
  If $=$ occurs only negatively (positively) or not at all in $F$ ($G$), then
  $E_F$ ($E_G$) is $\true$ and $E_G$ ($E_F$) is the conjunction of the
  equality axioms \axname{Reflexivity}, \axname{Symmetry}, and
  \axname{Transitivity}. Else $=$ occurs positively in $F$ or negatively in
  $G$. In this case, place \axname{Reflexivity}, \axname{Symmetry}, and
  \axname{Transitivity} arbitrarily in $E_F$ or $E_G$.
\end{proof}

\section{Contributions of Automated Reasoning to Craig Interpolation}
\label{sec-contrib}

Methods from automated reasoning led to new strengthened variations of Craig
interpolation, a new technique for Craig interpolation in a non-classical
logic, techniques for putting specific strengthened variations of Craig
interpolation into practice, and some general observations on Craig
interpolation. We outline some of these results and provide references.

\subparagraph{Craig-Lyndon Interpolation via Consequence Finding.}

First-order resolution is \name{complete for consequence finding}
\cite{lee:thesis:1967}, i.e., whenever a clause $C$ is entailed by a set of
clauses $F$, then there is a deduction from $F$ of a clause $D$ that
subsumes~$C$. On this basis, Slagle \cite{slagle:interpolation:1970} presented
in 1970 variations of Craig-Lyndon interpolation. For propositional logic he
strengthens Craig-Lyndon interpolation in that the interpolant $H$ for sets
$F,G$ of clauses is a set of clauses \emph{deduced by resolution} from $F$ and
that each clause of $G$ is \emph{subsumed} by some clause of~$H$.

\subparagraph{Craig Interpolation with Local Proofs.}

Aside of the two-stage approach, another approach to Craig interpolation with
automated reasoning systems is pursued: \name{Local proofs}, where the basic
idea is that all inference steps are \name{local}, that is, they do not
involve both \FONLY and \GONLY symbols, which allows a particularly easy
interpolant calculation. Jhala and McMillan \cite{jhala:mcmillan:2006}
introduced this approach in 2006 for propositional logic. Subsequently it was
generalized to first-order logic \cite{mcmillan:quantified:2008}.
Kov{\'{a}}cs, Voronkov and their collaborators investigated it for first-order
logic and implemented it in the \Vampire prover
\cite{kovacs:voronkov:2009,hoder:kovacs:voronkov:grey:2012,kovacs:voronkov:2017}.
Further aspects were shown by Bonacina and Johansson \cite{bonacina:2015:on}.
Interpolation with local proofs is incomplete for first-order logic
\cite{kovacs:voronkov:2017}: Let $F = \forall x\, \fp(\fa,x)$ and let $G =
\forall y\, \lnot \fp(y,\fb)$. Then $F \land G$ is unsatisfiable and $\exists
y \forall x\, \fp(y,x)$ is a separator for $F,G$. But, since $\fa$ is \FONLY
and $\fb$ is \GONLY, any inference with~$F$ and~$G$ as premises is non-local.
Results obtained with this approach include proof transformations to minimize
calculated interpolants \cite{hoder:kovacs:voronkov:grey:2012},
transformations from non-local to local proofs, e.g.,
\cite{hoder:kovacs:voronkov:grey:2012}, and a proof that there is no lower
bound on the number of quantifier alternations in separators for two universal
first-order sentences \cite{kovacs:voronkov:2017}.

\vspace{-4pt}
\subparagraph*{Range-Restricted and Horn Interpolation.}

\name{Range-restriction} \cite{vgt} is a syntactic property of first-order
formulas that ensures domain independence \cite{foundations} and constrains
both the CNF and the DNF of the formula. A set of clauses is \name{Horn} if each
clause has at most one positive literal. Range-restriction as well as the Horn
property transfer from interpolated formulas to interpolants, if interpolation
is performed with clausal tableaux that are \emph{hyper} \cite{cw:range:2023}.

\vspace{-4pt}
\subparagraph*{Practical Access Interpolation.}
Alternatively to range-restriction, constraining quantification to respect
\name{binding patterns} \cite{benedikt:book} (see also
\refchapter{chapter:firstorder} and \refchapter{chapter:queryrewriting}) ensures
domain independence. Interpolation with clausal tableaux that are hyper also
transfers this constrained quantification from interpolated formulas to
interpolants, in a workflow with structure-preserving normal forms and a
variation of interpolant lifting that is interleaved with ground
interpolation, because the constrained quantification has no prenex form
\cite{cw:craig:report:2018}.

\enlargethispage{6pt}
\vspace{-4pt}
\subparagraph*{Craig-Lyndon Interpolation for the Intermediate Logic of Here and There with
  Application to Synthesis of Answer Set Programs.} \name{Strong equivalence}
\cite{strongly:equivalent:2001} is a useful notion of equivalence for answer
set programs under stable model semantics \cite{gelfond:lifschitz:1988}. It
can be expressed as equivalence in the three-valued logic of here and there
(HT), also known as Gödel's $G_3$. Although interpolation for this logic was
known \cite{maksimova:superintutionistic:1977}, practical construction of
Craig interpolants from proofs has been shown only recently
\cite{cw:2024:synthesis}, based on an encoding of HT in classical logic that
is conventionally used to prove strong equivalence with classical provers
\cite{lin:equivalence:2002}. To construct a Craig-Lyndon interpolant for HT
formulas $F,G$, first a classical Craig-Lyndon interpolant $H''$ for their
encodings $F',G'$ is constructed. Then, from $H''$ a stronger formula $H'$ is
constructed that can be decoded into a HT formula~$H$, the desired
Craig-Lyndon interpolant for $F,G$. The underlying argument uses that any
proof of $F' \entails H''$ can be modified to a proof of $F' \entails H'$.
A corresponding Beth theorem can be applied to synthesize answer set programs
modulo strong equivalence with respect to a background program
\cite{cw:2024:synthesis}. The interpolation technique has been adapted
\cite{cw:htinterpolation:2026} to Mints' sequent system for HT \cite{mints}.

\section{Second-Order Quantifier Elimination}
\label{sec-soqe}

\name{Second-order quantifier elimination} is an approach to \emph{uniform}
interpolation based on \emph{equivalence}, computing for a given second-order
formula an \emph{equivalent} formula of first-order logic with equality.
We assume that the given second-order formula has the form
\begin{equation}
  \label{eq-soqe-main}
  \exists p\, F,
\end{equation}
where $F$ is a first-order formula and $p$ is a predicate. This is without
loss of generality: first-order logic allows to represent $n$-ary functions by
$n+1$-ary predicates, $\forall p\, F \equiv \lnot \exists p\, \lnot F$, and
multiple occurrences of second-order quantifiers may be eliminated
innermost-first. Also for $F$ without equality, first-order formulas
equivalent to $\exists p\, F$ may be with equality.
Since not every second-order formula is equivalent to a first-order formula
with equality, second-order quantifier elimination cannot succeed for all
inputs. But there are formula classes for which it succeeds and elimination
algorithms often succeed on problems from applications, sometimes subsuming
special algorithms.

In this section we will summarize the two core approaches to second-order
quantifier elimination. As a comprehensive source we recommend the monograph
by Gabbay, Schmidt and Sza{\l}as \cite{soqe:book:2008}. As entry points for
recent developments, we refer to the SOQE workshop series
\cite{soqe:workshop:2017,soqe:workshop:2021} and to \refchapter{chapter:kr}.
The following examples illustrate various aspects of second-order quantifier
elimination and indicate some potential applications.

\begin{example}
  Each of the following examples shows an equivalence of a second-order and a
  first-order formula. In all cases, the first-order formula can be obtained
  with known algorithms for second-order quantifier elimination.

\subexample{examp-forgetting} The following equivalence illustrates
\name{forgetting}.
\[
  \exists q\; (\forall x\, (\fp(x) \imp q(x)) \land
  \forall x\, (q(x) \imp \fr(x))) \; \equiv \;
  \forall x\, (\fp(x) \imp \fr(x)).
\]
Forgetting about predicate $\fq$ in a first-order formula $F$ appears as
elimination problem $\exists q\, F$. Predicate~$q$ does not occur in the
result, while relationships between the other predicates are retained.

\subexample{examp-leibniz} By the principle of \name{Leibniz' equality} two
objects are \emph{equal} if they are not distinguishable by properties. This
can be directly phrased as follows.
\[
  \forall p\; (p(\fa) \imp p(\fb))\; \equiv\; \fa = \fb.
\]

\subexample{examp-circ} \name{Predicate circumscription}
\cite{circumjmc,dls:1997} is a technique from knowledge representation to
enforce that predicate extensions are as small as possible. The
\name{circumscription of predicate~$p$ in formula $F$} states that $F$ holds
and that there is no predicate~$p'$ such that $F\{p \mapsto p'\}$ holds and
the extension of $p'$ is strictly contained in that of $p$. This can be
expressed as a second-order formula, shown here for $F = \fp(\fa) \land
\fp(\fb)$.
  \[
    \begin{array}{ll}
      & \fp(\fa) \land \fp(\fb) \land \lnot \exists\, p'\, [(p'(\fa) \land p'(\fb)) \land
        \forall x\, (p'(x) \imp \fp(x)) \land  \lnot \forall x\, (\fp(x) \imp
        p'(x))]\\
      \equiv & \fp(\fa) \land \fp(\fb) \land \forall x\, (\fp(x) \imp (x=\fa \lor x=\fb)).
    \end{array}
  \]

  \subexample{examp-modal} Second-order quantifier elimination can be applied
  to automate correspondence theory in modal logic. Modal formulas can be
  expressed in \name{standard translation} as formulas of classical
  first-order logic. For a modal \emph{axiom}, the unary predicates that
  correspond to parameters are then universally quantified. For example, axiom
  \textbf{T}, that is, $\Box p \imp p$, is represented by the left side of the
  equivalence below. Its right side expresses the corresponding reflexivity of
  the accessibility relation as a first-order formula.
  \[
    \forall p \forall w\, [\forall v\, (\fr(w,v) \imp p(v)) \imp p(w)]\;
    \equiv\; \forall w\, \fr(w, w)
  \]

\subexample{examp-abduction} Let $F, G$ be first-order formulas and let $P =
\{p_1\ldots p_n\}$ be a set of predicates. Then the \name{weakest sufficient
  condition of $G$ with respect to $F$ in terms of $P$}
\cite{lin-snc,dls:snc,soqe:book:2008} is the weakest
(w.r.t.\ entailment) formula~$H$ involving only predicates from $P$ such that
$F \land H \entails G$. Formula $H$ can be characterized semantically as
$\forall q_1 \ldots \forall q_m\, (F \imp G)$, where $\{q_1, \ldots, q_m\} =
\predplain{F \imp G} \setminus P$. Second-order quantifier elimination then
``computes'' the weakest sufficient condition as a first-order formula. It can
be viewed as an \name{abductive explanation}, the weakest explanation of
observation $G$ with background knowledge base $F$, where $P$ is the set of
\name{abducible predicates}. The following equivalence provides an example. We
use as shorthands $\fs,\fr$ for the 0-ary predicates $\f{sprinklerWasOn},
\f{rainedLastNight}$; $\fg,\fb$ for the constants $\f{grass},\f{boots}$; and
$w$ for the predicate $\mathit{wet}$.
  \[
  \forall w\, ([(\fs \imp w(\fg)) \land (\fr \imp w(\fg)) \land (w(\fg) \imp w(\fb))] \imp
  w(\fb))\; \equiv\; \fs \lor \fr.
  \]
  The background knowledge base~$F$ is a conjunction of three implications.
  The observation~$G$ is $w(\fb)$. The abducibles~$P$ are the 0-ary predicates
  $\fs$ and $\fr$. We obtain $H = \fs \lor \fr$ as abductive explanation, or
  weakest sufficient condition.

  \subexample{examp-decision} Second-order quantifier elimination succeeds for
  the class of relational monadic formulas, i.e., formulas with no functions
  and only unary predicates. An elimination algorithm, such as Behmann's
  method \cite{beh:22,cw-relmon}, provides a \emph{decision
  procedure}: eliminating all predicates yields a first-order formula with no
  predicates, but possibly with equality, whose status is easy to check:
  unsatisfiable, valid, or expressing a constraint on the cardinality of the
  domain. In the following example the domain must have at least two elements.
  \[\exists p\, (\exists x\, p(x) \land \exists x\, \lnot p(x))\; \equiv\; \exists
  x \exists y\, x \neq y.\lipicsEnd\]
\end{example}

To see why second-order quantifier elimination, based on equivalence, produces
uniform interpolants, based on notions of consequence and on vocabulary
restrictions, we make the following observations.
\begin{proposition}
  \label{prop-uip}
  Let $F, G$ be first order formulas and let $p_1, \ldots, p_n$ be predicates
  such that $G \equiv \exists p_1 \ldots \exists p_n\, G$. Then
  \[F \entails G\; \text{ iff }\; \exists p_1 \ldots \exists p_n\, F \entails G.\]
\end{proposition}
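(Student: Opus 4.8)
The plan is to derive both directions from two elementary facts about second-order existential quantification over predicates. Write $\exists \bar{p}$ for the prefix $\exists p_1 \ldots \exists p_n$, and recall the semantics: for a structure $M$ (together with a variable assignment, should $F, G$ have free variables), $M \models \exists \bar{p}\, A$ holds iff there is a structure $M'$ agreeing with $M$ on the interpretation of every symbol other than $p_1, \ldots, p_n$ such that $M' \models A$. The first fact is that $A \entails \exists \bar{p}\, A$ holds unconditionally, witnessed by $M' = M$. The second is \emph{monotonicity}: if $A \entails B$, then $\exists \bar{p}\, A \entails \exists \bar{p}\, B$ — given a model $M$ of $\exists \bar{p}\, A$, pick the witnessing $M'$ with $M' \models A$; then $M' \models B$ since $A \entails B$, so the \emph{same} $M'$ witnesses $M \models \exists \bar{p}\, B$.

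For the implication from right to left, I would instantiate the first fact with $A = F$ to obtain $F \entails \exists \bar{p}\, F$, and then chain this with the hypothesis $\exists \bar{p}\, F \entails G$ by transitivity of $\entails$, giving $F \entails G$. Note that this direction does not use the assumption $G \equiv \exists \bar{p}\, G$. For the implication from left to right, assume $F \entails G$; monotonicity applied with $A = F$ and $B = G$ yields $\exists \bar{p}\, F \entails \exists \bar{p}\, G$, and the hypothesis $\exists \bar{p}\, G \equiv G$ then gives $\exists \bar{p}\, F \entails G$, as desired.

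I do not expect a genuine obstacle here; the only point demanding care is to pin down the semantics of predicate quantification precisely enough that entailment between the first-order formulas $F$ and $G$ is read over the class of \emph{all} structures interpreting the full vocabulary, $p_1, \ldots, p_n$ included. That is exactly what licenses the monotonicity step to reuse the same witness $M'$ when passing from $A$ to $B$, and it is also the place where the hypothesis $G \equiv \exists \bar{p}\, G$ — equivalently, that the truth of $G$ does not depend essentially on $p_1, \ldots, p_n$ — is invoked, in the left-to-right direction.
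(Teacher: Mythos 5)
Your proof is correct, and since the paper states Proposition~\ref{prop-uip} without giving an explicit proof, there is nothing to diverge from: your argument — the unconditional entailment $F \models \exists p_1 \ldots \exists p_n\, F$ for the right-to-left direction, and monotonicity of the quantifier prefix combined with the hypothesis $G \equiv \exists p_1 \ldots \exists p_n\, G$ for the left-to-right direction — is exactly the standard semantic argument the statement implicitly relies on.
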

For uniform interpolation \emph{syntactic} properties concerning the
vocabulary of the involved formulas are considered. They relate to the
\emph{semantic} characterization of Prop.~\ref{prop-uip} as follows. The
precondition $G \equiv \exists p_1 \ldots \exists p_n\, G$, which can be
equivalently expressed as $\exists p_1 \ldots \exists p_n\, G \entails G$, is
implied by the syntactic property $\predplain{G} \cap \{p_1, \ldots, p_n\} =
\emptyset$. The second-order formula $\exists p_1 \ldots \exists p_n\, F$
satisfies the syntactic properties $\predplain{\exists p_1 \ldots \exists
  p_n\, F} \cap \{p_1,\ldots,p_n\} = \emptyset$ and $\predplain{\exists p_1
  \ldots \exists p_n\, F} \subseteq \predplain F$. If $H$ is a first-order
formula equivalent to $\exists p_1 \ldots \exists p_n\, F$ obtained with some
second-order quantifier elimination method that does not introduce additional
predicate symbols and eliminates all occurrences of $p_1, \ldots, p_n$, then
the syntactic properties $\predplain{H} \cap \{p_1,\ldots,p_n\} = \emptyset$
and $\predplain{H} \subseteq \predplain F$ hold. If $H$ is, more generally,
characterized just as a first-order formula equivalent to $\exists p_1 \ldots
\exists p_n\, F$, then any Craig interpolant $H'$ for $F', H$, where $F'$ is
$F$ with $p_1,\ldots,p_n$ replaced by dedicated fresh predicate symbols,
provides a first-order equivalent to $\exists p_1 \ldots \exists p_n\, F$ with
the syntactic properties $\predplain{H'} \cap \{p_1,\ldots,p_n\} = \emptyset$
and $\predplain{H'} \subseteq \predplain F$.

\subsection{Direct Methods -- Ackermann's Lemma and the DLS Algorithm}

The family of \name{direct} methods for second-order quantifier elimination is
characterized by equivalence-preserving rewriting to a form where all
occurrences of second-order quantifiers are applied to formulas of certain
shapes for which known equivalences permit schematic elimination.
\eref{p:ackpos} and~\eref{p:ackneg} are such schemas, known as the two
versions of \name{Ackermann's lemma}, since they go back to Ackermann's 1935
paper \cite{ackermann:35} on elimination.
Ackermann's lemma is the basis of the \name{DLS} algorithm
\cite{dls:early,dls:1997}, named after its creators Doherty, {\L}ukaszewicz,
and Sza{\l}as. It initiated in the mid-1990s the \name{direct methods}, or
\name{Ackermann approach} \cite{schmidt:2012:ackermann} to elimination. Before
presenting DLS, we show some applications of Ackermann's lemma.

\begin{example}
  \

  \subexample{examp-acklemma-forgetting} Example~\ref{examp-forgetting}
  directly matches Ackermann's lemma in form~\eref{p:ackneg}. If we
  flip the conjuncts on the left side of the example, it matches Ackermann's
  lemma in form~\eref{p:ackpos}.

  \subexample{examp-acklemma-leibniz} The left side of
  Example~\ref{examp-leibniz} is equivalent to $\lnot \exists p\, [\forall x\,
    (x=\fa \imp p(x)) \land \lnot p(\fb)],$ where the second-order subformula
  matches the left side of \eref{p:ackneg}.

  \subexample{examp-acklemma-modal} The left side of Example~\ref{examp-modal}
  is equivalent to $\lnot \exists w \exists p\, [\forall v\, (\fr(w,v) \imp
    p(v)) \land \lnot p(w)],$ where the second-order subformula matches the
  left side of \eref{p:ackneg}. \lipicsEnd

\end{example}

\begin{algo}[DLS]
  \

\algoinput A second-order formula $\exists p\, F$, where $p$ is a predicate
and $F$ is a formula of first-order logic with equality.

\smallskip

\algooutput A formula of first-order logic with equality that is equivalent to
the input formula or $\FAIL$, indicating failure of the algorithm.

\algomethod The algorithm proceeds in four phases.

\begin{enumerate}[I.]
\item
  \label{item-dls-preprocessing}
  \textbf{Preprocessing.} Convert the input to the form
  \begin{equation}
    \label{eq-dls-preprocessing-1}
    \exists \xs \exists p \bigvee_{i=1}^n (A_i \land B_i),
  \end{equation}
  where $A_i$ and $B_i$ are first-order formulas such that ${-}p \notin
  \pred{A_i}$ and ${+}p \notin \pred{B_i}$, for $i \in \{1,\ldots,n\}$.
  Specifically: (1.) Eliminate $\imp$ and $\equi$ (\eref{imp},
  \eref{equi:cnf}, \eref{equi:dnf}). (2.) Remove void quantifiers
  (\eref{q:vv}). (3.) Move $\lnot$ inwards until all occurrences precede atoms
  (\eref{nnf:not}--\eref{nnf:and}, \eref{q:na}, \eref{q:ne}). (4.) Move
  $\forall$ to the right and $\exists$ to the left, as long as possible
  (\eref{q:aa}--\eref{q:qa}, renaming bound variables if necessary). (5.)
  Distribute all top-level conjunctions over the disjunctions occurring in
  conjuncts (\eref{dist:dnf}). If the result, after rearranging with respect
  to associativity and commutativity of $\land, \lor$, is not of the form
  (\ref{eq-dls-preprocessing-1}), then exit with $\FAIL$. Otherwise replace
  (\ref{eq-dls-preprocessing-1}) with the equivalent formula
  \begin{equation}
    \label{eq-dls-preprocessing-2}
    \exists \xs \bigvee_{i=1}^n \exists p\, (A_i \land B_i)
  \end{equation}
  and apply the next phases of the algorithm separately to each disjunct of
  (\ref{eq-dls-preprocessing-2}). If this succeeds with formula $R_i$ as
  first-order equivalent of the disjunct $\exists p\, (A_i \land B_i)$ for all
  $i \in \{1,\ldots,n\}$, then return $\exists \xs\, \bigvee_{i=1}^n R_i$ as
  overall output of the algorithm.
\item \textbf{Preparation for Ackermann's Lemma.} The goal of this phase is to
  transform a formula $\exists p\, (A \land B)$, where ${-}p \notin \pred{A}$
  and ${+}p \notin \pred{B}$ to a form that matches the left side of
  Ackermann's lemma in form~\eref{p:ackpos} or \eref{p:ackneg}. Both forms can
  always be obtained with distributing conjunctions over disjunctions
  (\eref{dist:dnf}), pulling out terms (\eref{t:pos}--\eref{t:dis}),
  Skolemization (\eref{sk}), and restoring implication (\eref{imp}). The
  algorithm computes both forms since the unskolemization in the next phase
  may succeed only for one, and also one form may be substantially smaller
  than the other.
\item
  \label{item-dls-ackermann-application}
  \textbf{Application of Ackermann's Lemma.} Eliminate the second-order
  quantifier by rewriting with Ackermann's lemma and then try to unskolemize
  the Skolem functions introduced in the previous step on the basis of
  \eref{sk}, applied from right to left, with an unskolemization procedure.
  Unskolemization either succeeds or terminates with failure. If it fails for
  both forms of Ackermann's lemma, exit with $\FAIL$. It if fails for one
  form, proceed with the other one. If it succeeds on both, pick one to
  proceed.
\item \textbf{Simplification.} Simplify the result of the previous phase by
  equivalence-preserving transformations. Since \eref{t:pos} and \eref{t:neg}
  have been applied during the preparation for Ackermann's lemma for
  pulling out terms, their converse application to push in terms now often
  shortens the formula substantially. \lipicsEnd
\end{enumerate}
\end{algo}

If DLS exits with $\FAIL$, this is either due to failure in
phase~\ref{item-dls-preprocessing} (\name{Preprocessing}) or due to failure of
unskolemization in phase~\ref{item-dls-ackermann-application}
(\name{Application of Ackermann's Lemma}) for both forms of Ackermann's lemma.
Skolemization \cite{soqe:book:2008,nonnengart:weidenbach:small:nf:2001} and
unskolemization \cite{mccune:unskolemizing:1988,scan:engel,dls:conradie:2006}
are advanced topics on their own. Failure of DLS due to failure of
unskolemization can be avoided by introducing \name{branching quantifiers},
also known as \name{Henkin quantifiers}. The result of elimination is then,
however, not necessarily a formula of classical first-order logic.

The output formula of DLS may be with equality, also in cases where the input
formula is without equality. The following example illustrates introduction of
equality and Skolemization in the preparation phase of DLS and unskolemization
in the lemma application phase.
\begin{example}
We apply DLS to the second-order formula $\exists p\, (A \land B)$, where $A =
\forall x\, (\fq(x) \imp (p(x,\fa) \lor p(x,\fb)))$ and $B = \forall x\, \lnot
p(x,\fc)$. This second-order formula already has the
shape~(\ref{eq-dls-preprocessing-2}) such that the preprocessing phase has no
effect. Preparation for Ackermann's lemma has then to be considered for the
two forms of Ackermann's lemma, \eref{p:ackpos} and \eref{p:ackneg}. Form
\eref{p:ackpos} requires to bring $B$ into the shape $\forall x y\,
(p(x,y) \imp G)$. With \eref{t:neg} we can rewrite $B$ accordingly to $\forall
x y\, (p(x,y) \imp y\neq c)$. We can then apply \eref{p:ackpos} with $A$ in
the role of $F$ and obtain $A\{p \mapsto \lambda x y. y\neq c\}$, that is,
$\forall x\, (\fq(x) \imp (\fa\neq\fc \lor \fb\neq\fc))$ as result.

Preparation for Ackermann's lemma in the form \eref{p:ackneg} is more
intricate, involving Skolemization. We convert $A$ in the following steps to
an equivalent formula, a conjunction where one conjunct has the shape $\forall
x y\, (G \imp p(x,y))$ as required by \eref{p:ackneg} and the other conjunct
has no occurrence of $p$.

\smallskip
\setlength{\tabcolsep}{2.8pt}
\begin{tabular}{rcll}
  1. & & $\forall x\, [\fq(x) \imp (p(x,\fa) \lor p(x,\fb))]$ & $A$\\
  2. & $\equiv$ & $\forall x\, [\fq(x) \imp \exists u\, ((u=\fa \lor u=\fb)
    \land p(x,u))]$
  & by \eref{t:dis}\\
  3. & $\equiv$ & $\forall x\, [\fq(x) \imp \exists u\, ((u=\fa \lor u=\fb)
    \land \forall y\, (y=u \imp p(x,y)))]$ & by \eref{t:pos}\\
  4. & $\equiv$ & $\forall x \exists u \forall y\, [\fq(x) \imp ((u=\fa \lor
    u=\fb) \land (y=u \imp p(x,y)))]$ & by prenexing\\
  5. & $\equiv$ & $\forall x \forall y\, [\fq(x) \imp ((s(x)=\fa \lor
    s(x)=\fb) \land (y=s(x) \imp p(x,y)))]$ & by \eref{sk} (Skolemization)\\
  6. & $\equiv$ & $\exists s \forall x \forall y\, ([\fq(x) \imp ((s(x)=\fa \lor
    s(x)=\fb))]\; \land$\\
  && \hspace{3.95em} $[(\fq(x) \land y=s(x)) \imp p(x,y)])$ & by \eref{dist:cnf}\\
  7. & $\equiv$ & $\exists s\, (\forall x\, [\fq(x) \imp ((s(x)=\fa \lor
    s(x)=\fb))]\;\land$\\
  && \hspace{1.4em} $\forall x y\, [(\fq(x) \land y=s(x)) \imp p(x,y)])$.
  & by \eref{q:aa}, \eref{q:vv}\\
\end{tabular}
\smallskip

Let $A'$ and $B'$ be the two top conjuncts of the last of these formulas,
i.e., $A' = \forall x y\, [(\fq(x) \land y=s(x)) \imp p(x,y)])$ and $B' =
\forall x\, [\fq(x) \imp ((s(x)=\fa \lor s(x)=\fb))]$. Then $A \equiv \exists
s\, (B' \land A')$ and $\exists p\, (A \land B) \equiv \exists s \exists p\, (A'
\land B' \land B)$. Applying \eref{p:ackneg} with $A'$ in the role of $\forall
x y\, (G \imp p(x,y))$ and $B' \land B$ in the role of $F$ yields
$\exists s\, (B' \land B)\{p \mapsto \lambda xy.(\fq(x) \land y=s(x))\}$,
that is,
\[\exists s\, (\forall x\, [\fq(x) \imp ((s(x)=\fa \lor s(x)=\fb))] \land
\forall x\, \lnot (\fq(x) \land \fc=s(x))).\]
With \eref{q:aa} and \eref{sk}, now applied right-to-left as unskolemization,
we obtain
\[\forall x \exists y\, ([\fq(x) \imp ((y=\fa \lor y=\fb))] \land
\lnot (\fq(x) \land \fc=y)),\] which is equivalent to $\forall x\, (\fq(x)
\imp (\fa\neq\fc \lor \fb\neq\fc))$, the result obtained before with
Ackermann's lemma in the form \eref{p:ackpos}. DLS can then pick and return
this shorter representation.
\lipicsEnd
\end{example}

Comprehensive examples of DLS are given in \cite{dls:early,dls:1997}. Details
of DLS can be varied and refined, as discussed by Gustafsson
\cite{dls:gustafsson} and Conradie \cite{dls:conradie:2006}, who also
identifies a class of formulas on which DLS succeeds. To let DLS succeed on
all inputs $\exists p\, F$ for propositional $F$, distribution of disjunction
over conjunctions (\eref{dist:cnf}) has to be incorporated \cite{cw-relmon}.

Behmann's method \cite{beh:22,cw-relmon} is similar to DLS but restricted to
relational monadic formulas, for which it guarantees success. It is based on a
simple special case of Ackermann's lemma: $\exists p\, [\forall x\, (p(x) \lor
  F) \land \forall x\, (G \lor \lnot p(x))] \equiv \forall x\, (F \lor G)$.
Ackermann's works on elimination include a variation of unskolemization for
predicates that is shown here as \eref{aqs} \cite{ackermann:35:arity}. Applied
from right to left it reduces the arity of predicates, ideally leading to
relational monadic form, which then makes Behmann's method applicable. Some
modern applications of this approach are discussed in \cite{cw-relmon}.

If the elimination result is generalized to a formula of classical fixpoint
logic, then Ackermann's lemma can be generalized by replacing condition $p
\notin \predplain{G}$ with ${-}p \notin \pred{G}$, i.e., allowing in $G$
occurrences of $p$ provided these are positive. This generalization of
Ackermann's lemma is due to Nonnengart and Sza{\l}as
\cite{nonnengart:szalas:fixpoint:1998}. In classical fixpoint logic atomic
formulas can be fixpoint formulas $\gfp{p}{\xs}{G}(\tts)$ and
$\lfp{p}{\xs}{G}(\tts)$, for \name{greatest} and \name{least fixpoint}. Our
notation $F\{p \mapsto \lambda \xs . G\}$ for substitution of a predicate $p$
by a formula $\lambda \xs . G$ is extended to substitution of a predicate $p$
by a fixpoint formula $\gfp{p}{\xs}{G}$ or $\lfp{p}{\xs}{G}$, where operators
$\GFP, \LFP$ bind the variables $\xs$ in the same way as~$\lambda$. The
expressions $\gfp{p}{\xs}{G}(\tts)$ and $\lfp{p}{\xs}{G}(\tts)$ denote
$\gfp{p}{\xs}{G}$ and $\lfp{p}{\xs}{G}$, respectively, applied to the
tuple~$\tts$ of terms. For example, $\fp(\fa)\{\fp \mapsto \gfp{\fp}{x}{G}\} =
\gfp{\fp}{x}{G}(\fa)$. The semantics of fixpoint formulas can be specified as
follows: $\gfp{p}{\xs}{G}(\ts)$ is true iff $\gfp{p}{\xs}{G}\{p \mapsto
r\}(\ts)$ is true, where $r$ is the \emph{greatest} (w.r.t. $\subseteq$)
relation satisfying $\forall \xs\, (r(\xs) \equiv G\{p \mapsto r\})$.
Analogously, $\lfp{p}{\xs}{G}(\ts)$ is true iff $\lfp{p}{\xs}{G}\{p \mapsto
r\}(\ts)$ is true, where $r$ is the \emph{smallest} (w.r.t. $\subseteq$)
relation satisfying $\forall \xs\, (r(\xs) \equiv G\{p \mapsto r\})$.

\begin{theorem}[Fixpoint Generalization of Ackermann's Lemma \cite{nonnengart:szalas:fixpoint:1998}]
  \label{thm-ack-fix-lemma}

  Let $F$ be a formula of first-order logic with equality, let $p$ be a
  predicate and let $G$ be a formula such that ${-}p \notin \pred{G}$ and such
  that no free variables of $G$ are bound by a quantifier in $F$. Then

\subtheorem{thm-ack-fix-pos}
If ${-}p \notin \pred{F}$, then
$\exists p\,
[\forall \xs\, (p(\xs) \imp G) \land F]\; \equiv\; F\{p \mapsto \gfp{p}{\xs}{G}\}$.

\subtheorem{thm-ack-fix-neg}
If ${+}p \notin \pred{F}$, then
$\exists p\,
[\forall \xs\, (G \imp p(\xs)) \land F]\; \equiv\; F\{p \mapsto \lfp{p}{\xs}{G}\}$.
\end{theorem}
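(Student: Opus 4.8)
The plan is to prove the first equivalence directly from the Knaster--Tarski fixpoint theorem and then obtain the second by an order-theoretic dual argument. The one non-routine ingredient is a \emph{monotonicity lemma}: for every formula $H$ with ${-}p \notin \pred{H}$, every structure $M$ with a variable assignment, and all relations $r \subseteq r'$ of the arity of $p$, if $H$ holds in $M$ with $p$ interpreted as $r$, then it also holds with $p$ interpreted as $r'$. This follows by structural induction after putting $H$ into negation normal form, the hypothesis ${-}p \notin \pred{H}$ ensuring that $p$ occurs only in positive literals. Two instances of this lemma drive the argument: applied to $G$, it shows that the operator $\Gamma_M$ on relations that sends $r$ to the extension of $G$ in $M$ under $p \mapsto r$ is monotone, so by Knaster--Tarski it has a greatest fixpoint $\nu\Gamma_M$, equal to the union of all post-fixpoints (relations $r$ with $r \subseteq \Gamma_M(r)$), which by the semantics of $\GFP$ recalled above is exactly the interpretation of $\gfp{p}{\xs}{G}$ in $M$; applied to $F$, it will let us replace a witness by a larger one. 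The side condition that no free variable of $G$ is bound in $F$ is precisely what makes $F\{p \mapsto \gfp{p}{\xs}{G}\}$ behave, semantically, as $F$ evaluated with $p$ interpreted as $\nu\Gamma_M$, so that we may argue about structures freely.

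For the first equivalence, note that an interpretation $r$ of $p$ makes $\forall \xs\,(p(\xs) \imp G)$ true in $M$ precisely when $r \subseteq \Gamma_M(r)$, i.e.\ when $r$ is a post-fixpoint. Left to right: given a model $M$ of $\exists p\,[\forall \xs\,(p(\xs) \imp G) \land F]$, choose a witness $r$; it is a post-fixpoint, hence $r \subseteq \nu\Gamma_M$, and since ${-}p \notin \pred{F}$ the monotonicity lemma applied to $F$ gives $M \models F$ under $p \mapsto \nu\Gamma_M$, that is $M \models F\{p \mapsto \gfp{p}{\xs}{G}\}$. Right to left: if $M \models F\{p \mapsto \gfp{p}{\xs}{G}\}$, take $r = \nu\Gamma_M$ as a witness for $\exists p$; being a fixpoint it is in particular a post-fixpoint, so $\forall \xs\,(p(\xs) \imp G)$ holds, and $F$ holds by assumption.

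The second equivalence is the dual and can be proved the same way, with post-fixpoints, greatest fixpoints and monotonicity of $F$ replaced throughout by pre-fixpoints ($\Gamma_M(r) \subseteq r$), least fixpoints and antitonicity of $F$: here $r$ satisfies $\forall \xs\,(G \imp p(\xs))$ iff $\Gamma_M(r) \subseteq r$, the least pre-fixpoint of $\Gamma_M$ is $\mu\Gamma_M$ and coincides with the interpretation of $\lfp{p}{\xs}{G}$, and ${+}p \notin \pred{F}$ makes $F$ antitone in $p$. Alternatively one may reduce the second equivalence to the first through the substitution $p \mapsto \lnot q$ with $q$ fresh, which turns $\forall \xs\,(G \imp p(\xs))$ into $\forall \xs\,(q(\xs) \imp \lnot G\{p \mapsto \lnot q\})$ and flips the polarity hypotheses into ${-}q \notin \pred{\lnot G\{p \mapsto \lnot q\}}$ and ${-}q \notin \pred{F\{p \mapsto \lnot q\}}$; after applying the first equivalence and substituting back, it remains to identify the greatest fixpoint of the arising De~Morgan dual operator with the complement of $\mu\Gamma_M$, via the standard identity $\mu\Gamma = \overline{\nu\,\overline{\Gamma}}$.

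I expect the real work to be neither direction of the equivalences --- each is a two-line argument once the framework is in place --- but the careful statement of the monotonicity lemma and the semantic bookkeeping around it: one must check that $\Gamma_M$ is genuinely monotone (this is where ${-}p \notin \pred{G}$ is used) and that substituting $\gfp{p}{\xs}{G}$ for $p$ in $F$ really corresponds to interpreting $p$ as $\nu\Gamma_M$ without variable capture (this is where the condition on bound variables is used). With those two points pinned down, the theorem follows as sketched.
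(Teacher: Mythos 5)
Your proposal is correct. Note that the paper does not prove this theorem itself --- it is stated with a citation to Nonnengart and Sza{\l}as --- so there is no in-paper proof to compare against; your argument is the standard one for this result: the polarity hypotheses give monotonicity of the induced operators, Knaster--Tarski identifies the interpretation of $\gfp{p}{\xs}{G}$ (resp.\ $\lfp{p}{\xs}{G}$) with the greatest post-fixpoint (resp.\ least pre-fixpoint), and the no-capture side condition is exactly what lets $F\{p \mapsto \gfp{p}{\xs}{G}\}$ be read as $F$ evaluated with $p$ interpreted as that fixpoint, uniformly across all occurrences. Both your direct dual treatment of the second part and the alternative reduction via $p \mapsto \lnot q$ together with the identity $\mu\Gamma = \overline{\nu\,\overline{\Gamma}}$ are sound, and the two semantic points you flag at the end (monotonicity of the operator from ${-}p \notin \pred{G}$, and capture-freeness of the substitution into $F$) are indeed the only places where the hypotheses are genuinely needed.
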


DLS* \cite{dls:star} is an extension of DLS that takes this fixpoint
generalization of Ackermann's lemma into account.
Ackermann's lemma (\eref{p:ackpos}, \eref{p:ackneg}) as well as its fixpoint
generalization can be generalized by considering instead of the polarity of
$p$ in $F$ that $p$ is \emph{monotone} (\emph{down-monotone}) in $F$ and, for
the fixpoint version, that $p$ is monotone in $G$
\cite{gabbay:szalas:monotone,soqe:book:2008}.

\subsection{Predicate Elimination with Resolution -- The SCAN Algorithm}

The \name{SCAN} algorithm for second-order quantifier elimination was
introduced in 1992 by Gabbay and Ohlbach \cite{scan}. Its name is an acronym
of \name{``Synthesizing Correspondence Axioms for Normal logics''}. Actually,
the method was a re-discovery of a result by Ackermann from 1935
\cite{ackermann:35}. Our exposition is oriented at the monograph by Gabbay,
Schmidt and Sza{\l}as \cite{soqe:book:2008}.
The general idea of SCAN is to generate sufficiently many logical consequences
of the given second-order formula such that all further consequences that can
be generated from consequences with predicates to be eliminated are
\emph{redundant}. The set of consequences without predicates to be eliminated
is then equivalent to the given second-order formula.

\begin{table}
  \caption{The constraint resolution calculus \crc.}
  \label{tab-crc}

  \vspace{-4pt}
  \hspace*{1.5em}
  \begin{tabular}[t]{l@{\hspace{2em}}l}
    \textbf{Deduction} &
    $\begin{array}{c}
      N\\\midrule
      N \cup \{C\}
    \end{array}$\\[2.5ex]
    \multicolumn{2}{p{4.5cm}}{where $C$ is a \cresolvent or
        a \cfactor of premises in $N$}
  \end{tabular}
  \hspace{3em}
  \begin{tabular}[t]{l@{\hspace{2em}}l}
    \textbf{Purification} &
    $\begin{array}{c}
      N \cup \{C \lor (\lnot)p(\sss)\}\\\midrule
      N
    \end{array}$\\[2.5ex]
    \multicolumn{2}{p{6.7cm}}{if $p$ is a non-base predicate and no
        non-redundant inferences with respect to the particular literal
        $(\lnot)p(\sss)$ in the premise $\{C \lor (\lnot)p(\sss)\}$ and the
        rest of the clauses in $N$ can be performed}
  \end{tabular}
  \vspace{-4pt}
\end{table}

\begin{table}
  \caption{The inference rules of \crc.}
  \label{tab-inf-crc}

  \vspace{-4pt}
  \hspace*{1.5em}
    \begin{tabular}[t]{ll}
    \textbf{\crc-Resolution} &
    $\begin{array}{cc}
      C \lor p(\sss) & D \lor \lnot p(\ts)\\\midrule
      \multicolumn{2}{c}{C \lor D \lor \sss \neq \ts}
    \end{array}$\\[2.5ex]
    \multicolumn{2}{p{6cm}}{provided $p$ is a non-base predicate, the
        two premises have no variables in common and are distinct clauses}
    \end{tabular}
    \hspace{\fill}
    \begin{tabular}[t]{ll}
    \textbf{(Positive) \crc-Factoring} &
    $\begin{array}{c}
      C \lor p(\sss) \lor p(\ts)\\\midrule
      C \lor p(\sss) \lor \sss \neq \ts
    \end{array}$\\[2.5ex]
    \multicolumn{2}{p{6cm}}{provided $p$ is a non-base predicate}
    \end{tabular}
  \vspace{-8pt}    
\end{table}

\begin{algo}[SCAN]
  \

  \algoinput A second-order formula $\exists p_1 \ldots \exists p_n\, F$,
  where $p_1, \ldots, p_n$ are predicates and $F$ is a formula of first-order
  logic with equality.

  \algooutput The algorithm does not terminate for all inputs. If it
  terminates, the output is a formula of first-order logic with equality that
  is equivalent to the input formula or $\FAIL$, indicating failure of the
  algorithm.

  \algomethod The algorithm proceeds in three stages.
  \begin{enumerate}[I.]
    \item \textbf{Clausification.} The usual CNF conversion for first-order
      formulas, including Skolemization, is applied to the first-order
      component $F$ of the input. Its result is a set $N$ of clauses such
      that $\exists \sffs \forall \xs\, N \equiv F$, where $\sffs$ are the
      Skolem functions introduced at the conversion and $\xs$ are the free
      variables of $N$.

    \item \label{item-scan-resol} \textbf{Constraint Resolution.} This stage
      operates on the clause set $N$ obtained in the previous stage.
      Predicates $p_1,\ldots,p_n$ are distinguished as \name{non-base}
      predicates. The calculus \defname{$\crc$} (Table~\ref{tab-crc}) is
      applied to $N$ to generate a set $N_{\infty}$ of clauses such that none
      of the non-base predicates occurs in $N_{\infty}$ and $N_{\infty}$ is
      equivalent to $\exists p_1 \ldots \exists p_n\, \exists \sffs \forall
      \xs\, N$. If this stage terminates, the obtained $N_{\infty}$ is finite.

    \item \label{item-scan-unskolem} \textbf{Unskolemization.} Apply
      unskolemization, for example with McCune's algorithm
      \cite{mccune:unskolemizing:1988}, to $N_{\infty}$ to eliminate the
      Skolem functions $\sffs$ that were introduced at clausification. If this
      succeeds, return the result of unskolemization, else, exit with $\FAIL$.
      \lipicsEnd
  \end{enumerate}
\end{algo}

SCAN may fail either in stage~\ref{item-scan-resol} due to non-termination of
\cresolution or in stage~\ref{item-scan-unskolem} due to failure of
unskolemization.
The \name{deduction} rule of calculus~$\crc$ computes new clauses using the
inference rules \name{\cresolution} and \name{\cfactoring}
(Table~\ref{tab-inf-crc}). As usual for resolution, premises are assumed to be
normalized by variable renaming such that they have no shared variables. The
role of unification in resolution is in constraint resolution taken by adding
negated equality literals, ''constraints'', to the conclusion.

Theorem~\ref{thm-crc-correctness} below characterizes correctness for
constraint resolution with the \crc calculus, which includes the properties
required in the \name{Constraint Resolution} stage of SCAN. The theorem
statement uses the terminology from the framework for saturation-based proving
by Bachmair and Ganzinger \cite{resolution:handbook:2001}.
A central notion is the property \name{redundant}, which can hold for
inferences and for clauses. We may assume the so-called \name{trivial
  redundancy criterion}, where an inference is \defname{redundant} in a clause
set $N$ if its conclusion is in $N$, and a clause is never considered as
\name{redundant}.
Optionally, to take account of equivalence-preserving deletion and reduction
rules, e.g., deletion of tautological or subsumed clauses, which can be freely
added to \crc without compromising correctness, other redundancy criteria can be
employed, where
also \emph{clauses} may be classified as redundant
\cite{soqe:book:2008,resolution:handbook:2001}.
A clause set $N$ is \defname{\csaturated up to redundancy} if all inferences
with non-redundant premises from $N$ are redundant in $N$. A clause set~$N$ is
\defname{\cclosed} if $N$ is \csaturated up to redundancy and the
\name{purification} rule is not applicable. A \defname{\cderivation} is a
(possibly infinite) sequence $N_0, N_1, \ldots$ of clause sets such that for
every~$i \geq 0$, $N_{i+1}$ is obtained from $N$ by the application of a rule
in \crc.
The \defname{limit} of a \cderivation is the set $N_{\infty} \eqdef \bigcup_{j
  \geq 0} \bigcap_{k \geq j} N_k$ of persisting clauses. A \cderivation $N (=
N_0), N_1, \ldots$ from $N$ is \defname{fair} iff the conclusion of every
non-redundant inference from non-redundant premises in $N_{\infty}$ is in some
$N_j$. Intuitively, fairness means that no non-redundant inferences are
delayed indefinitely. We are now ready to state the correctness theorem \crc,
which underlies SCAN.

\begin{theorem}[Correctness of Constraint Resolution with \crc
    \cite{scan,soqe:book:2008}]
  \label{thm-crc-correctness}
  Let $N$ be a set of clauses and suppose $p_1,\ldots,p_k$ are distinguished
  as non-base predicates in $N$. Let $N (= N_0), N_1, \ldots$ be a fair
  \cderivation from $N$ with limit $N_{\infty}$. Then (1)~$N_{\infty}$ is
  \cclosed; (2)~None of the non-base predicates occurs in $N_{\infty}$;
  (3)~$N_{\infty}$ is equivalent to $\exists p_1 \ldots \exists p_n\, N$.
\end{theorem}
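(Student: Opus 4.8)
For \textbf{(1)} and \textbf{(2)} I would lean on the Bachmair--Ganzinger bookkeeping together with one elementary observation: a clause without non-base predicates, once it occurs in some $N_j$, \emph{persists}, because the only deletion rule is purification and purification removes only clauses containing a non-base literal. Fixing the trivial redundancy criterion (an inference is redundant in a clause set iff its conclusion belongs to it) and reading \emph{fair} so that it also forces a persistently purifiable clause to be eventually purified, fairness gives that every non-redundant \cresolution\ or \cfactoring\ inference whose premises all persist has its conclusion in some $N_j$; one checks that this conclusion, or a clause making the inference redundant, persists, so $N_\infty$ is \csaturated\ up to redundancy, and the eager-purification reading makes the purification rule inapplicable to $N_\infty$. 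Hence $N_\infty$ is \cclosed\ — claim \textbf{(1)}. For \textbf{(2)}, if some clause of $N_\infty$ still contained a literal $(\lnot)p(\sss)$ with $p$ non-base, saturation would make every \crc-inference on that occurrence redundant, which is exactly the side condition under which the purification rule applies, contradicting \cclosed-ness; so no $p_i$ occurs in $N_\infty$.

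For the soundness half of \textbf{(3)}, namely $\exists p_1 \ldots \exists p_n\, N \entails N_\infty$, I would show that every rule application preserves equivalence modulo the prefix $\exists p_1 \ldots \exists p_n$ and then pass to the limit. The \textbf{Deduction} rule only adds a \cresolvent\ or a \cfactor, and each is a first-order consequence of its premises: for \crc-resolution, $C \lor p(\sss)$ and $D \lor \lnot p(\ts)$ entail $C \lor D \lor \sss \neq \ts$, because any model and assignment validating $\sss = \ts$ makes the two $p$-literals complementary, and dually for \crc-factoring; hence $N_{i+1} \equiv N_i$ and a fortiori $\exists p_1 \ldots \exists p_n\, N_{i+1} \equiv \exists p_1 \ldots \exists p_n\, N_i$. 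The delicate step is the \textbf{Purification} rule, which I would underpin with a lemma: if a clause $E = C \lor L$ with $L = (\lnot)p(\sss)$, $p$ non-base, admits only redundant \crc-inferences on the occurrence $L$ against the rest of the clause set, then deleting $E$ preserves $\exists p_1 \ldots \exists p_n$-equivalence; the nontrivial inclusion is a localized instance of the resolution model construction — given a model of the set without $E$, one greedily adjusts the interpretation of $p$ so as to satisfy $E$ as well, saturation on $L$ guaranteeing that no other clause is falsified. Combining, $\exists p_1 \ldots \exists p_n\, N_i \equiv \exists p_1 \ldots \exists p_n\, N$ for all $i$; each clause of $N_\infty$ lies in some $N_j$ and, by \textbf{(2)}, contains no non-base predicate, hence is entailed by $\exists p_1 \ldots \exists p_n\, N_j \equiv \exists p_1 \ldots \exists p_n\, N$, which gives $\exists p_1 \ldots \exists p_n\, N \entails N_\infty$.

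For the completeness half, $N_\infty \entails \exists p_1 \ldots \exists p_n\, N$, I would argue by refutational completeness of ordinary resolution together with a lifting lemma for constraint resolution. If some model $I$ of $N_\infty$ (which, by \textbf{(2)}, interprets only base symbols) had no expansion interpreting $p_1, \ldots, p_n$ that satisfies $N$, then $N$ together with the atomic diagram of $I$ is unsatisfiable; from a resolution refutation one extracts a base clause $E$ with $N \entails E$ that is false in $I$, and the lifting lemma — \crc-resolution with its $\sss \neq \ts$ constraints simulates ordinary resolution modulo unification, the constraints standing in for the postponed most general unifiers — turns the $N$-part of the refutation into a \crc-derivation of such an $E$ from $N$; since $E$ has no non-base predicate it persists and hence lies in $N_\infty$, contradicting $I \models N_\infty$. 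Together with the previous paragraph this establishes \textbf{(3)}.

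The main obstacle is the \textbf{Purification Lemma} and its interface with the $\neq$-constraints. Proving that deleting a clause whose distinguished literal has only redundant inferences preserves $\exists p$-equivalence is precisely where completeness of resolution on the eliminated predicate is genuinely used, and it has to be carried out modulo the constraints, so that ``\crc-saturated on the $p$-occurrences'' really matches ``ordinary-resolution saturated on $p$ up to unification'' — this includes the interplay with \crc-factoring and the case of predicates of arity greater than zero, where the propositional ``set $p$ to $\false$ or $\true$'' intuition no longer applies verbatim and one must instead build the interpretation of $p$ pointwise. A secondary, more routine difficulty is pinning down the exact notions of fairness and of redundancy so that the limit of a possibly infinite \cderivation\ is genuinely \cclosed\ and without non-base predicates.
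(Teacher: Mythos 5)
First, a point of reference: the paper states Theorem~\ref{thm-crc-correctness} as an imported result, attributed to \cite{scan,soqe:book:2008}, and contains no proof of its own; so there is no in-paper argument to compare against, and your sketch has to be measured against the standard proof in those sources. Your decomposition is indeed the standard one: soundness of the \textbf{Deduction} rule (the \cresolution{} and \cfactoring{} conclusions are first-order consequences of their premises), a ``purification lemma'' asserting that deleting a clause whose distinguished non-base literal admits only redundant inferences preserves equivalence modulo $\exists p_1 \ldots \exists p_n$, and Bachmair--Ganzinger-style fairness bookkeeping for (1) and (2); you are also right that fairness must be read as covering purification steps, since the paper's definition of fairness speaks only of inferences and would otherwise not force (2). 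Note, though, that the direction $\exists p_1 \ldots \exists p_n\, N \entails N_\infty$ is cheaper than you make it: every clause ever present is entailed by $N$ and every persistent clause is base-only, so this half needs neither the purification lemma nor step-wise equivalence.

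There remain two genuine gaps. (i) The purification lemma, which you explicitly defer, is essentially the whole mathematical content of the theorem: the ``greedy adjustment of $p$'' must cope with the deleted clause possibly containing further non-base literals (the rule requires saturation only with respect to the one literal $(\lnot)p(\sss)$), with other clauses still containing $p$, and with the constraints $\sss \neq \ts$ standing in for unifiers; until this is carried out, nothing is proved. (ii) Your argument for $N_\infty \entails \exists p_1 \ldots \exists p_n\, N$ does not go through as sketched. The atomic-diagram step is salvageable, but only because $N$ is a set of clauses, hence universal and preserved under the substructure carrying $I$ -- this needs to be said. More seriously, a resolution refutation of $N \cup \mathrm{Diag}(I)$ resolves upon base predicates as well (against diagram literals and among $N$-clauses), so ``the $N$-part lifts to a \crc-derivation'' requires a nontrivial proof transformation showing that all resolutions upon the non-base predicates can be pushed to the front, and on top of that the claim that the resulting base clause ``persists and hence lies in $N_\infty$'' needs that the simulated inferences were actually performed in the \emph{given} fair \cderivation: fairness covers only inferences whose premises persist, whereas the intermediate clauses of your simulation contain non-base literals and may be purified before the needed inferences from them are drawn, and the purification side condition guarantees saturation only with respect to the purified literal of the deleted clause. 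Bridging this (and the passage to a possibly infinite limit, where step-wise equivalence of the $N_i$ does not by itself say anything about $N_\infty$) is exactly the persistence/limit machinery of the proof in \cite{soqe:book:2008}, and it is missing from the proposal.
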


The set $N_{\infty}$ may be infinite. If \cresolution terminates, then
$N_{\infty}$ is finite and can be passed to the unskolemization stage of SCAN.
Ackermann \cite{ackermann:35} actually considered infinite sets $N_{\infty}$
as elimination results.
Like the output of DLS, the output of SCAN may be with equality, also in cases
where the input formula is without equality.
The following simple example illustrates the three stages of SCAN.

\begin{example} Consider the second-order formula
  $\exists q\, [(\fp(\fa) \imp q(\fa)) \land (q(\fb) \imp \exists x\, \fr(x))]$.
  Clausification of its first order component yields the following clauses,
  where $s$ is a Skolem constant.

  \smallskip
  
  \begin{tabular}{rL{10em}l}
    $C_1$ & $\lnot \fp(\fa) \lor q(\fa)$ & Input clause\\
    $C_2$ & $\lnot q(\fb) \lor \fr(s)$ & Input clause
  \end{tabular}
  \smallskip
  
  \noindent
  We now perform constraint resolution with the non-base predicate $q$. A
  \crc-resolution deduction step adds the following clause.

  \smallskip
  \begin{tabular}{rL{10em}l}
    $C_3$ & $\lnot \fp(\fa) \lor \fr(s) \lor \fa \neq \fb$ & \crc-resolvent of
    $C_1$ and $C_2$
  \end{tabular}
  \smallskip
  
  \noindent
  A purification step then deletes $C_1$, and a second purification step
  deletes $C_2$. We thus leave the constraint resolution stage with the
  singleton set containing $C_3$ as result $N_{\infty}$. Unskolemization then
  gives us the first-order formula $\exists x\, (\lnot \fp(\fa) \lor \fr(x)
  \lor \fa \neq \fb)$ as final result of the second-order quantifier
  elimination by SCAN. This formula may be rearranged as $(\fp(\fa) \land
  \fa=\fb) \imp \exists x\, \fr(x)$. \lipicsEnd
\end{example}

Refinements and variations of SCAN are discussed in
\cite{scan:engel,scan-system-paper,cw-skp,scan:complete:2004,soqe:book:2008}.
If the \crc calculus is equipped with deletion of subsumed clauses, then SCAN
is complete for the case where the quantified predicates are nullary
\cite{soqe:book:2008}. An adaptation of SCAN for modal logics is complete for
Sahlqvist formulas \cite{scan:complete:2004}.

\section*{Acknowledgments}
\addcontentsline{toc}{section}{Acknowledgments}

The author thanks Wolfgang Bibel, Patrick Koopmann and Philipp Rümmer for
their valuable comments and suggestions. Funded by the Deutsche
Forschungsgemeinschaft (DFG, German Research Foundation) --
Project-ID~457292495.

\bibliography{biblioautomated,taci}

\end{document}